\newtheorem{theorem}{Theorem}[section]
\newtheorem{corollary}{Corollary}[theorem]
\newtheorem{lemma}[theorem]{Lemma}
\newcommand*{\rom}[1]{\expandafter\@slowromancap\romannumeral #1@}
\def\L{\mathcal{L}} 
\def\C{\mathcal{C}} 
\def\P{\mathcal{P}}
\def\Q{\mathcal{Q}}
\def\F{\mathcal{F}} 
\def\I{\mathcal{I}}
\def\Pi{\bm {P}}
\def\Ai{\bm{A}}
\def\Mi{\bm{M}}
\begin{document}

\begin{frontmatter}

\title{Faber Approximation to the Mori-Zwanzig Equation}

\author[ucsc]{Yuanran Zhu}
\author[ucsc]{Daniele Venturi\corref{correspondingAuthor}}
\cortext[correspondingAuthor]{Corresponding author}
\ead{venturi@ucsc.edu}

\address[ucsc]{Department of Applied Mathematics and Statistics\\ University of California, Santa Cruz}
\journal{ArXiv}

\begin{abstract}
We develop a new effective approximation of the Mori-Zwanzig equation based on operator series expansions of the orthogonal dynamics propagator.  In particular, we study the Faber series, which yields asymptotically optimal approximations converging at least $R$-superlinearly with the polynomial order for linear dynamical systems. We provide a through theoretical analysis of the new method 
and present numerical applications to random wave propagation and harmonic chains of oscillators interacting on the Bethe lattice and on graphs with arbitrary topology.
\end{abstract}

\end{frontmatter}

\section{Introduction}

The Mori-Zwanzig (MZ) formulation is a technique from irreversible 
statistical mechanics that allows us to develop formally exact 
evolution equations for quantities of interest (phase space functions) 
in nonlinear dynamical systems. One of the main advantages 
of developing such exact equations is that they provide a 
theoretical starting point to avoid integrating 
the full dynamical system and solve directly for the 
quantities of interest, thus reducing the computational 
cost significantly.
As an example, consider a large system of interacting particles,
and suppose we are interested in studying the motion of one 
specific particle. By applying the MZ formulation to the 
equations of motion of the full particle 
system, it is possible extract a formally exact 
generalized Langevin equation (MZ equation) governing 
the position and the momentum of the particle of interest.
This is at the basis of microscopic physical theories of 
Brownian motion \cite{VanKampenOppenheim,Chaturvedi}.
Computing the solution to the MZ equation 
is a very challenging task that relies on 
approximations and appropriate numerical schemes. 
One of the main difficulties is the approximation of the 
memory integral (convolution term), which 
encodes the effects of the so-called orthogonal 
dynamics in the observable of interest. 
The orthogonal dynamics is essentially a 
high-dimensional flow that satisfies a 
complex integro-differential equation. 
In statistical systems far from equilibrium, such flow 
has the same order of magnitude and 
dynamical properties as the observable 
of interest, i.e., there is no scale separation between the 
observable of interest and the orthgonal dynamics. 
In these cases, the computation of the MZ memory 
can be addressed only by problem-class-dependent 
approximations.
The first effective technique to approximate 
the MZ memory integral was developed by H. Mori 
in \cite{mori1965continued}. The method relies on 
on continued fraction expansions, and it can be conveniently 
formulated in terms of recurrence relations
\cite{Snook,Karasudani,lee1982solutions,lee1982eq,florencio1985exact}. 
The continued fraction expansion method of Mori
made it possible to compute the exact solution 
to important prototype problems in statistical mechanics, 
such as the dynamics of the auto-correlation function of a tagged 
oscillator in an harmonic chain \cite{espanol1996dissipative,kim2000dynamics}.
Other effective approaches to approximate 
the MZ memory integral rely on perturbation methods 
\cite{watts1977perturbation,singwi1968theory,venturi2014convolutionless},  
mode coupling techniques, 
\cite{alder1970decay,sjogren1979kinetic,sjogren1980numerical}, 
or functional approximation methods \cite{Fox,harp1970time, Moss1}. 
In a parallel effort, the applied mathematics 
community has, in recent years, attempted to derive general 
easy-to-compute representations of the MZ memory 
integral \cite{VenturiBook,Parish,Gouasmi}. 
In particular, various approximations 
such as the $t$-model \cite{Chorin,Chorin1,Stinis,Chertock2008Modified}, 
hierarchical perturbation methods 
\cite{stinis2015renormalized,Yuan1,venturi2014convolutionless}, 
and data-driven methods \cite{Lei} were proposed 
to address approximation of the MZ memory 
integral in situations where there 
is no clear separation of scales between 
the resolved and the unresolved dynamics. 

In this paper, we study a new approximation of the MZ 
equation based on {\em global} operator series 
expansions of the orthogonal dynamics propagator. 
In particular, we study the Faber series, 
which yields asymptotically optimal approximations
converging at least $R$-superlinearly with 
the polynomial order.
The advantages of expanding the orthogonal 
dynamics propagator in terms of globally 
defined operator series are similar to those we obtain when we approximate 
a smooth function in terms of orthogonal  
polynomials rather than Taylor series \cite{Hesthaven}. 
As we will see,  the proposed MZ memory approximation 
method based on global operator series 
outperform in terms of accuracy and computational 
efficiency the hierarchical memory approximation  
techniques discussed in \cite{Yuan1,stinis2007higher}, 
which are based on Taylor-type expansions. 

This paper is organized as follows. In Section \ref{sec:review}, 
we briefly review the MZ formulation, and discuss 
common choices of projection operators. 
In Section \ref{sec:New_expansion} we develop new 
series expansions of the MZ memory integral based on 
operator series of the orthogonal dynamics propagator.  
We also develop exact MZ equations for 
the mean and the auto-correlation function of an observable 
of interest, and determine their analytical solution 
through Laplace transforms. In Section \ref{sec:convergence} we 
perform a thorough convergence analysis of the memory 
approximation methods we propose in this paper.
In Section \ref{sec:application} we demonstrate 
the accuracy and effectiveness of the Faber 
approximation of the Mori-Zwazig equation. 
Specifically, we study two-dimensional random waves 
in an annulus, and the velocity auto-correlation 
function of a tagged oscillator in harmonic chains interacting 
on the Bethe lattice and on graphs with arbitrary topology.

\section{The Mori-Zwanzig Formulation}
\label{sec:review}
Consider the following nonlinear dynamical system evolving on a smooth manifold 
$\Gamma\subseteq \mathbb{R}^N$
\begin{equation}
\frac{d\bm x}{dt} = \bm F(\bm x),\qquad \bm x(0)=\bm x_0.
\label{eqn:nonautonODE}%
\end{equation}
For simplicity, here we assume that $\Gamma=\mathbb{R}^N$.  
The dynamics of any scalar-valued phase space function $u:\Gamma\to 
\mathbb{R}$ (quantity of interest) can be expressed in terms of a 
semi-group of operators acting on the space of observables , i.e., 
\begin{equation}
 u(\bm x(t)) =e^{(t-s)\L} u (\bm x(s)) ,\quad \textrm{where}\quad \L u(\bm x)= \bm F(\bm x)\cdot \nabla u(\bm x).
\label{Koopman}
\end{equation}
The operator $e^{(t-s)\L}$ is known as  Koopman 
operator \cite{Koopman1931}.
The subspace of functions of $u(\bm x)$ 
can be described conveniently by means of a projection 
operator $\P$, which selects from an arbitrary 
function $f(\bm x)$ the part $\P f(\bm x)$ 
which depends on $\bm x$ only through $u(\bm x)$.
The nature, mathematical properties and connections 
between $\P$ and the observable $u$ are discussed 
in detail in \cite{Dominy2017}.  
For now, it suffices to assume  that $\P$ is 
a bounded linear operator, and that $\P^{2} = \P$.
Also, we denote by  $\Q=\I-\P$ the complementary projection, 
$\I$ being the identity operator.
The MZ formalism describes the evolution of 
observables initially in the image of $\P$. 
Because the evolution of observables 
is governed by the semi-group $e^{t\L}$, 
we seek an evolution equation for $e^{t\L}\P$. By using 
the well-known Dyson identity 
\begin{align}
e^{t\L}=e^{t\Q\L}+
\int_0^t e^{s\L}\P\L e^{(t-s)\Q\L}ds,
\label{Dyson}
\end{align}
we obtain 
\begin{align}
\frac{d}{dt} e^{t\L}\P = e^{t\L}\P\L \P + 
e^{t\Q\L}\Q\L\P+ \int_0^t e^{s\L}\P\L e^{(t-s)\Q\L}\Q\L\P ds.
\label{MZKoop}
\end{align}
By applying this equation to an element $u_0=u(\bm x(0))$ in the 
image of $\P$, we obtain the well-known MZ equation
\begin{align}
\frac{\partial}{\partial t}e^{t\mathcal{L}}u_{0}
&=e^{t\mathcal{L}}\mathcal{PL}u_{0}
+e^{t\mathcal{QL}}\mathcal{QL}u_{0}+\int_0^te^{s\mathcal{L}}\mathcal{PL}
e^{(t-s)\mathcal{QL}}\mathcal{QL}u_{0}ds.
\label{equ:MZ_total}
\end{align}
We emphasize that equation \eqref{equ:MZ_total} 
is completely equivalent to \eqref{Koopman}. 
Acting on the left with $\P$, yields the evolution 
equation for projected dynamics
\begin{align}\label{reduced order equation}
\frac{\partial}{\partial t}\mathcal{P}e^{t\mathcal{L}}u_{0}
=\mathcal{P}e^{t\mathcal{L}}\mathcal{PL}u_{0}
+\int_0^t\mathcal{P}e^{s\mathcal{L}}\mathcal{PL}
e^{(t-s)\mathcal{QL}}\mathcal{QL}u_{0}ds.
\end{align}
This equation may be interpreted as a mean field equation in 
the common situation where $\P$ is a conditional 
expectation. The two terms at the right 
hand side of \eqref{reduced order equation} are 
often called {\em streaming term} and 
{\em memory term}, respectively.

\subsection{Projection Operators}
\label{sec:proj}
The natural choice for the projection operator in 
the Mori-Zwanzig formulation is a 
conditional expectation \cite{Dominy2017}, i.e.,  a 
completely positive linear operator with suitable 
properties \cite{Umegaki}. Such conditional 
expectation can be rigorously defined in the context 
of operator algebras and it can have different forms.
Hereafter, we discuss the most important cases. 

\subsubsection{Chorin's Projection} In a series of papers 
\cite{Chorin,Chorin1,chorin2000optimal}, A. J. Chorin and 
collaborators defined the following projection operator
\begin{align}
\big(\mathcal{P}u\big)(\hat {\bm x}_{0})=\frac{\displaystyle \int_{-\infty}^{+\infty} u(\hat {\bm x}(t;\hat{\bm x}_0,\tilde{\bm x}_0),\tilde {\bm x}(t;\hat{\bm x}_0,\tilde{\bm x}_0))\rho_0(\hat {\bm x}_0,\tilde {\bm x}_0)d\tilde{\bm x}_0}{\displaystyle \int_{-\infty}^{+\infty}\rho_0(\hat {\bm x}_0,\tilde {\bm x}_0)d\tilde{\bm x}_0},
\label{Chorin_projection}
\end{align} 
 which represents a conditional expectation in the sense of classical 
 probability theory. In equation \eqref{Chorin_projection}, 
 $\bm x(t;x_0)$ denotes the flow map 
 generated by \eqref{eqn:nonautonODE},  
which we can split into 
resolved $\hat {\bm x}(t;\hat{\bm x}_0,\tilde{\bm x}_0)$ and 
unresoved $\tilde{\bm x}(t;\hat{\bm x}_0,\tilde{\bm x}_0)$ maps,  
$u(\bm x)=u(\hat {\bm x},\tilde {\bm x})$ is the quantity of interest, 
and $\rho_0(\hat{\bm x}_0,\tilde{\bm x}_0)$ is the probability density 
function  of the initial state $\bm x_0$. 
Alternatively, one can replace $\rho_0$ with 
 the equilibrium distribution of the 
 system $\rho_{eq}(\hat{\bm x},\tilde{\bm x})$, 
 assuming it exists.
Clearly, if $\bm x_0$ is deterministic then 
$\rho_0(\hat {\bm x}_0,\tilde {\bm x}_0)$ is a  product of 
Dirac delta functions. On the other hand, if $\hat{\bm x}_0$ and 
$\tilde{\bm x}_0$ are statistically independent, i.e. 
$\rho_{0}(\hat{\bm x}_0,\tilde{\bm x}_0) = \hat{\rho}_{0}(\hat{\bm x}_0)\tilde{\rho}_{0}(\tilde{\bm x}_0)$, then the conditional 
expectation \eqref{Chorin_projection} simplifies to
\begin{align}
\big(\mathcal{P}u\big)(\hat {\bm x}_0)=\int_{-\infty}^{+\infty}
u(\hat {\bm x}(t;\hat {\bm x}_0,\tilde{\bm x}_0),\tilde {\bm x}(t;\hat {\bm x}_0,\tilde{\bm x}_0))\tilde{\rho}_0(\tilde {\bm x}_0)d\tilde{\bm x}_0.
\label{8}
\end{align} 
In the special case where $u(\hat {\bm x},\tilde {\bm x})= 
\hat {\bm x}(t;\hat {\bm x}_0,\tilde{\bm x}_0)$ we have 
 \begin{align}
\big(\mathcal{P} \hat{\bm x}\big)(\hat {\bm x}_0)=
\int_{-\infty}^{+\infty}\hat {\bm x}(t;\hat {\bm x}_0,\tilde{\bm x}_0)
\tilde{\rho}_0(\tilde {\bm x}_0)d\tilde{\bm x}_0, 
\label{9}
\end{align} 
i.e. the conditional expectation of the resolved 
variables $\hat{\bm x}(t)$ given the initial condition 
$\hat{\bm x}_0$.  
This means that an integration of \eqref{9} 
with respect to $\hat{\rho}_0(\hat{\bm x}_0)$ 
yields the mean of the resolved variables 
\begin{equation}
\left< \hat{\bm x}(t)\right>_{\rho_0} = \int_{-\infty }^\infty \big(\mathcal{P} \hat{\bm x}\big)(\hat {\bm x}_0)\hat{\rho}_0(\hat{\bm x}_0)d\hat{\bm x}_0 = \int_{-\infty }^\infty \hat{\bm x}(t,\bm x_{0})\rho_{0}(\bm x_{0})d\bm x_{0}.
\end{equation}
Obviously, if the resolved variables $\hat{\bm x}(t)$ 
evolve from a deterministic initial state $\hat{\bm x}_0$ then 
the conditional expectation \eqref{9} represents 
the  average of the reduced-order flow 
map $\hat{\bm x}(t;\hat{\bm x}_0,\tilde{\bm x}_0)$ 
with respect to the 
PDF of $\tilde{\bm x}_0$, i.e., 
\begin{equation}
\bm X_0(t;\hat{\bm x}_0)=\int_{-\infty}^{+\infty}\hat {\bm x}(t;\hat {\bm x}_0,\tilde{\bm x}_0)\tilde{\rho}_0(\tilde {\bm x}_0)d\tilde{\bm x}_0.
\label{conditonal mean path}
\end{equation} 
In this case, the MZ equation \eqref{reduced order equation} is an unclosed 
evolution equation (PDE) for the averaged flow map \eqref{conditonal mean path}.

\subsubsection{Mori's Projection} 
Another definition of projection operator widely used 
in statistical mechanics is Mori's projection \cite{zwanzig2001nonequilibrium}
\begin{align}
\P u = \sum_{i=1}^M \frac{\langle u,\phi_i\rangle_{eq}}
{\langle\phi_i,\phi_i\rangle_{eq}}\phi_i(u(\bm x)).
\label{Mori_projection}
\end{align}
Here $\{\phi_1,\phi_2,...\}$ is an orthogonal basis 
that spans the Hilbert space of observables, i.e., functions of 
$u(\bm x)$ (assuming that such space is indeed a Hilbert space). 
Orthogonality  of $\{\phi_j\}$ is with respect to the inner product
\begin{align}
\langle a,b\rangle_{eq}=\int \rho_{eq}(\bm x) a(\bm x) b(\bm x) d \bm x,
\label{innerproduct}
\end{align}
where $a(\bm x)$, $b(\bm x)$ are two arbitrary 
phase space functions, while 
$\rho_{eq}$ is the equilibrium distribution function 
of the system, assuming it exists. 
In the context of Hamiltonian statistical mechanics the phase variables 
are $\bm x=(\bm p,\bm q)$, where $\bm q$ are generalized 
coordinates while $\bm p$ are kinetic momenta. 
In this setting the natural choice 
for $\rho_{eq}$ is the canonical Gibbs distribution 
\begin{equation}
\rho_{eq}(\bm p,\bm q)=\frac{1}{Z} e^{-\beta H(\bm p,\bm q)},
\label{gibbs}
\end{equation}
where $H(\bm p,\bm q)$ denotes the Hamiltonian of the system, 
and $Z$ is the partition function.


\subsubsection{Berne's Projection} 
A simpler projection operator  
was proposed by Berne in \cite{Berne} 
(see also \cite{Snook}, p. 30). The standard form is 
\begin{equation}
\P(\cdot) = \frac{\langle u_0, (\cdot)\rangle_{eq}}
{\langle u_0,u_0\rangle_{eq}}u_0.
\label{BernProjection}
\end{equation}
This projection can be considered as a subcase 
of the Mori projection \eqref{Mori_projection}.
Note that by using Berne's projection we can easily 
represent the auto-correlation function the observable $u(\bm x(t))$ as 
\begin{equation}
C_u(t)=\frac{\langle u(t),u_0\rangle_{eq}}{\langle u_0,u_0\rangle_{eq}} = \frac{\langle \P u(\bm x(t)),u_0 \rangle_{eq}}{\langle u_0,u_0\rangle_{eq}}.
\label{CorrelationFunction1}
\end{equation}

\section{Approximation of the Mori-Zwanzig Memory Integral}
\label{sec:New_expansion}
In this section, we develop new approximations of the Mori-Zwanzig 
memory integral 
\begin{equation}
\int_0^t \P e^{s\L}\P\L e^{(t-s)\Q\L}\Q\L u_0 ds
\label{MZmemory}
\end{equation}
based on series expansions of the orthogonal dynamics 
propagator $e^{t\Q\L}$ in the form 
\begin{equation}
e^{t\Q\L} = \sum_{n=0}^\infty a_n(t)\Phi_n\left(\Q\L\right),
\label{generalS}
\end{equation}
where $\Phi_n$ are polynomial basis functions,  and $a_n(t)$ 
are temporal modes. Series expansions in 
the form \eqref{generalS} can be rigorously defined 
in the context of matrix theory \cite{moler1978nineteen,moler2003nineteen}, i.e., 
for operators $\Q\L$ between finite-dimensional vector spaces. 
The question of whether it is possible to extend such 
expansions to the infinite-dimensional case, i.e., for operators 
acting between infinite-dimensional Hilbert or Banach spaces, 
is not a trivial \cite{dautray2012mathematical}. 
For example, it is known that the classical Taylor series 
\begin{equation}
e^{t\L}=\sum_{k=0}^\infty \frac{t^k}{k!}\L^k
\end{equation} 
does not hold if $\L$ is an unbounded operator, e.g., the 
generator of the Koopman semigroup \eqref{Koopman} (see \cite{Kato}, p. 481). 
In the latter case, $e^{t\L}$  should be properly defined as
\begin{equation}
e^{t\L}=\lim_{n\rightarrow \infty}\left(1-\frac{t\L}{n}\right)^{-n}.
\end{equation}
In fact, $\left(1-t\L /n\right)^{-1}$ is the resolvent of $\L$ (apart from a constant factor), which can be defined for both bounded and unbounded linear 
operators.
Despite the theoretical issues associated with 
the existence of convergent series expansions of 
semigroups generated by unbounded operators 
\cite{engel1999one,Kato}, when it comes to 
computing we always need to discretize the system, 
most often by discretizing the generator of the semigroup. 
In this setting, $e^{t\Q\L}$ is truly a matrix exponential, 
where, with some abuse of notation, we denoted by $\Q$ 
and  $\L$ the finite-dimensional representation\footnote{The 
matrix representation of a linear operator $\L$, relative 
to the span of a finite-dimensional basis  
$V=\textrm{span}\{h_1, h_2,...,\}$ 
can be easily obtained by representing 
each vector $\L h_i$ in $V$. Alternatively, if $\L$ 
operates in the Hilbert 
space $\mathcal{H}$ and $\{h_1, h_1,...,\}$ 
is an orthonormal basis of $\mathcal{H}$, then 
the matrix representation of $\L$ has entries 
$ \L_{ij}= (\L h_i,h_j)$, where $(,)$ denotes 
the inner product in $\mathcal{H}$.}
of the operators $\Q$ and $\L$.

\subsection{MZ-Dyson Expansion} 
\label{sec:MZ-Dyson} 
Consider the classical Taylor series expansion of the 
orthogonal dynamics propagator
\begin{align}
e^{t\Q\L}=\sum_{n=0}^{\infty}\frac{t^n}{n!}(\Q\L)^n.
\end{align}
A substitution of this expansion into the 
MZ equation \eqref{reduced order equation} yields 
\begin{align}
\frac{\partial}{\partial t}\P e^{t\L}u_0&=\P e^{t\L}\P\L u_0
+\int_0^t \P e^{s\L}\P\L e^{(t-s)\Q\L}\Q\L u_0 ds,\nonumber\\
&=\P e^{t\L}\P\L u_0+\int_0^t 
\sum_{n=0}^{\infty}\frac{(t-s)^n}{n!}\underbrace{\P e^{s\L}\P\L (\Q\L)^n\Q\L u_0}_{\mathcal{C}_n(s) u_0}ds,\nonumber\\
&=\P e^{t\L}\P\L u_0
+ \int_0^t \underbrace{\left[\sum_{n=0}^\infty  \mathcal{C}_n(s)\frac{(t-s)^n}{n!}\right]}_{\mathcal{G}(t-s,s)}
u_0ds,\nonumber\\ 
&=\P e^{t\L}\P\L u_0+\int_0^t \mathcal{G}(t-s,s) u_0ds,
\label{equ:GLE_dyson}
\end{align}
where the {\em memory operator}\footnote{Note that $\mathcal{G}(t-s,s)$ here is 
not a function but a linear operator.} $\mathcal{G}(t-s,s)$ is defined as
\begin{equation}
\mathcal{G}(t-s,s)=\sum_{n=0}^{\infty}\frac{(t-s)^n}{n!}\mathcal{C}_n(s), 
\qquad \mathcal{C}_n(s)=\P e^{s\L}\P\L(\Q\L)^n\Q\L, \quad  n\geq 0.
\label{MZDg}
\end{equation}
We shall call this series expansion of the MZ equation 
as {\em MZ-Dyson expansion}. The reason for such definition is that \eqref{equ:GLE_dyson} is equivalent to the $H$-model 
discussed in \cite{Yuan1} and \cite{stinis2007higher}, which in turn is equivalent to a Dyson series expansion in the form
\begin{align*}
\frac{\partial}{\partial t}\P e^{t\L}u_0&=\P e^{t\L}\P\L u_0
+w_0(t)
\end{align*}
where 
\begin{align}
\label{wn(t)}
w_0(t)=&\int_0^t\mathcal{P}
e^{s\mathcal{L}}\mathcal{PLQL}x_0ds
+
\int_0^t\int_0^{\tau_1}\mathcal{P}
e^{s\mathcal{L}}\mathcal{PLQLQL}x_0dsd\tau_1\nonumber\\
&+...+
\int_0^t\int_0^{\tau_{n-1}}...\int_0^{\tau_1}\mathcal{P}
e^{s\mathcal{L}}\mathcal{PL}(\mathcal{QL})^nx_0ds
d\tau_1...d\tau_{n-1}+...\, .
\end{align}
To prove such equivalence, we just need to prove that 
\begin{align}\label{equ_Dyson}
\int_0^t\int_0^{\tau_{n-1}}...\int_0^{\tau}\P e^{s\L}\P\L(\Q\L)^ndsd\tau_1...d\tau_{n-1}=
\int_0^t\frac{(t-s)^{n-1}}{(n-1)!}\P e^{s\L}\P\L(\Q\L)^n\Q\L ds.
\end{align}
We proceed by induction. To this end, we first define 
\begin{equation}
\mathcal{A}_n(t)=\int_0^t\int_0^{\tau_{n-1}}...\int_0^{\tau}\P e^{s\L}\P\L(\Q\L)^ndsd\tau_1...d\tau_{n-1}, 
\qquad 
\mathcal{B}_n(t)=\int_0^t\frac{(t-s)^{n-1}}{(n-1)!}\P e^{s\L}\P\L(\Q\L)^n\Q\L ds.
\label{AnBn}
\end{equation}
For $n=1$ we have $\mathcal{A}_1=\mathcal{B}_1$.
For  $n\geq 2$ we have $\mathcal{A}_n'(t)=\mathcal{A}_{n-1}(t)$,  
$\mathcal{B}_n'(t)=\mathcal{B}_{n-1}(t)$ and 
$\mathcal{A}_n(0)=\mathcal{B}_n(0)$. 
Hence, by induction we conclude that 
$\mathcal{A}_n(t)=\mathcal{B}_n(t)$, and therefore the 
memory integral in \eqref{equ:GLE_dyson}, with 
$\mathcal{G}$ given in \eqref{MZDg}, is equivalent 
to a Dyson series.

\subsection{MZ-Faber Expansion} 
\label{sec:MZ-Faber}
The Faber series of the orthogonal 
dynamics propagator $e^{t\Q\L }$ is an operator 
series in the form (see Appendix \ref{app:Faber}) 
\begin{equation}
e^{t\Q\L} = \sum_{j=0}^\infty a_j(t)\mathcal{F}_j(\Q\L),
\label{expQL}
\end{equation}
where $\mathcal{F}_j$ is the $j-$th order Faber 
polynomial, and $a_j(t)$ are suitable temporal modes defined 
hereafter. 
The series expansion \eqref{expQL} 
is {\em asymptotically optimal}, 
in the sense that its $m$-th order truncation 
uniformly approximates the best sequence 
of operator polynomials converging to $e^{t\Q\L}$ 
as $m\rightarrow \infty$ \cite{Eiermann}.  A substitution of \eqref{expQL}
into \eqref{MZmemory} yields the following expansion 
of the MZ equation \eqref{reduced order equation}
\begin{align}\label{volterra_MZequation}
\frac{\partial}{\partial t}\P e^{t\L}u_0&=\P e^{t\L}\P\L u_0
+\int_0^t \P e^{s\L}\P\L e^{(t-s)\Q\L}\Q\L u_0 ds,\nonumber\\
&=\P e^{t\L}\P\L u_0
+ \int_0^t \sum_{j=0}^\infty  a_j(t-s)
 \underbrace{\P e^{s\L}\P\L \mathcal{F}_j(\Q\L)\Q\L u_0}_{\mathcal{C}_j(s)u_0}
 ds,\nonumber\\
&=\P e^{t\L}\P\L u_0
+ \int_0^t \mathcal{G}(t-s,s)u_0ds,
\end{align}
where
\begin{align}
\mathcal{G}(t-s,s)=\sum_{j=0}^{\infty}a_j(t-s) \mathcal{C}_j(s),
\label{memoryK}
\end{align}
and 
\begin{equation}
a_j(t-s)=\frac{1}{2\pi i}\int_{|w|=R}\frac{e^{(t-s)\psi(w)}}{w^{j+1}} dw, \qquad 
\mathcal{C}_j(s)= \P e^{s\L}\P\L \F_j(\Q\L)\Q\L.
\label{coeff}
\end{equation}
Here, $\psi(w)$ is the conformal map at the basis 
of the Faber series (see Appendix \ref{app:Faber}). 
The coefficients of the Laurent expansion of $\psi$ determine 
the recurrence relation of the Faber polynomials. 
High-order Laurent series usually yield higher convergence 
rates, but complicated recurrence relations (see equation \eqref{recursive}). 
Moreover, the computation of the integrals in \eqref{coeff} can 
be quite cumbersome if high-order Laurent series are employed.
To avoid such  drawbacks, in this paper we choose 
the conformal map $\psi(w)=w+c_0+c_1/w$. 
This yields the following expression for the 
coefficients $a_j(t-s)$ 
\begin{equation}
a_j(t-s)= \frac{e^{(t-s)c_0}}{(\sqrt{-c_1})^j}
J_j\left(2(t-s)\sqrt{-c_1}\right),
\label{aaj}
\end{equation}
where $J_j$ denotes the $j-$th Bessel function of the first kind. 
In Section \ref{sec:convergence} we prove that the Faber 
expansion of the MZ memory integral converges for 
any linear dynamical system and any finite integration 
time with rate that is at least $R$-superlinear.

\paragraph{Remark} The MZ-Dyson expansion we 
discussed in Section \ref{sec:MZ-Dyson} is a subcase 
of the Faber expansion. 
In fact, Faber polynomials $\F_j(\Q\L)$ corresponding 
to the conformal mapping $\psi(w)=w$ are 
simply monomials $(\Q\L)^j$ (see Appendix \ref{app:Faber}).
Moreover, the temporal modes \eqref{aaj} reduce to $(t-s)^j/j!$ 
if we set $c_0=0$ and take the limit $c_1\rightarrow 0$.

\subsection{Other Series Expansions of the MZ-Memory Integral}
The operator exponential $e^{t\Q\L}$ (propagator of the orthogonal 
dynamics) can be expanded relative to basis functions other 
than Faber polynomials \cite{moler1978nineteen,moler2003nineteen}. 
This yields different approximations of the MZ memory integral 
and, correspondingly, different expansions of the MZ equation. Hereafter 
we discuss two relevant cases. 

\subsubsection{MZ-Lagrange Expansion} 
The MZ-lagrange expansion is based on the following 
semigroup expansion
\begin{align}
e^{t\Q\L}=\sum_{j=1}^{n}e^{\lambda_j t}\prod_{\substack{k=1\\k\neq j}}^{n}\frac{(\Q\L-\lambda_k\mathcal{I})}{(\lambda_j-\lambda_k)},
\label{MZL}
\end{align}
where $\{\lambda_1,...,\lambda_{n}\}=\sigma(\Q\L)$ 
is the spectrum of the matrix representation of the 
operator $\Q\L$ (eigenvalues counted with their multiplicity). 
Note that \eqref{MZL} is in the form
\eqref{generalS} with 
\begin{equation}
a_j(t)=e^{\lambda_j t}, \quad \textrm{and}\quad 
\Phi_j(\Q\L)=\prod_{\substack{k=1\\k\neq j}}^{n}\frac{(\Q\L-\lambda_k\mathcal{I})}{(\lambda_j-\lambda_k)}.
\end{equation}
A substitution of \eqref{MZL} into the MZ equation yields the 
MZ-Lagrange expansion
\begin{align}
\frac{\partial}{\partial t}\P e^{t\L}u_0=\P e^{t\L}\P\L u_0+\int_0^t 
\mathcal{G}(t-s,s) u_0ds,
\label{equ:GLE_lagrange}
\end{align}
where 
\begin{equation}
\mathcal{G}(t-s,s)=\sum_{j=1}^{n}e^{(t-s)\lambda_j}\mathcal{C}_j(s), \quad 
\textrm{and}
\quad 
\mathcal{C}_j(s)= \P e^{s\L}\P\L \prod_{\substack{k=1\\k\neq j}}^{n}\frac{(\Q\L-\lambda_k\mathcal{I})}{(\lambda_j-\lambda_k)},\qquad j\geq 1.
\label{Lagrangekernel}
\end{equation}

\subsubsection{MZ-Newton Expansion}
The MZ-Newton expansion is based on the following 
semigroup expansion
\begin{align}
e^{t\Q\L}=f_{1,1}(t)\mathcal{I}+\sum_{j=2}^n
f_{1,j}(t)\prod_{k=1}^{j-1}(\Q\L-\lambda_k\mathcal{I}),
\label{MZN}
\end{align}
where $f_{1,j}(t)$ is the divided difference defined recursively by
\begin{align}
f_{1,j}(t)=
\begin{dcases}
e^{\lambda_1 t}\quad &j=1,\\
\displaystyle \frac{e^{t\lambda_1}-e^{t\lambda_2}}{\lambda_1-\lambda_2} \quad & j=2,\\
\displaystyle \frac{f_{1,j-1}(t)- f_{2,j}(t)}{\lambda_1-\lambda_{j}} \quad &  j\geq 3.
\end{dcases}
\label{mznewton}
\end{align}
A substitution of the Newton expansion \eqref{MZN} into the MZ equation 
yields the following MZ-Newton expansion
\begin{align}
\frac{\partial}{\partial t}\P e^{t\L}u_0=\P e^{t\L}\P\L u_0+
\int_0^t \mathcal{G}(t-s,s) u_0ds,
\label{equ:GLE_newton}
\end{align}
where 
\begin{equation}
\mathcal{G}(t-s,s)=\mathcal{C}_1(s)e^{(t-s)\lambda_1}+\sum_{j=2}^n\mathcal{C}_j(s) f_{1,j}(t),\quad  
\quad \mathcal{C}_j(s)= 
\begin{cases}
\P e^{s\L}\P\L&\quad j=1\\
\displaystyle \P e^{s\L}\prod_{k=1}^{j-1}(\P\L\Q\L-\lambda_k\mathcal{PL})& \quad 
j\geq 2
\end{cases}.
\end{equation}

\begin{table}[t]
\begin{minipage}{4.5cm}
\small
\vspace{0.3cm}Mori-Zwanzig Memory Operator \\
$\displaystyle \mathcal{G}(t-s,s)=\sum_{j=0}^{\infty} h_j(t-s) \mathcal{C}_j(s)$\\
\end{minipage}
\hspace{0.5cm}
\begin{minipage}{5cm}
\label{table}
\centering\small
\begin{tabular}{l l l}
Type & Temporal bases $h_j(t)$ & Operators $\mathcal{C}_j(s)$\vspace{0.1cm} \\
\hline\\
MZ-Dyson     & $\displaystyle \frac{t^j}{j!}$   & $\P e^{s\L}\P\L(\Q\L)^j\Q\L$ 
\\
MZ-Faber         & $\displaystyle e^{tc_0}\frac{J_j(2t\sqrt{-c_1})}{(\sqrt{-c_1})^j}$   &  $\P e^{s\L}\P\L \mathcal{F}_j(\Q\L)\Q\L$\\
MZ-Lagrange & $\displaystyle e^{t \lambda_j}$ & 
$\displaystyle \P e^{s\L}\P\L \prod_{\substack{k=1\\k\neq j}}^{n}\frac{(\Q\L-\lambda_k\mathcal{I})}{(\lambda_j-\lambda_k)}$\\
MZ-Newton & $f_{1,j}(t)$ & 
$\displaystyle 
\begin{dcases}
\P e^{s\L}\P\L &\quad j=1\\
\displaystyle \P e^{s\L}\prod_{k=1}^{j-1}(\P\L\Q\L-\lambda_k\mathcal{PL})& \quad 
j\geq 2
\end{dcases}
$\\
\end{tabular}
\end{minipage}
\caption{Series expansions of the Mori-Zwanzig memory operator. Here $J_j$ is the $j$th Bessel function of the first kind, $c_0$ and $c_1$ are real numbers, 
$f_{1,j}(t)$ are defined in \eqref{mznewton}, and $\lambda_j$ are the eigenvalues of any matrix representation of $\Q\L$.}
\label{tab:1}
\end{table}

\noindent
\paragraph{Remark} All series expansion methods we considered 
so far aim at representing the memory integral in the Mori-Zwazing equation 
for the same phase space function. Therefore, such series 
should be related to each other. Indeed, as shown in 
Table \ref{tab:1}, they basically represent the same memory 
operator $\mathcal{G}(t-s,s)$ relative to different bases. This also means 
that the series can have different convergence rate. For example, 
as we will demonstrate numerically in Section \ref{sec:application} 
the MZ-Faber expansion converges much faster than the MZ-Dyson series.

\subsection{Generalized Langevin Equation}
\label{sec:form_GLE}
We have seen in Section \ref{sec:New_expansion} 
that expanding the orthogonal dynamics 
propagator $e^{t\Q\L}$ in an operator series 
in the form \eqref{generalS}
yields the Mori-Zwanzig equation 
\begin{align}\label{general_GLE}
\frac{\partial}{\partial t}\P e^{t\L}u_0=\P e^{t\L}\P\L u_0+ \sum_{j=0}^{\infty}\int_0^t h_j(t-s)\mathcal{C}_j(s)u_0ds,
\end{align}
where $h_j(t-s)$ are temporal modes, and $\mathcal{C}_j(s)$ 
are operators defined in Table \ref{tab:1}. For example, if 
we consider the MZ-Dyson expansion, we have
\begin{equation}
h_j(t-s)=\frac{(t-s)^j}{j!},\qquad 
\mathcal{C}_j(s)=\P e^{s\L}\P\L(\Q\L)^j\Q\L.
\end{equation}
Equation \eqref{general_GLE} is the exact
generalized Langevin equation (GLE) governing 
the projected dynamics of a quantity of interest. 
Such equation has different forms depending on the 
choice of the projection operator $\P$. In particular, 
if we choose Chorin's projection \eqref{Chorin_projection}
then \eqref{general_GLE} is an equation for the conditional 
expectation of the quantity of interest. On the other 
hand, if we choose Berne's projection \eqref{BernProjection}
then \eqref{general_GLE} becomes an equation for the 
autocorrelation function of the quantity of interest. 

\subsubsection{Evolution Equation for the Conditional Expectation}
\label{sec:MZmean}

If we consider Chorin's projection \eqref{Chorin_projection}, 
then \eqref{general_GLE} becomes an unclosed 
evolution equation for the conditional expectation 
of the quantity of interest (see Section \ref{sec:proj}).
However, in the special case where the dynamical 
system \eqref{eqn:nonautonODE}
is {\em linear} and the quantity of interest is $u(\bm x)=x_1(t)$, 
it can be shown that the evolution equation for the conditional 
expectation is closed. 
To this end,  let us first recall that if $\P$ is Chorin's projection 
and $u(\bm x)=x_1$ then 
\begin{align*}
\P e^{t\L}x_1(0)=\langle x_1(t)\rangle_{\rho_0} = \int x_1(t,\bm x_0) \rho_0(\bm x_0)d\bm x_0.
\end{align*}
In this case,  \eqref{general_GLE} reduces to 
\begin{align}\label{equ:mean_evolution1}
\frac{d}{d t}\langle x_1(t)\rangle_{\rho_0}&=
a\langle x_1(t)\rangle_{\rho_0}+b + \int_0^t g(t-s)\langle x_1(s)\rangle_{\rho_0}ds+\int_0^t f(t-s)ds,
\end{align}
where the constants $a$, $b$, the {\em MZ memory kernel} $g(t-s)$, 
and the function $f(t-s)$ are defined by
\begin{equation}
\P\L x_1(0)=ax_1(0)+b,\qquad 
g(t-s) =\sum_{j=0}^{\infty}g_jh_j(t-s),\qquad f(t-s) =\sum_{j=0}^{\infty}f_jh_j(t-s).
\label{param1}
\end{equation}
The coefficients $g_j$, $ f_j$ and the temporal bases $h_j(t-s)$ 
appearing in the series expansions above depend on the series 
expansion of the orthogonal dynamics propagator $e^{t\Q\L}$. 
Specifically, $g_j$ and $f_j$ are determined by the equation 
\begin{equation}
\mathcal{C}_j(s)x_1(0)=g_j\langle x_1(s)\rangle_{\rho_0}+f_j,
\label{eq44}
\end{equation}
while $h_j(t-s)$ and $\C_j(s)$ are defined in Table \ref{tab:1}.
To derive equation \eqref{eq44} we used the identity $\P e^{s\L}f_j=f_j$. 
In the case of MZ-Dyson and MZ-Faber expansions  
we explicitly obtain
\begin{align}
\P\L(\Q\L)^j\Q\L x_1(0)=g_j^Dx_1(0)+f_j^D,\qquad \qquad
\P\L\F_j(\Q\L)\Q\L x_1(0)=g_j^Fx_1(0)+f_j^F,
\label{GLEcoeff0}
\end{align}
where the superscripts $D$ and $F$ stand for ``Dyson'' and ``Faber'', 
respectively.

\subsubsection{Evolution Equation for the Autocorrelation Function }
\label{sec:autocorrelation}
If we choose the projection operator $\P$ to be Berne's 
projection \eqref{BernProjection}, then equation \eqref{general_GLE} 
becomes a closed evolution equation for the autocorrelation 
function $C_u(t)$ of the quantity of interest. Such equation has the form 
\begin{equation}
\frac{d C_u(t) }{d t}=  a C_u(t)+\int_0^tg(t-s)C_u(s)ds,
\label{equ:evolution_correlation}
\end{equation}
where $a$ and $g$ are defined as
\begin{equation}
\P \L u_0=au_0,\qquad 
g(t-s) = \sum_{j=0}^{\infty}g_jh_j(t-s).
\label{param}
\end{equation}
As before, the temporal modes $h_j$ and the coefficients 
$g_j$ in the expansion of the MZ-memory kernel $g(t-s)$ depend 
on the expansion of the orthogonal dynamics propagator 
$e^{t\Q\L}$. Specifically, in the case of MZ-Dyson and MZ-Faber expansions 
we obtain, respectively,
\begin{align}
\P\L(\Q\L)^j\Q\L u_0=g_j^Du_0,\qquad 
\P\L\F_j(\Q\L)\Q\L u_0=g_j^Fu_0.
\label{GLEcoeff}
\end{align} 
It is worth noticing that Berne's projection sends any function into the linear space spanned by the initial condition $u_0$. 

\subsubsection{Analytical Solution to the Generalized Langevin Equation}
The analytical solution to the MZ equations 
\eqref{equ:mean_evolution1} and \eqref{equ:evolution_correlation} 
can be computed through Laplace transforms. 
To this end, let us first notice that both equations are 
in the form of a Volterra equation
\begin{equation}
\frac{d y(t) }{d t}=  a y(t)+b+\int_0^tg(t-s)y(s)ds+\int_0^tf(t-s)ds.
\label{Volterra}
\end{equation}
Applying the Laplace transform
\begin{equation}
\mathscr{L}[\cdot](s)=\int_0^\infty  (\cdot)  e^{-st}dt 
\end{equation} 
 to  both sides of \eqref{Volterra} yields 
\begin{equation}
s Y(s)-y(0)= a Y(s)+\frac{b}{s}+ Y(s)G(s)+\frac{F(s)}{s},
\label{LaplaceT}
\end{equation}
i.e., 
\begin{equation}
Y(s)=\frac{(F(s)+b)/s+y(0)}{s-G(s)-a},
\label{Ys}
\end{equation}
where 
\begin{equation}
Y(s)=\mathscr{L}[y(t)], \qquad F(s)=\mathscr{L}[f(t)],\qquad 
G(s)=\mathscr{L}[g(t)].
\end{equation}
Thus, the exact solution to the Volterra 
equation \eqref{Volterra} can be written as
 \begin{align}
y(t)=\mathscr{L}^{-1}\left[\frac{(F(s)+b)/s+y(0)}{s-G(s)-a}\right].
\label{MZ_solution1}
\end{align}
The Laplace transform of the memory kernel $g(t)$, i.e., $G(s)$, 
can be computed analytically in many cases. 
For example, in the case of MZ-Dyson and MZ-Faber 
expansions we obtain, respectively 
\begin{align}\label{equ:lapalce_MZDyson}
G(s)=\sum_{j=0}^{\infty}\frac{g^D_j}{s^{j+1}}\qquad 
\textrm{(MZ-Dyson),}
\end{align}
\begin{align}\label{equ:lapalce_MZFaber}
G(s)=\sum_{j=0}^{\infty}\frac{g^F_j}{2^j(\sqrt{-c_1})^{2j}}
\frac{(\sqrt{s^2-4c_1}-s)^j}{\sqrt{s^2-4c_1}} \qquad \textrm{(MZ-Faber).}
\end{align}
The coefficients $g^F_j$ and $g^D_j$ are explicitly 
defined in \eqref{GLEcoeff0}, or \eqref{GLEcoeff}, 
depending on whether we are interested in the mean 
or the correlation function of the quantity of interest.

\paragraph{Remark} The recurrence 
relation at the basis of the Faber polynomials 
(see equation \eqref{recursive}) induces a recurrence relation in 
the Laplace transform $G(s)$ of the MZ memory kernel.  
Therefore, a connection between the MZ-Faber 
approximation method we propose here and the 
method of recurrence relations of Lee 
\cite{mori1965continued,lee1982eq} can 
be established.

\section{Convergence Analysis}
\label{sec:convergence}
In this section, we develop a thorough  convergence 
analysis of the MZ-Faber expansion\footnote{We recall that 
the MZ-Dyson series expansion is 
a subcase of the MZ-Faber expansion. 
Therefore convergence of MZ-Faber implies 
convergence of MZ-Dyson.} of the Mori-Zwanzig 
equation \eqref{reduced order equation}.   
The key theoretical 
results at the basis of our analysis can be found 
in our recent paper \cite{Yuan1}. Here we focus, in particular, 
on high-dimensional linear systems in the form 
\begin{equation}
\dot {\bm x}(t) = \bm A \bm x(t),\qquad \bm x(0)=\bm x_0(\omega),
\label{lin}
\end{equation}
where $\bm x_0(\omega)$ is a random initial state. 
Our goal is to prove that the norm of the 
approximation error    
\begin{align}
E_n(t)&=\int_0^t \P e^{s\L}\P\L e^{(t-s)\Q\L}\Q\L u_0 ds
-\underbrace{\sum_{j=0}^n\int_0^t  a_j(t-s)\P e^{s\L}
 \P\L \mathcal{F}_j(\Q\L)\Q\L u_0ds}_{\textrm{MZ-Faber series}}\nonumber \\
 &=\int_0^t \sum_{j=n+1}^\infty  a_j(t-s)\P e^{s\L}
 \P\L \mathcal{F}_j(\Q\L)\Q\L u_0 ds
 \label{Ent}
\end{align}
decays as we increase the polynomial order $n$, 
for any fixed integration time $t>0$, i.e., 
\begin{align*}
\lim_{n\rightarrow \infty} \left\|E_n(t)\right\|=0.
\end{align*} 
Throughout this Section $\left\|\cdot\right\|$ denotes either 
an operator norm, a norm in a function space or 
a standard norm in $\mathbb{C}^N$, depending 
on the context.  
The convergence proof of MZ-Faber series 
clearly depends on the choice of the projection 
operator and the phase space function $u(\bm x)$ (quantity 
of interest). In this Section, we consider  
\begin{equation}
u(\bm x(t))=x_1(t),
\label{PSF}
\end{equation}
and Chorin's projection \eqref{Chorin_projection}. Similar results 
can be obtained for Berne's projection. 
We begin with the following 

\begin{lemma}
\label{lemma_matrix_op}
Consider the linear dynamical system \eqref{lin} and the phase space function \eqref{PSF}. Then,
\begin{align*}
\P\L p_k(\Q\L)\Q\L x_{1}(0)&=\left[\bm b \cdot p_k(\Mi_{11}^T) \bm a \right] 
x_{1}(0)+ \left[p_k(\Mi_{11}^T)\Mi_{11}^T\bm a\right]\cdot \langle \bm {x}_{-1}(0)\rangle_{\rho_0},
\end{align*}
where $\P$ is Chorin's projection \eqref{Chorin_projection}, 
$\Q=\I-\P$, $\L=\bm A\bm x\cdot \nabla$, $p_k$ is an arbitrary polynomial of degree $k$,  
\begin{align*}
\bm {x}_{-1}(0)=[x_2(0), x_3(0), \ldots, x_N(0)]^T\qquad 
\bm a = [ A_{12}, \dots, A_{1N}]^T,\qquad 
\bm b&= [A_{21},\dots, A_{N1}]^T,
\end{align*}
and $\Mi_{11}$ is the matrix obtained from $\bm A$ by removing the first 
row and the first column. 
\end{lemma}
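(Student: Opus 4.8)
The plan is to compute $\P\L p_k(\Q\L)\Q\L x_1(0)$ by first understanding how the Liouville operator $\L = \bm A\bm x\cdot\nabla$ acts on linear phase space functions, and then tracking how $\P$ and $\Q$ interact with that action. The key observation is that the space of linear functions $\{x_1,\dots,x_N\}$ is invariant under $\L$: indeed $\L x_i = \sum_j A_{ij}x_j$, so on this space $\L$ is represented by the matrix $\bm A$ acting on coordinates. First I would decompose $x_j = \hat x_j \oplus \tilde x_j$ according to resolved ($x_1$) versus unresolved ($x_2,\dots,x_N$) variables, and note that for Chorin's projection restricted to linear functions, $\P x_1 = x_1$ and $\P x_j = \langle x_j(0)\rangle_{\rho_0}$ (a constant) for $j\ge 2$, since conditioning on $\hat{\bm x}_0 = x_1(0)$ leaves $x_1$ untouched and averages out the independent unresolved components. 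Consequently $\Q x_1 = 0$ and $\Q x_j = x_j - \langle x_j\rangle_{\rho_0}$ for $j\ge 2$, i.e.\ $\Q$ projects onto the centered unresolved linear functions.

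The second step is to identify $\Q\L$ on the relevant subspace. Since $\L x_1 = A_{11}x_1 + \sum_{j\ge 2}A_{1j}x_j = A_{11}x_1 + \bm a\cdot\bm x_{-1}$, applying $\Q$ kills the $x_1$ part and the constants, giving $\Q\L x_1 = \bm a\cdot(\bm x_{-1} - \langle\bm x_{-1}\rangle)$. More importantly, on the subspace of centered unresolved functions, $\Q\L$ acts as $\Mi_{11}^T$: for a centered combination $\bm c\cdot(\bm x_{-1}-\langle\bm x_{-1}\rangle)$ one has $\L(\bm c\cdot\bm x_{-1}) = \sum_{i\ge 2}c_i\sum_j A_{ij}x_j$, and after applying $\Q$ only the unresolved rows/columns survive, yielding $\bm c^T\Mi_{11}$ applied to $\bm x_{-1}$, i.e.\ $\Q\L$ is represented by $\Mi_{11}^T$. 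Therefore $p_k(\Q\L)\Q\L x_1(0)$, starting from the centered vector with coefficient vector $\bm a$, equals $\bigl(p_k(\Mi_{11}^T)\Mi_{11}^T\bm a\bigr)\cdot(\bm x_{-1}-\langle\bm x_{-1}\rangle)$ up to the subtlety that the very first $\Q\L$ produces the vector indexed by $\bm a$ while subsequent applications iterate $\Mi_{11}^T$ — one must check the bookkeeping so that the exponent of $\Mi_{11}^T$ comes out to $p_k(\Mi_{11}^T)\Mi_{11}^T$ and not $p_k(\Mi_{11}^T)$.

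The third step is to apply the outer $\P\L$. Writing $\bm d = p_k(\Mi_{11}^T)\Mi_{11}^T\bm a \in \mathbb{C}^{N-1}$, we have $\L\bigl(\bm d\cdot\bm x_{-1}\bigr) = \sum_{i\ge 2}d_i\bigl(A_{i1}x_1 + \sum_{j\ge 2}A_{ij}x_j\bigr) = (\bm b\cdot\bm d)\,x_1 + (\Mi_{11}^T\bm d)\cdot\bm x_{-1}$, where $\bm b = [A_{21},\dots,A_{N1}]^T$ captures the coupling back to the resolved variable. Now applying $\P$: the $x_1$ term is fixed, and the unresolved term becomes its conditional mean $\langle\bm x_{-1}(0)\rangle_{\rho_0}$. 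The constant shift $-\langle\bm x_{-1}\rangle$ carried along from step two contributes only to the constant part, which is absorbed into the $\langle\bm x_{-1}(0)\rangle_{\rho_0}$ term via a linear relabeling; tracking this carefully gives exactly the claimed coefficients $\bm b\cdot p_k(\Mi_{11}^T)\bm a$ on $x_1(0)$ — note the exponent drops by one here because the outer $\L$ contributes the extra factor that was $\Mi_{11}^T$, leaving $p_k(\Mi_{11}^T)$ — and $p_k(\Mi_{11}^T)\Mi_{11}^T\bm a$ on $\langle\bm x_{-1}(0)\rangle_{\rho_0}$.

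The main obstacle I anticipate is precisely the index bookkeeping in the two places where an exponent of $\Mi_{11}^T$ must be tracked: the first $\Q\L$ applied to $x_1$ must be shown to "initialize" the iteration with the vector $\bm a$ (rather than $\Mi_{11}^T\bm a$ or $\mathbf{1}$), and the final $\P\L$ must be shown to "consume" one power of $\Mi_{11}^T$ in the $x_1$-coefficient while leaving the full $p_k(\Mi_{11}^T)\Mi_{11}^T\bm a$ in the $\langle\bm x_{-1}\rangle$-coefficient. I would handle this by an explicit induction on $k$ (or more cleanly, by first proving the lemma for monomials $p_k(z) = z^m$ and extending by linearity), carefully separating at each stage the $x_1$-component, the centered-unresolved-component, and the constant component of the intermediate function, and verifying that the constant component never feeds back (since $\Q$ kills constants and $\P\L$ of a constant is zero). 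Everything else is the routine linear algebra of restricting $\bm A$ to complementary coordinate blocks.
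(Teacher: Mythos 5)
Your proposal is correct and follows essentially the same route as the paper: show by direct calculation/induction that $(\Q\L)^{n}x_1(0)$ is a centered linear function of $\bm x_{-1}(0)$ with coefficient vector $(\Mi_{11}^T)^{n-1}\bm a$, apply the outer $\L$ and then $\P$ to pick up the contraction with $\bm b$ and the mean $\langle \bm x_{-1}(0)\rangle_{\rho_0}$, and extend to general $p_k$ by linearity. The one bookkeeping point you flag resolves exactly as you anticipate: the coefficient vector entering the final $\P\L$ is $p_k(\Mi_{11}^T)\bm a$ rather than $p_k(\Mi_{11}^T)\Mi_{11}^T\bm a$, and the outer $\L$ then supplies the extra factor of $\Mi_{11}^T$ in the $\langle \bm x_{-1}(0)\rangle_{\rho_0}$-term while contracting with $\bm b$ in the $x_1(0)$-term, which yields the stated coefficients.
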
 

\begin{proof}
By a direct calculation, it can be verified that 
\begin{align}
(\Q\L)^{n}x_{1}(0) & = \left[\left({\bm M}_{11}^{T}\right)^{n-1}\bm{a} \right]
\cdot \left[ \bm x_{-1}(0)-\langle \bm x_{-1}(0)\rangle_{\rho_0}\right],  
\nonumber\\
\L(\Q\L)^{n}x_1(0) & = \left[\bm{b}^{T}\left({\bm M}_{11}^{T}\right)^{n-1}
\bm{a}\right]x_1(0)+
\left[\left({\bm M}_{11}^{T}\right)^{n}\bm{a}\right]\cdot \bm x_{-1}(0),
\nonumber\\
\P\L(\Q\L)^{n}\Q\L x_{1}(0)&=\left[\bm b^T\left(\Mi_{11}^T\right)^{n} \bm a\right] x_{1}(0)+
\left[ \left(\Mi_{11}^T\right)^{n}\Mi_{11}^T\bm a\right]\cdot \langle\bm {x}_{-1}(0)\rangle_{\rho_0}.
\label{formula:operator_chorin}
\end{align}
Note that each entry of the vector $\langle \bm x_{-1}(0)\rangle_{\rho_0}=[\langle x_2(0)\rangle_{\rho_0},...,\langle x_N(0)\rangle_{\rho_0}]^T$ is 
$\langle x_i(0)\rangle_{\rho_0}=\P x_{i}(0)$ ($i=2,...,N$).
Thus, for any polynomial function in the form 
\begin{equation}
p_k(\Q\L) = \sum_{j=0}^k \beta_k (\Q\L)^j,
\end{equation}
 we have
\begin{align*}
\P\L p_k(\Q\L)\Q\L x_{1}(0)=&\sum_{j=0}^k\beta_j\P\L(\Q\L)^j\Q\L x_{1}(0),
\nonumber\\
=&\sum_{j=0}^k\beta_j
\left(
\left[\bm b^T\left(\Mi_{11}^T\right)^{n} \bm a\right] x_{1}(0)+
\left[ \left(\Mi_{11}^T\right)^{n}\Mi_{11}^T\bm a\right]\cdot 
\langle\bm {x}_{-1}(0)\rangle_{\rho_0}\right),\nonumber\\
=&\left[\bm b \cdot p_k\left(\Mi_{11}^T\right) \bm a \right] 
x_{1}(0)+ \left[p_k\left(\Mi_{11}^T\right)\Mi_{11}^T\bm a\right]\cdot 
\langle \bm {x}_{-1}(0)\rangle_{\rho_0}.
\end{align*}
This completes the proof of the Lemma. 
\end{proof}
To prove convergence of MZ-Faber series we need 
two more Lemmas involving Faber polynomials 
in the complex plane (see Appendix \ref{app:Faber}). 
\begin{lemma}\label{lemma_psi}
Let $\gamma$ be the capacity of $\Omega\subseteq \mathbb{C}$. 
If $\Omega$ is symmetric with respect to the real axis,  
then for any $R>\gamma$ the conformal map \eqref{mapping}
satisfies
\begin{align*}
\psi(R)\leq \psi(\gamma)+R-\frac{\gamma^2}{R}.
\end{align*} 
\end{lemma}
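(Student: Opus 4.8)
The plan is to exploit the standard normalization of the conformal map $\psi$ associated with the exterior of $\Omega$. By definition (see Appendix \ref{app:Faber}), $\psi$ maps $\{|w|>\gamma\}$ conformally onto the exterior of $\Omega$ and admits a Laurent expansion at infinity of the form
\begin{equation*}
\psi(w) = w + c_0 + \frac{c_1}{w} + \frac{c_2}{w^2} + \cdots,
\end{equation*}
valid for $|w|>\gamma$. The symmetry of $\Omega$ with respect to the real axis forces all the Laurent coefficients $c_j$ to be real, so that $\psi$ restricted to the real ray $(\gamma,\infty)$ is real-valued; this is what makes the scalar inequality $\psi(R)\le\psi(\gamma)+R-\gamma^2/R$ meaningful. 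The idea is to bound the defect $\psi(R)-R$ against the corresponding quantity at $w=\gamma$ by controlling its derivative.

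First I would set $\varphi(w) = \psi(w) - w$, which is analytic in $|w|>\gamma$ with $\varphi(w)\to c_0$ as $w\to\infty$, and whose Laurent series is $c_0 + c_1/w + c_2/w^2+\cdots$. The key structural fact I would invoke is the area (Grönwall/Bieberbach) inequality for the exterior map: since $\psi$ is univalent on $\{|w|>\gamma\}$ with leading coefficient $1$, one has $\sum_{j\ge 1} j|c_j|^2 \le \gamma^2$; in particular $|c_1|\le\gamma^2$ and, more to the point, for real $R>\gamma$ the derivative satisfies $\varphi'(R) = -c_1/R^2 - 2c_2/R^3 - \cdots$. I would then estimate, for $t>\gamma$,
\begin{equation*}
|\varphi'(t)| \le \sum_{j\ge 1} \frac{j|c_j|}{t^{j+1}} \le \frac{1}{t^2}\sum_{j\ge 1} j|c_j| \gamma^{j-1} \le \frac{\gamma}{t^2}\sum_{j\ge 1} j|c_j|\gamma^{j-2},
\end{equation*}
and the Cauchy--Schwarz / area-theorem bound $\sum_{j\ge1} j|c_j|^2\le\gamma^2$ should be leveraged — together with $t>\gamma$ — to get the clean comparison $|\varphi'(t)|\le \gamma^2/t^2$. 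Granting this, integrating from $\gamma$ to $R$ gives
\begin{equation*}
\psi(R) - R - \bigl(\psi(\gamma)-\gamma\bigr) = \int_\gamma^R \varphi'(t)\,dt \le \int_\gamma^R \frac{\gamma^2}{t^2}\,dt = \gamma - \frac{\gamma^2}{R},
\end{equation*}
and rearranging yields $\psi(R) \le \psi(\gamma) + R - \gamma^2/R$, which is exactly the claim (noting $\psi(\gamma)-\gamma+\gamma = \psi(\gamma)$ after the obvious cancellation, so the bound reads $\psi(R)\le \psi(\gamma)+R-\gamma^2/R$ only once the additive $\gamma$ on the left is absorbed — I would double-check the bookkeeping here, as it is the one place a stray constant could creep in).

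The main obstacle I anticipate is pinning down the sharp derivative bound $|\varphi'(t)|\le \gamma^2/t^2$ on the whole ray $t>\gamma$ rather than just asymptotically; the naive term-by-term estimate loses a factor unless one uses the area theorem $\sum j|c_j|^2\le\gamma^2$ carefully (or, alternatively, a direct Schwarz-lemma argument applied to the rescaled map $w\mapsto \gamma^{-1}\psi(\gamma w)$ on the unit disk's exterior). A cleaner route, which I would pursue if the coefficient bookkeeping gets messy, is to work with the inverse map $\Psi$ on the exterior of $\Omega$ and the monotonicity of $\psi$ on $(\gamma,\infty)$ directly: since $\psi$ is increasing and convex-like there, one can compare $\psi(R)-\psi(\gamma)$ to $R-\gamma$ plus a correction governed by the capacity, which is precisely the $\gamma^2/R$ term. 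Either way, the only analytic input beyond elementary calculus is the univalence of $\psi$ on the exterior disk and the resulting area inequality, both of which are classical; everything else is a one-line integration.
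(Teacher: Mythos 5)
Your overall skeleton is the same as the paper's: write $\psi(R)-\psi(\gamma)$ as $\int_\gamma^R\psi'(t)\,dt$ and bound the integrand pointwise on $(\gamma,R)$. The difference is which pointwise bound you use, and this is where your proposal has a genuine gap. You reduce everything to the estimate $|\psi'(t)-1|\le \gamma^2/t^2$ for all $t>\gamma$, but you never establish it, and the route you sketch cannot deliver it. The area theorem for the rescaled map gives $\sum_{j\ge1} j|c_j|^2\gamma^{-2j}\le\gamma^2$, and Cauchy--Schwarz then yields only
\begin{equation*}
|\varphi'(t)|\;\le\;\Bigl(\sum_{j\ge1} j|c_j|^2\gamma^{-2j}\Bigr)^{1/2}\Bigl(\sum_{j\ge1} j\,\gamma^{2j}t^{-2j-2}\Bigr)^{1/2}\;\le\;\frac{\gamma^2}{t^2-\gamma^2},
\end{equation*}
which blows up at $t=\gamma$ and is not even integrable there, so the subsequent integration step collapses. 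Your term-by-term chain is also off: since $t>\gamma$, one has $t^{-(j+1)}\le t^{-2}\gamma^{-(j-1)}$ (not $\gamma^{j-1}$), and with $|c_j|\le\gamma^{j+1}/\sqrt{j}$ the resulting series $\sum_j\sqrt{j}\,\gamma^2$ diverges. Note also that the Joukowski-type map $\psi(w)=w+\gamma^2/w$ attains equality in your claimed bound, so there is no slack: you would need a genuinely sharp distortion estimate, which neither the area theorem plus Cauchy--Schwarz nor the vague ``convex-like monotonicity'' alternative at the end provides.

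The paper avoids this entirely by using a weaker, already-available estimate: Lemma 4.2 of Novati (2003) gives $|\psi'(t)|\le 1+(\gamma/|t|)^2$ for $|t|>\gamma$, and then
\begin{equation*}
\psi(R)-\psi(\gamma)\;\le\;\int_\gamma^R\Bigl(1+\frac{\gamma^2}{t^2}\Bigr)dt\;=\;R-\frac{\gamma^2}{R},
\end{equation*}
which is exactly the claim. Your target bound $|\psi'(t)-1|\le\gamma^2/t^2$ is strictly stronger than this (it implies it by the triangle inequality), and the extra strength is not needed: after integration both bounds produce the same right-hand side, because the ``$1$'' integrates to $R-\gamma$ and the ``$\gamma^2/t^2$'' to $\gamma-\gamma^2/R$. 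So the fix is simply to bound $|\psi'|$ rather than $|\psi'-1|$ and cite (or prove) the Novati-type derivative estimate; your bookkeeping of the constants in the final integration is otherwise correct.
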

\begin{proof}
We first notice that 
\begin{align*}
\psi(R)=\psi(\gamma)+\int_{\gamma}^R \psi'(t)dt.
\end{align*}
By using Lemma 4.2 in \cite{novati2003solving}, i.e., 
\begin{align*}
|\psi'(t)|\leq 1+\left(\frac{\gamma}{|t|}\right)^2,\quad |t|>\gamma
\end{align*}
we have 
\begin{align*}
\psi(R)-\psi(\gamma)\leq |\psi(R)-\psi(\gamma)|= \left|\int_{\gamma}^R\psi'(t)dt\right|\leq \int_{\gamma}^R|\psi'(t)|dt=R-\frac{\gamma^2}{R},
\end{align*}
which completes the proof.
\end{proof}

\noindent
Next, consider an arbitrary matrix $\bm A$ and 
define the {\em field value} of $\bm A$ as 
\begin{align*}
FV(\Ai)=\left\{\bm z^H\Ai \bm z : \bm z\in \mathbb{C}^N,\,   \bm z^H\bm z=1\right\}.
\end{align*}
The field value of $\bm A$ is a subset of the complex plane. 
Also, denote the truncated Faber series of the exponential matrix $e^{t\bm A}$ as 
\begin{align}\label{method7}
\bm P_{m}(t)=\sum_{j=0}^{m} a_j(t) \F_j(\Ai).
\end{align}
With this notation, we have the following
\begin{lemma}\label{lemma_error}
Let $\Omega\subset \mathbb{C}$ 
be symmetric with respect to the real axis, convex and with capacity $\gamma$. 
Consider an $N\times N$ matrix $\Ai$ with spectrum $\sigma(\bm A)$, 
and an $N\times 1$ vector $\bm v$. If $\sigma(\bm A)\subseteq \Omega$ and 
the field value $FV(\Ai)\subseteq\Omega(q)$ for some $q\geq \gamma$, then the approximation error 
\begin{equation*}
\bm e_{m}(t-s)\bm v=  e^{(t-s)\Ai}\bm v -\bm P_{(m-1)}(t-s) \bm v
\qquad t\geq s
\end{equation*}
satisfies  
\begin{align*}
\|\bm e_{m}(t-s)\bm v\|\leq C_3e^{(t-s)E}\left(\frac{qe^{t-s}}{m} \right)^{m-1}\quad m\geq 4q,
\end{align*}
where 
\begin{align*}
C_3=C_3(v)=8e\|\bm v\|q\left(1+\frac{1}{8q}\right)\quad \textrm{and}
\quad E=1+\psi(\gamma).
\end{align*}
\end{lemma}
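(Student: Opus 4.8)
The plan is to express the truncation error of the Faber series of the matrix exponential as a contour integral, bound the Faber polynomials of $\bm A$ on that contour using the field-value hypothesis, and then optimize the contour radius. First I would recall the standard integral representation of the Faber coefficients, $a_j(t-s)=\frac{1}{2\pi i}\int_{|w|=\rho}\frac{e^{(t-s)\psi(w)}}{w^{j+1}}\,dw$ for $\rho>1$, and the fact that the tail $\bm e_m(t-s)=\sum_{j\ge m}a_j(t-s)\F_j(\bm A)$ can be written as a single contour integral $\frac{1}{2\pi i}\int_{|w|=\rho} e^{(t-s)\psi(w)}\sum_{j\ge m}\frac{\F_j(\bm A)}{w^{j+1}}\,dw$. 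The key tool from the Faber-series literature (the bound used by Eiermann, and by Novati in \cite{novati2003solving}) is the estimate $\|\F_j(\bm A)\|\le 2$ whenever $FV(\bm A)$ lies inside the region $\Omega$ bounded by the image curve of $|w|=1$ under $\psi$, and more generally $\|\F_j(\bm A)\|\le C\,\rho^{j}$ when $FV(\bm A)\subseteq\Omega(\rho)$ (the blown-up region corresponding to radius $\rho$); this is where the hypotheses that $\Omega$ is convex and symmetric, and that $FV(\bm A)\subseteq\Omega(q)$, are consumed.

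Next I would assemble the bound. On the circle $|w|=\rho$ with $\rho\ge q$ we have $|e^{(t-s)\psi(w)}|\le e^{(t-s)\,\mathrm{Re}\,\psi(w)}$, and $\mathrm{Re}\,\psi(w)\le \psi(\rho)$ by the maximum principle / monotonicity of $\psi$ on the real axis; then Lemma~\ref{lemma_psi} gives $\psi(\rho)\le \psi(\gamma)+\rho-\gamma^2/\rho\le \psi(\gamma)+\rho$. Using $\sum_{j\ge m}\|\F_j(\bm A)\|\,\rho^{-(j+1)}\lesssim \sum_{j\ge m}\rho^{j}\cdot q^{?}\cdots$ — more precisely, with $\|\F_j(\bm A)\|\le C(1+\tfrac1{8q})(q/\rho_0)^j$ type bounds one gets a geometric tail dominated by its first term — the integral is bounded by (length $2\pi\rho$) $\times$ (max of integrand), giving something of the order
\begin{align*}
\|\bm e_m(t-s)\bm v\|\;\le\; C\,\|\bm v\|\,q\Bigl(1+\frac{1}{8q}\Bigr)\,e^{(t-s)(1+\psi(\gamma))}\,e^{(t-s)\rho}\,\Bigl(\frac{q}{\rho}\Bigr)^{m-1}.
\end{align*}
Now choose the free radius $\rho$ to minimize $e^{(t-s)\rho}(q/\rho)^{m-1}$; differentiating in $\rho$ gives the optimal $\rho=(m-1)/(t-s)$, and substituting back produces the factor $\bigl(q e^{t-s}/m\bigr)^{m-1}$ (up to the harmless replacement of $m-1$ by $m$ inside the exponential base, valid for $m\ge 4q$, which also ensures $\rho\ge q$ so the field-value bound applies). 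Collecting the constants into $C_3=8e\|\bm v\|q(1+\tfrac1{8q})$ and writing $E=1+\psi(\gamma)$ yields exactly the claimed estimate.

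The main obstacle I anticipate is not the contour-optimization bookkeeping — that is the routine part — but rather justifying the uniform bound $\|\F_j(\bm A)\|\le C\rho^{j}$ with the precise constant $(1+\tfrac1{8q})$ and the precise threshold $m\ge 4q$. This requires invoking the convexity and symmetry of $\Omega$ together with the field-of-values (Crouzeix–Palencia-type or Eiermann-type) machinery in just the right quantitative form, and carefully tracking how the constant depends on $q$; one must also confirm that the interchange of the infinite sum and the contour integral is legitimate (absolute convergence of $\sum\|\F_j(\bm A)\|\rho^{-(j+1)}$ for $\rho>q$, which follows from the same polynomial bound). I would lean on Lemma~4.2 of \cite{novati2003solving} and the asymptotic-optimality results of \cite{Eiermann} for these ingredients rather than re-deriving them, and then the remainder of the argument is the elementary extremization of $e^{(t-s)\rho}(q/\rho)^{m-1}$ over $\rho$ described above.
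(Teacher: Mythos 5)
Your proposal follows essentially the same route as the paper: the quantitative heart of the argument is Theorem 4.2 of \cite{novati2003solving} (the field-of-values bound $\|\bm e_m(t-s)\bm v\|\le 8e\|\bm v\|\bigl(1+\tfrac{1}{8q}\bigr)\,m\,(q/m)^m\max_{z\in\Gamma(m)}|e^{(t-s)z}|$ for $m\ge 4q$), combined with Lemma~\ref{lemma_psi} to get $\psi(m)\le\psi(\gamma)+m$, exactly as you describe. One small caution: in the cited theorem the contour is fixed at radius $m$ rather than optimized, and substituting your optimizer $\rho=(m-1)/(t-s)$ would actually yield the factor $\bigl(eq(t-s)/(m-1)\bigr)^{m-1}$ rather than $\bigl(qe^{t-s}/m\bigr)^{m-1}$ --- but since you ultimately defer to Novati's result for the quantitative ingredient, this slip does not affect the conclusion.
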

\begin{proof}
If $q\geq \gamma$ then we have, 
thanks to the convexity of $\Omega$ and the analyticity 
of the exponential function, 
\begin{align}\label{foruma_1}
\|\bm e_{m}(t-s)\bm v\|\leq 8\|\bm v\|e\left(1+\frac{1}{8q}\right)m\left(\frac{q}{m}\right)^m\max_{|z|\in\Gamma(m)}\left| e^{(t-s)z}\right| \qquad m\geq 4q
\end{align}
(see Theorem 4.2 in \cite{novati2003solving}). 
On the other hand,
\begin{align}\label{foruma_2}
\max_{|z|\in\Gamma(m)}\left| e^{(t-s)z}\right| =e^{(t-s)\psi(m)}\qquad m\geq 4q.
\end{align}
By using Lemma \ref{lemma_psi} we have 
\begin{equation}
\psi(m)\leq \psi(\gamma)+m-\frac{\gamma^2}{m}\leq \psi(\gamma)+m,
\end{equation}
and therefore  
\begin{align}\label{foruma_3}
e^{(t-s)\psi(m)}\leq e^{(t-s)(m-1)}e^{(t-s)(1+\psi(\gamma))}\quad m\geq 4q\geq\gamma.
\end{align}
Combining \eqref{foruma_1} , \eqref{foruma_2} and \eqref{foruma_3}, 
we obtain
\begin{align}\label{estimation}
\|\bm e_{m}(t-s)\bm v\|\leq C_3\exp((t-s)E)\left(\frac{qe^{t-s}}{m} \right)^{m-1},
\end{align}
where 
\begin{align*}
C_3=8e\|\bm v\|q\left(1+\frac{1}{8q}\right)\quad \textrm{and}\quad 
E=1+\psi(\gamma).
\end{align*}
\end{proof}
\noindent 
At this point, we we have all elements to prove the following 
\begin{theorem}
\label{Theorem}
{\bf (Convergence of the MZ-Faber Expansion)}
Consider the linear dynamical system \eqref{lin},
the phase space function \eqref{PSF} and the projection 
operator \eqref{Chorin_projection}. The norm of 
the approximation error \eqref{Ent} satisfies\footnote{It can be shown 
that the upper bound in \eqref{ub} is always positive.}
\begin{align}
\|E_n(t)\|\leq K\left(\frac{q}{n+1}\right)^{n}
\frac{e^{t\beta}-e^{t(E+n)}}{\beta-E-n}\qquad t\geq 0,\quad n\geq 4q,
\label{ub}
\end{align}
where $n$ is the Faber polynomial order, while $q$,  $K$, $\beta$ and $E$ 
are suitable constants defined in the proof of the theorem.
\end{theorem}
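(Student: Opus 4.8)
The plan is to convert the operator-valued tail in \eqref{Ent} into a truncated Faber approximation of an \emph{ordinary} matrix exponential and then apply Lemma~\ref{lemma_error}. First I would invoke Lemma~\ref{lemma_matrix_op} with $p_k=\F_j$ (degree $j$) to rewrite each summand of the tail: for $u_0 = x_1(0)$,
\begin{equation*}
\P\L\,\F_j(\Q\L)\,\Q\L\, x_1(0) = g_j^F\, x_1(0) + f_j^F,\qquad
g_j^F = \bm b^{T}\F_j(\Mi_{11}^{T})\bm a,\quad
f_j^F = \big[\F_j(\Mi_{11}^{T})\Mi_{11}^{T}\bm a\big]\cdot\langle\bm x_{-1}(0)\rangle_{\rho_0}.
\end{equation*}
Acting on the left with the propagator $\P e^{s\L}$ and using $\P e^{s\L} x_1(0) = \langle x_1(s)\rangle_{\rho_0}$ together with $\P e^{s\L} f_j^F = f_j^F$ (since $f_j^F$ is a constant), the integrand of $E_n(t)$ becomes $\sum_{j=n+1}^{\infty} a_j(t-s)\big(g_j^F \langle x_1(s)\rangle_{\rho_0} + f_j^F\big)$.

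The key step is to recognize the two scalar tail sums as contractions of the Faber remainder of a matrix exponential. Choose $\Omega\subset\mathbb{C}$ convex, symmetric about the real axis, with capacity $\gamma$, containing $\sigma(\Mi_{11}^{T})$, and pick $q\ge\gamma$ with $FV(\Mi_{11}^{T})\subseteq\Omega(q)$; then the matrix Faber series converges, $\sum_{j=0}^{\infty} a_j(\tau)\F_j(\Mi_{11}^{T}) = e^{\tau\Mi_{11}^{T}}$, uniformly for $\tau$ in a bounded interval. Interchanging the (uniformly convergent) sum with the bounded linear maps $\bm b^{T}(\cdot)\bm a$ and $(\cdot)\Mi_{11}^{T}\bm a\cdot\langle\bm x_{-1}(0)\rangle_{\rho_0}$ and with $\int_0^t ds$, and writing $\bm e_{m}(\tau)\bm v = e^{\tau\Mi_{11}^{T}}\bm v - \bm P_{m-1}(\tau)\bm v$ as in Lemma~\ref{lemma_error} (so that $\sum_{j\ge m} a_j(\tau)\F_j(\Mi_{11}^{T})\bm v = \bm e_{m}(\tau)\bm v$), the integrand reduces, with $m = n+1$, to
\begin{equation*}
\big(\bm b^{T}\bm e_{n+1}(t-s)\bm a\big)\,\langle x_1(s)\rangle_{\rho_0} + \big(\bm e_{n+1}(t-s)\Mi_{11}^{T}\bm a\big)\cdot\langle\bm x_{-1}(0)\rangle_{\rho_0}.
\end{equation*}

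Now Lemma~\ref{lemma_error} applies with $\bm A = \Mi_{11}^{T}$ and $E = 1+\psi(\gamma)$, giving for $n\ge 4q$ the bounds $\|\bm e_{n+1}(t-s)\bm a\|,\ \|\bm e_{n+1}(t-s)\Mi_{11}^{T}\bm a\| \le C_3\, e^{(t-s)E}\big(q e^{t-s}/(n+1)\big)^{n}$ (with $C_3$ depending on the respective vector). Combining this with Cauchy--Schwarz, bounding $|\langle x_1(s)\rangle_{\rho_0}| \le \tilde K e^{s\beta}$ by linearity of \eqref{lin} (since $\langle\bm x(s)\rangle_{\rho_0} = e^{s\bm A}\langle\bm x_0\rangle_{\rho_0}$, with $\beta$ the numerical abscissa of $\bm A$, enlarged to be $\ge 0$ if necessary) and absorbing the constant $\|\langle\bm x_{-1}(0)\rangle_{\rho_0}\|$, the integrand of $\|E_n(t)\|$ is at most $K\big(q/(n+1)\big)^{n} e^{s\beta} e^{(t-s)(E+n)}$. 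Integrating over $s\in[0,t]$ and using $\int_0^t e^{s\beta} e^{(t-s)(E+n)}\,ds = \big(e^{t\beta} - e^{t(E+n)}\big)/(\beta - E - n)$ gives \eqref{ub}. I expect the only delicate point to be the passage from the operator identity \eqref{expQL} to the matrix identity $\sum_j a_j(\tau)\F_j(\Mi_{11}^{T}) = e^{\tau\Mi_{11}^{T}}$ together with the accompanying sum/integral interchange, i.e.\ securing uniform convergence of the matrix Faber series on $FV(\Mi_{11}^{T})$ via a suitable choice of $\Omega$ and $q$; the remainder is identification of the constants $K$, $q$, $\beta$, $E$ and an elementary integral.
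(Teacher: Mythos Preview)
Your proposal is correct and follows essentially the same route as the paper: reduce each summand via Lemma~\ref{lemma_matrix_op} to Faber polynomials of $\Mi_{11}^{T}$, identify the tail as the Faber remainder $\bm e_{n+1}(t-s)$ of the matrix exponential, bound it with Lemma~\ref{lemma_error}, and integrate against an exponential growth factor in $s$. The only cosmetic difference is that the paper pulls the propagator out via the operator-norm estimate $\|\P e^{s\L}\|\le \|\P\|We^{s\beta}$ before invoking Cauchy--Schwarz, whereas you first evaluate $\P e^{s\L}$ on the affine expression $g_j^F x_1(0)+f_j^F$ and then bound the resulting scalar $\langle x_1(s)\rangle_{\rho_0}$; both lead to the same $e^{s\beta}$ factor and the same elementary integral.
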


\begin{proof}
We aim at finding an upper bound for  
\begin{align}
\|E_n(t)\|&=\left\|\int_{0}^t\P e^{s\L}\sum_{j=n+1}^\infty  a_j(t-s)
 \P\L \F_j(\Q\L)\Q\L x_{1}(0)ds\right\|.
 \label{pr1}
\end{align}
To this end, we fist notice that quantity 
$\F_j(\Q\L)\Q\L$ is a $(j+1)$-th order operator 
polynomial in $\Q\L$. Thus, we can apply 
Lemma \ref{lemma_matrix_op} to obtain
\begin{align}\label{formula_DP}
\P\L \F_j(\Q\L)\Q\L x_{1}(0)=
\left[\bm b \cdot \F_j(\Mi_{11}^T) \bm a \right] 
x_{1}(0)+ \left[\F_j(\Mi_{11}^T)\Mi_{11}^T\bm a\right]\cdot \langle \bm {x}_{-1}(0)\rangle_{\rho_0}.
\end{align}
Let us now set
\begin{align}
\eta_n(t-s)=\left\|\sum_{j=n+1}^{\infty}a_j(t-s)
\P\L \F_j(\Q\L)\Q\L x_{1}(0)\right\|.
\label{eta}
\end{align}
By using \eqref{formula_DP} and the Cauchy-Schwartz inequality 
we have  
\begin{align}
\eta_n(t-s)\leq 
C_4\left\|\sum_{j=n+1}^{\infty}a_j(t-s)\F_j({\bm M}_{11}^T) \bm a\right\|+C_5
\left\|\sum_{j=n+1}^{\infty}a_j(t-s)\F_j(\bm M_{11}^T) \Mi_{11}^T\bm a\right\|,
\label{sums}
\end{align}
where $C_4=\|\bm b^T\||x_1(0)|$, $C_5=\|\langle\bm {x}_{-1}(0)\rangle_{\rho_0}\|$. 
The two sums in \eqref{sums} represent the error in the Faber approximation 
of the matrix exponential $e^{(t-s){\bm M}^T_{11}}$. In fact, 
\begin{align}
\bm e_{(n+1)}(t-s)=e^{(t-s){\bm M}^T_{11}} -\sum_{j=1}^{n}a_j(t-s)
\F_j({\bm M}^T_{11})
=\sum_{j=n+1}^{\infty}a_j(t-s)\F_j(\bm M^T_{11}).
\end{align}
Combining \eqref{pr1}, \eqref{eta}, \eqref{sums} and \eqref{estimation} yields  
\begin{align}
\|E_n(t)\|&\leq\int_{0}^t \eta_n(t-s)\left\|\P e^{s\L}\right\|ds,\nonumber\\
&\leq \int_0^t
\left(C_4\left\|\bm e_{(n+1)}(t-s)\bm a\right\|+C_5\left\|\bm e_{(n+1)}(t-s)\bm 
M_{11}^T\bm a \right\|\right)\|\P e^{s\L}\|ds,\nonumber\\
&\leq\int_0^t Ke^{s\beta}e^{(t-s)(E+n)}\left(\frac{q}{n+1}\right)^{n}ds,
\nonumber\\
&\leq K\left(\frac{q}{n+1}\right)^{n}\frac{e^{t\beta}-e^{t(E+n)}}{\beta-E-n}
\quad n\geq 4q.
\label{fin} 
\end{align}
Here we used the semigroup estimation $\|e^{s\L}\|\leq W  e^{s\beta}$. 
The constants in \eqref{fin} are   
\begin{equation}
K=\|\P\|C_6W, \qquad C_6=2\max\{C_4C_3,C_5C_3^*\}, \qquad E=1+\psi(\gamma),
\end{equation}
where 
\begin{equation}
C_3=8e\|\bm a\|q\left(1+\frac{1}{8q}\right),\qquad 
C_3^*=8e\|\bm M_{11}^T \bm  a\|q\left(1+\frac{1}{8q}\right).
\end{equation}
It can be shown that the upper bound \eqref{fin} 
is always positive, and goes to zero as we send the 
Faber polynomial order $n$ to infinity. This implies that 
\begin{equation}
\lim_{n\rightarrow \infty} \left\|E_n(t)\right\|=0,
\end{equation}
i.e., the MZ-Faber expansion converges for 
any finite time $t\geq 0$. This completes the proof. 
\end{proof}
Next, we estimate the convergence rate of the MZ-Faber expansion. 
To this end, let us define 
\begin{equation}
R(t,n)=K\left(\frac{q}{n+1}\right)^{n}\frac{e^{t\beta}-e^{t(E+n)}}{\beta-E-n},\quad n\geq 4q
\label{R}
\end{equation}
the be the upper bound \eqref{fin}. We have the following 
\begin{corollary}
{\bf (Convergence Rate of the MZ-Faber Expansion) }
With the same the notation of Theorem \ref{Theorem}, 
the MZ-Faber expansion converges at least $R$-superlinearly 
with the polynomial order, i.e.
\begin{align}
\lim_{n\rightarrow \infty}\frac{R(t,n+1)}{R(t,n)}=0
\label{Rconv}
\end{align}
for any finite time $t\geq 0$.
\end{corollary}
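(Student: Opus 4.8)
The plan is to compute the ratio $R(t,n+1)/R(t,n)$ directly from the closed-form expression \eqref{R} and show it tends to zero. Writing
\[
\frac{R(t,n+1)}{R(t,n)} = \frac{q}{n+2}\cdot\left(\frac{n+1}{n+2}\right)^{n}\cdot
\frac{e^{t\beta}-e^{t(E+n+1)}}{e^{t\beta}-e^{t(E+n)}}\cdot\frac{\beta-E-n}{\beta-E-n-1},
\]
we examine each of the four factors as $n\to\infty$. The first factor $q/(n+2)\to 0$. The second factor $\left((n+1)/(n+2)\right)^{n}\to e^{-1}$, a bounded positive constant. The last factor $(\beta-E-n)/(\beta-E-n-1)\to 1$. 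So the only delicate point is the third factor, the ratio of the two time-dependent terms, which for $t>0$ and large $n$ behaves like $e^{t(E+n+1)}/e^{t(E+n)} = e^{t}$ (since $e^{t(E+n)}$ dominates the bounded term $e^{t\beta}$ once $n$ is large); hence this factor converges to $e^{t}$, again a finite constant for any fixed $t\geq 0$. (For $t=0$ the statement is trivial since $R(0,n)=0$, so one may simply take $t>0$ in the argument, or note that the limit in \eqref{Rconv} is to be read for fixed finite $t$.)

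The key steps, in order, are: first, substitute \eqref{R} into the ratio and cancel the common factor $K$; second, split the resulting expression into the four elementary factors above; third, evaluate the limit of each factor separately, the nontrivial cases being $\left((n+1)/(n+2)\right)^n\to e^{-1}$ (standard) and the time-ratio factor $\to e^t$ (which requires noting that $e^{t(E+n)}\to\infty$ faster than the fixed constant $e^{t\beta}$, so numerator and denominator are each asymptotic to $-e^{t(E+n+1)}$ and $-e^{t(E+n)}$ respectively); fourth, multiply the four limits together to obtain $0\cdot e^{-1}\cdot e^{t}\cdot 1 = 0$. Since a product of a sequence tending to zero with three sequences tending to finite limits tends to zero, \eqref{Rconv} follows.

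The main obstacle is purely bookkeeping: one must make sure the time-dependent factor $\bigl(e^{t\beta}-e^{t(E+n+1)}\bigr)/\bigl(e^{t\beta}-e^{t(E+n)}\bigr)$ is handled carefully, because for small $n$ the sign of $\beta-E-n$ and of the numerator/denominator depends on whether $E+n$ exceeds $\beta$; but since we only need the limit as $n\to\infty$, eventually $E+n>\beta$ and $E+n>\beta$, and both the numerator and denominator of $R$ are negative while their quotient is positive, matching the footnote in Theorem \ref{Theorem} that the bound \eqref{fin} is positive. Once $n$ is large enough that $E+n-\beta>0$, dividing numerator and denominator by $-e^{t(E+n)}$ gives $\bigl(e^{t}-e^{t(\beta-E-n-1)}\bigr)/\bigl(1-e^{t(\beta-E-n)}\bigr)\to e^{t}/1 = e^{t}$, which makes the estimate rigorous. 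I expect no genuine difficulty beyond this; the $R$-superlinearity is essentially immediate from the $1/(n+2)$ prefactor, and the exponential-in-$n$ growth $e^{t(E+n)}$ in the numerator of $R$ only contributes the harmless constant factor $e^{t}$ to the ratio, not enough to overcome the $1/(n+2)$ decay.
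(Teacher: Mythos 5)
Your proposal is correct and follows essentially the same route as the paper: factor the ratio $R(t,n+1)/R(t,n)$ into the four elementary pieces, observe that the exponential-ratio factor tends to $e^{t}$ and the linear-ratio factor to $1$, and conclude from the surviving $q/(n+2)$ prefactor that the limit is zero. Your extra care with the sign analysis of the time-dependent factor and the degenerate case $t=0$ is a small refinement the paper glosses over, but it does not change the argument.
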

\begin{proof}
By a direct calculation it is easy to verify that \eqref{Rconv} holds true. In fact,
\begin{align}
\frac{R(t,n+1)}{R(t,n)}=\frac{q}{n+2}\left(\frac{n+1}{n+2}\right)^n\frac{e^{t\beta}-e^{t(E+n+1)}}{e^{t\beta}-e^{t(n+E)}}\frac{\beta-E-n}{\beta-E-(n+1)}
\end{align}
Therefore\footnote{We recall that
\begin{equation}
\lim_{n\rightarrow+\infty}\left(\frac{n+1}{n+2}\right)^n=\frac{1}{e}.
\end{equation}
 },
\begin{align}
\lim_{n\rightarrow+\infty}\frac{R(t,n+1)}{R(t,n)}=
\lim_{n\rightarrow+\infty}\frac{qe^{t}}{n+2}\left(\frac{n+1}{n+2}\right)^n=0, \qquad t<\infty.
\label{rateFab}
\end{align}
\end{proof}
By using asymptotic analysis we can show theoretically that
also the MZ-Dyson expansion converge $R$-superlinearly. To this end, 
let us define the MZ-Dyson approximation error  
\begin{equation}
E_n(t)=\int_0^t \P e^{s\L}\P\L e^{(t-s)\Q\L}\Q\L u_0 ds
-\underbrace{\sum_{j=0}^n\int_0^t  a_j(t-s)\P e^{s\L}
 \P\L (\Q\L)^j\Q\L u_0ds}_{\textrm{MZ-Dyson series}}.
 \label{mzdys}
\end{equation}
By following the same steps we used in the proof of Theorem \ref{Theorem}, 
we can bound the norm of \eqref{mzdys}  as 
\begin{equation}
\left\|E_n(t)\right\|\leq F(t,n).
\end{equation}
where 
\begin{equation}
F(t,n) = C\frac{(At)^n}{(n+1)!}\qquad A, C \geq 0.
\end{equation}
Such upper bound plays the same role as $R(t,n)$ in the MZ-Faber expansion of 
$E_n(t)$ (see Eqs. \eqref{fin} and \eqref{R}). 
Taking the ratio between $F(t,n+1)$ and $F(t,n)$ we obtain 
\begin{equation}
\lim_{n\rightarrow \infty}\frac{F(t,n+1)}{F(t,n)} =\lim_{n\rightarrow \infty}\frac{At}{n+2}=0.
\label{rateDys} 
\end{equation}

\section{Numerical Examples}
\label{sec:application}
In this section, we demonstrate the accuracy and 
effectiveness of the MZ-Dyson and MZ-Faber expansion 
methods we developed in this paper in applications 
to prototype problems involving random wave propagation 
and harmonic chains of oscillators interacting on a Bethe lattice.

\subsection{Random Wave Propagation}
\label{sec:waves}
Consider the following initial/boundary value problem for 
the wave equation in an annulus with radii $r_1=1$ and $r_2=11$ 
\begin{align}
\label{wave_annulus}
\frac{\partial^2w}{\partial t^2}=\frac{\partial ^2w}{\partial r^2}+\frac{1}{r}\frac{\partial w}{\partial r}+\frac{1}{r^2}\frac{\partial ^2w}{\partial \theta^2}, 
\end{align}
where 
\begin{align}
w(t,r_1,\theta)=0, \qquad w(t,r_2,\theta)=0\qquad w(0,r,\theta)=w_0(r,\theta;\omega),
\qquad  \frac{\partial w(0,r,\theta)}{\partial t}=0.
\end{align}
The field $w(t,r,\theta)$ represents the wave amplitude 
at time $t$, while $w_0(r,\theta;\omega)$ is the wave field at initial time, 
which is set to be random. We seek the for an approximation of the solution 
$w(t,r,\theta)$ in the form 
\begin{align}
\label{w_form}
w_N(t,r,\theta)=\sum_{n=1}^N\widehat w_n(t)\psi_n(r,\theta),
\end{align}
where $\psi_n(r,\theta)$ are standard trigonometric functions. 
The random wave field at initial time is represented as 
\begin{equation}
w_0(r,\theta;\omega)=\sum_{n=1}^M\widehat w_n(0)\psi_n(r,\theta),\qquad M\leq N,
\label{IC}
\end{equation}
where $\widehat w_n(0)$ are i.i.d Gaussian random variables.
We substitute \eqref{w_form} into \eqref{wave_annulus} and impose 
that the residual is orthogonal to the space spanned by 
the basis $\{\psi_1,...,\psi_N\}$  \cite{Hesthaven}. 
This yields the linear system 
\begin{align}\label{equ:mode_space1}
\frac{d^2}{dt^2}\widehat {\bm w}(t)=\bm A\widehat{ \bm w}(t),
\end{align}
where $\bm A$ is an $N\times N$ matrix with entries
\begin{equation}
A_{mn}=\frac{\displaystyle \int_{r_1}^{r_2}\int_{0}^{2\pi}\left(\frac{\partial ^2\psi_n}{\partial r^2}+\frac{1}{r}\frac{\partial \psi_n}{\partial r}+\frac{1}{r^2}\frac{\partial ^2\psi_n}{\partial \theta^2}\right)\psi_m drd\theta}{
\displaystyle \int_{r_1}^{r_2}\int_{0}^{2\pi}\psi_m^2 drd\theta}.
\end{equation}
We are interested in building  a {\em convergent} 
reduced-order model for the wave amplitude at a 
specific point within the annulus, e.g., where we 
placed a sensor. To this end, we transform the 
system \eqref{equ:mode_space1} form the modal 
space to the nodal space defined by an interpolant of 
at $N$ collocation points. 
Such transformation can be easily defined by 
evaluating \eqref{w_form} at a set of distinct 
collocation nodes $\bm x_n=(r_{i(n)},\theta_{j(n)})$ 
($n=1,...,N$) within the annulus. This yields 
\begin{align}\label{tranformation}
\bm w(t)=\bm \Psi\widehat {\bm w}(t),
\end{align}
where $\bm w(t)=[w(t,\bm x_1),...,w(t,\bm x_N)]^T$, while $\bm \Psi$ is the 
$N\times N$ transformation matrix defined as 
\begin{align*}
\bm \Psi=\left[
\begin{matrix}
\psi_1(\bm x_1)&\dots&\psi_N(\bm x_1)\\
\vdots&&\vdots\\
\psi_1(\bm x_N)&\dots&\psi_N(\bm x_N)
\end{matrix}
\right].
\end{align*}
Differentiating \eqref{tranformation} with respect to time 
we obtain 
\begin{align}\label{equ:nodel}
\frac{d^2}{dt^2}\bm w(t)=\bm \Psi \Ai\bm \Psi^{-1}\bm w(t).
\end{align}
This system evolves from the random initial state 
\begin{equation}
\bm w(0)=\bm \Psi \widehat {\bm{w}}(0), \qquad \frac{d \bm w(0)}{dt}=0.
\end{equation}
In Figure \ref{fig:wave_annulus} we plot the mean solution of the 
random wave equation for initial conditions in the form \eqref{IC} 
with different number of modes. 
\begin{figure}
\centerline{
\includegraphics[height=4cm]{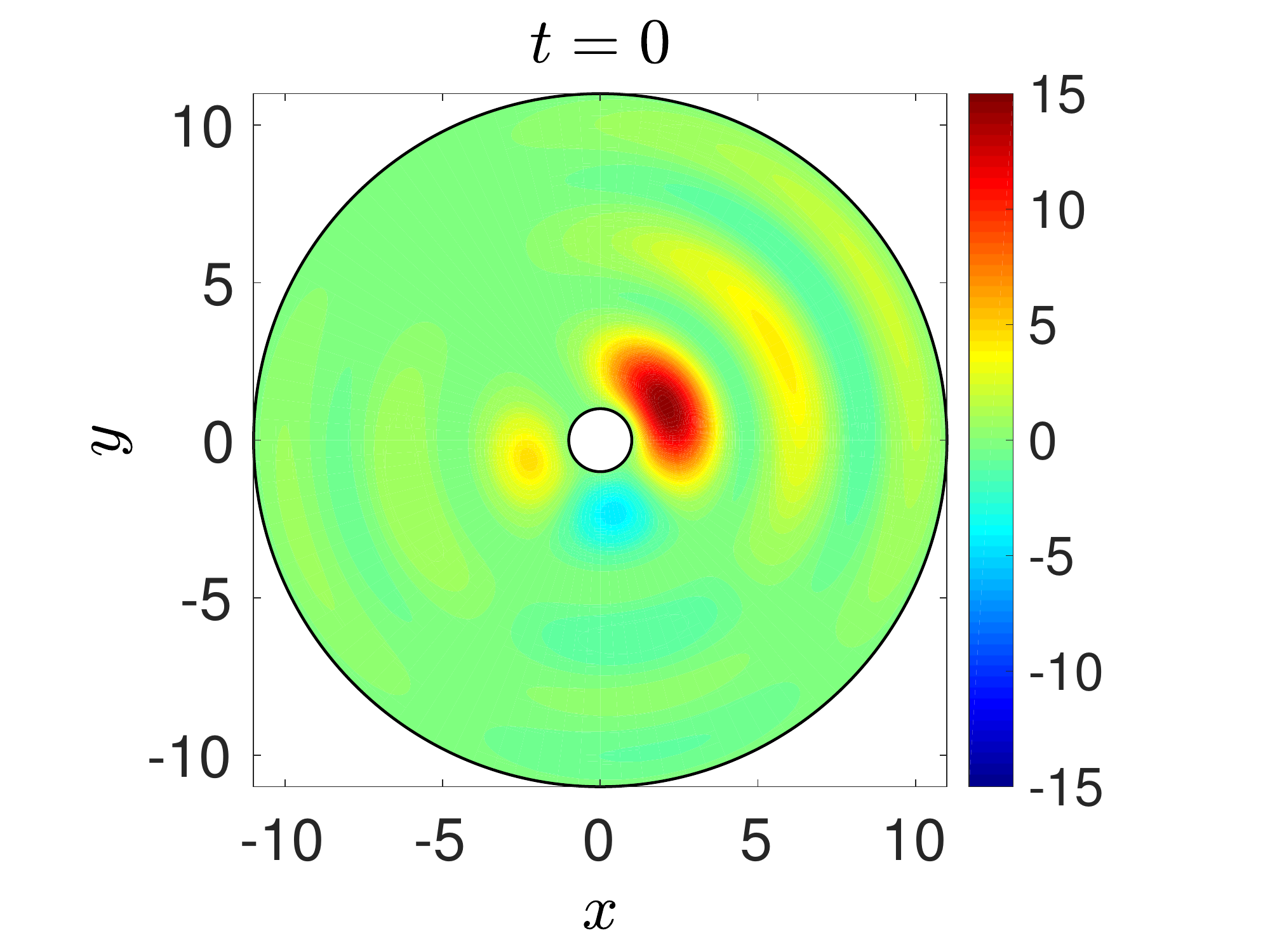} 
\includegraphics[height=4cm]{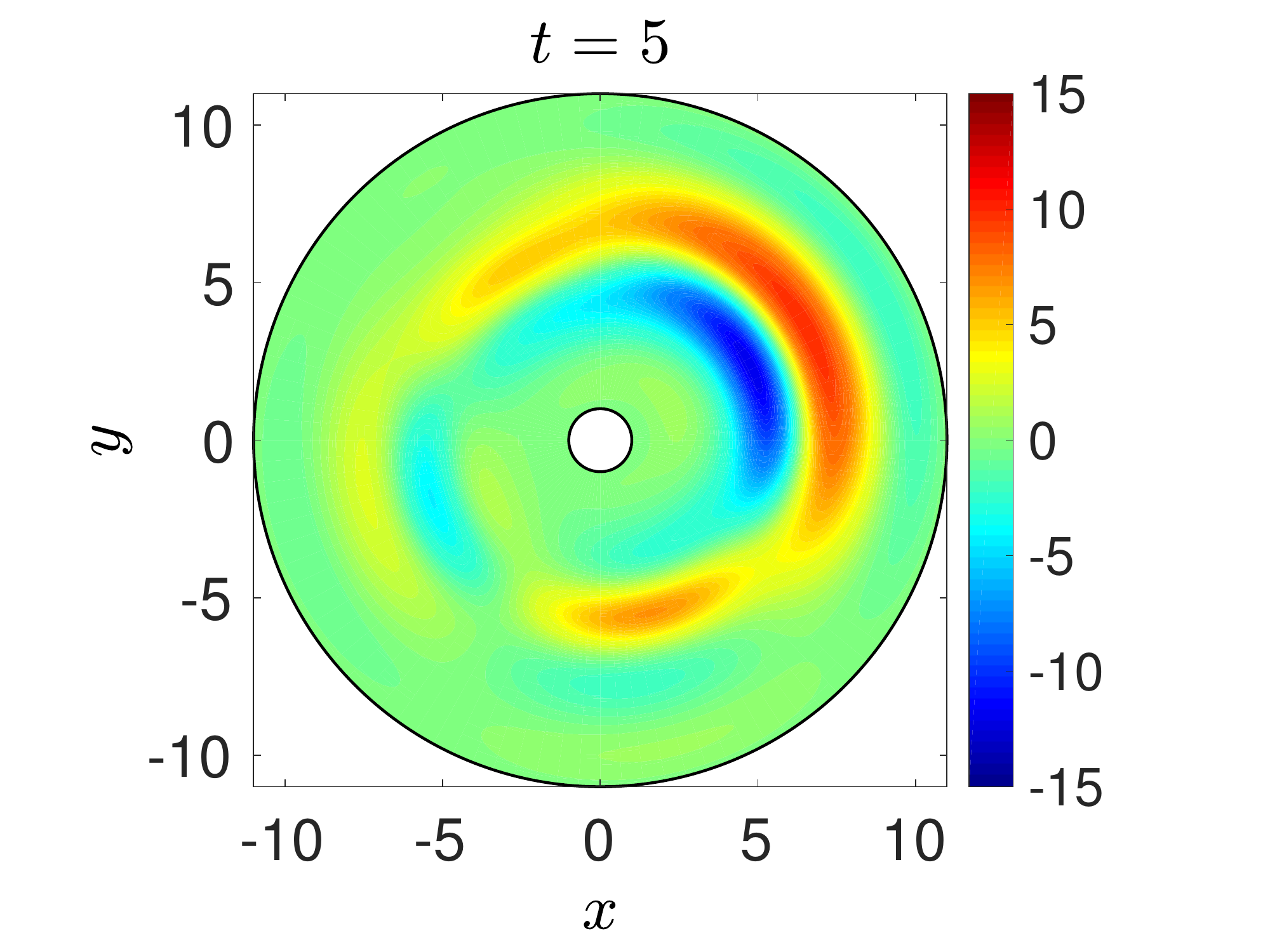}
\includegraphics[height=4cm]{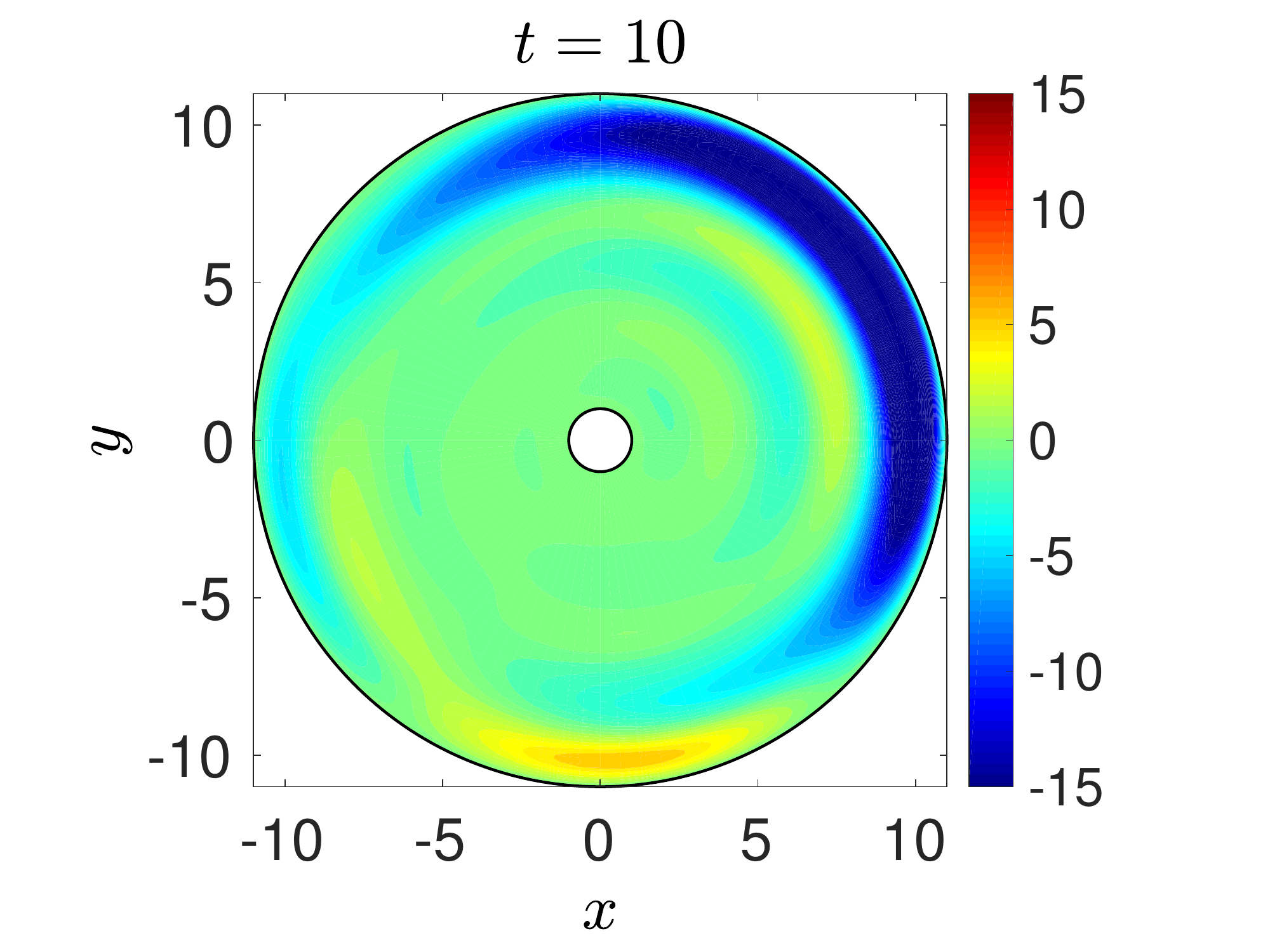}}
\centerline{
\includegraphics[height=4cm]{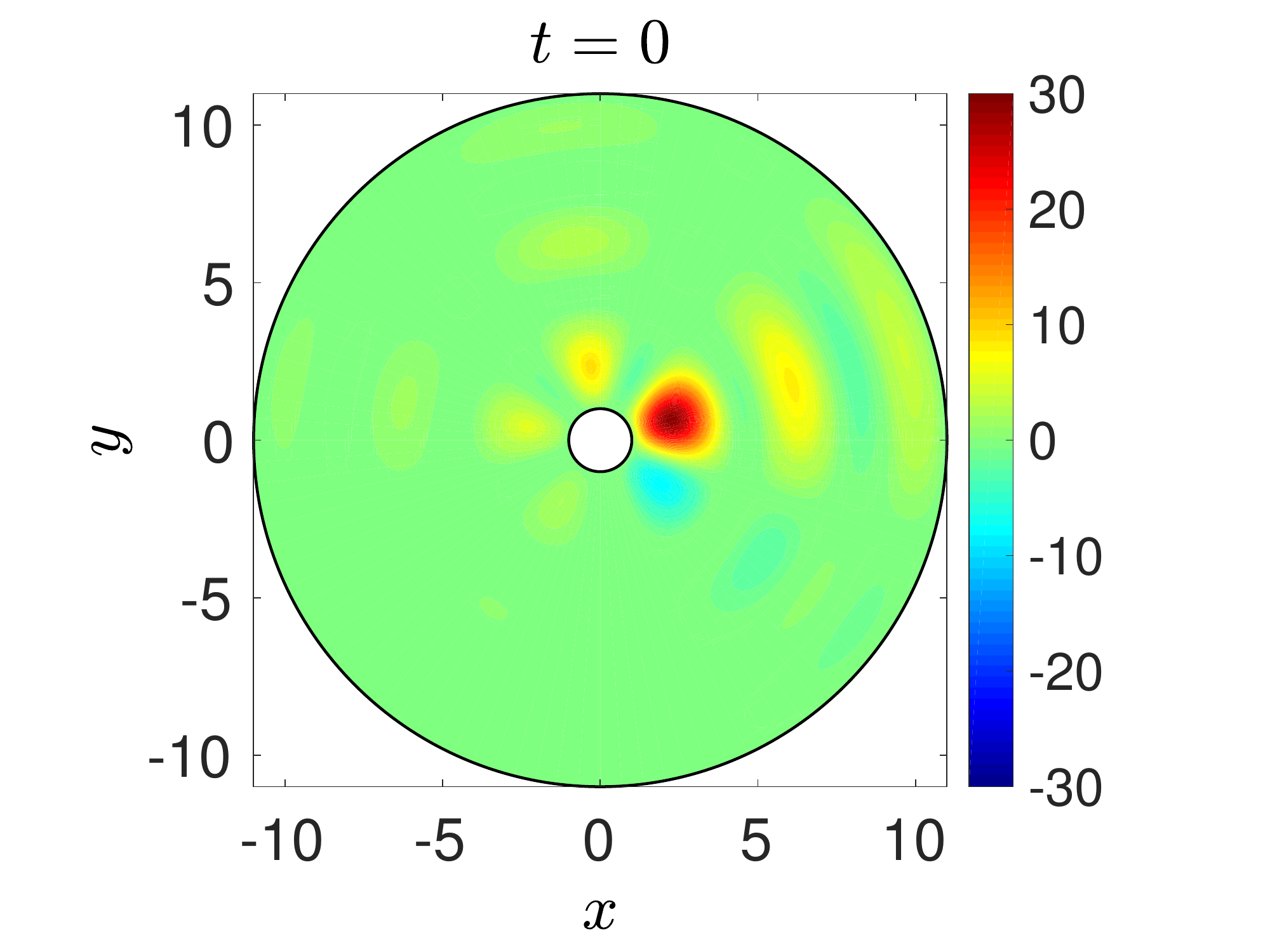} 
\includegraphics[height=4cm]{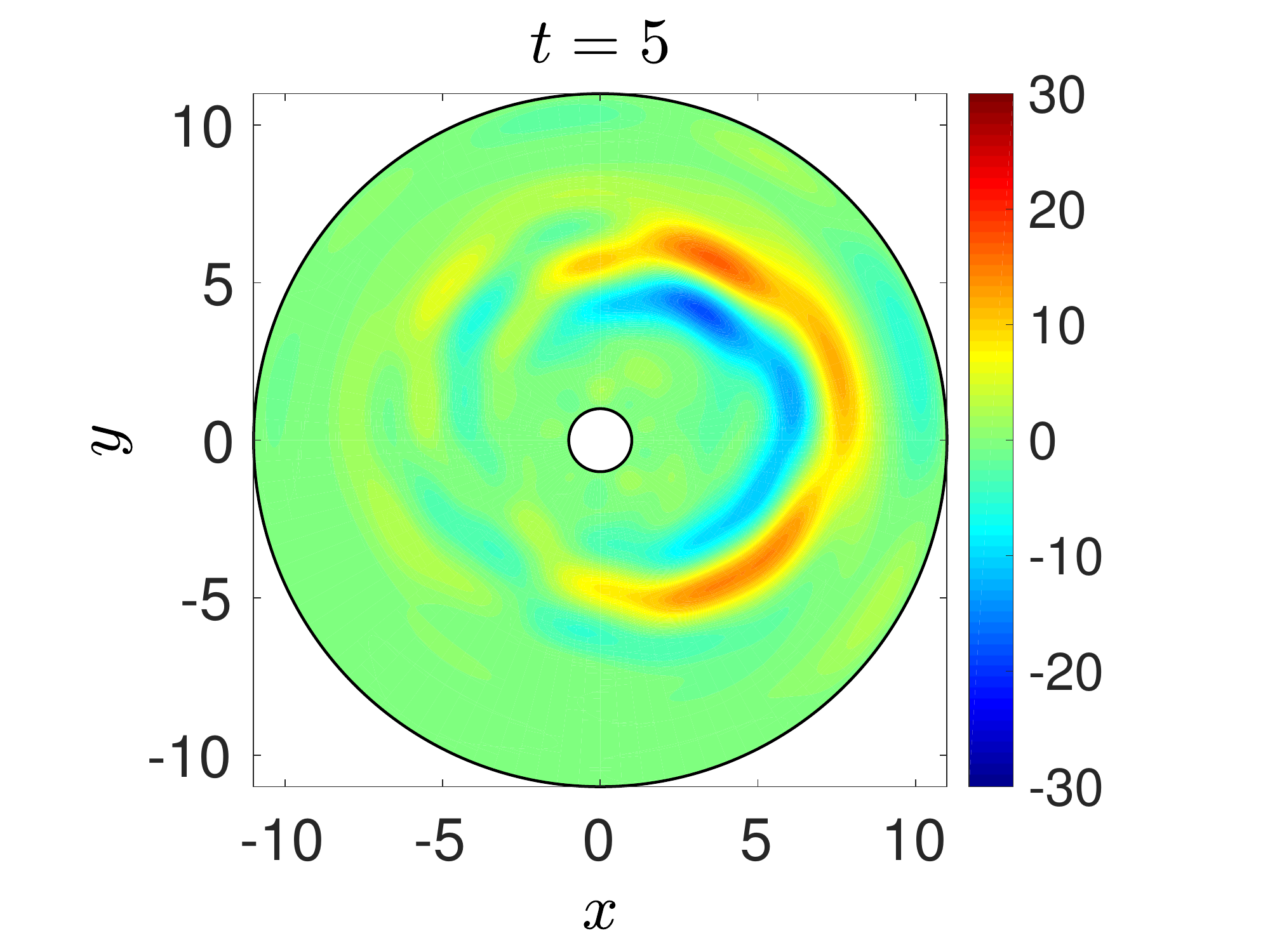}
\includegraphics[height=4cm]{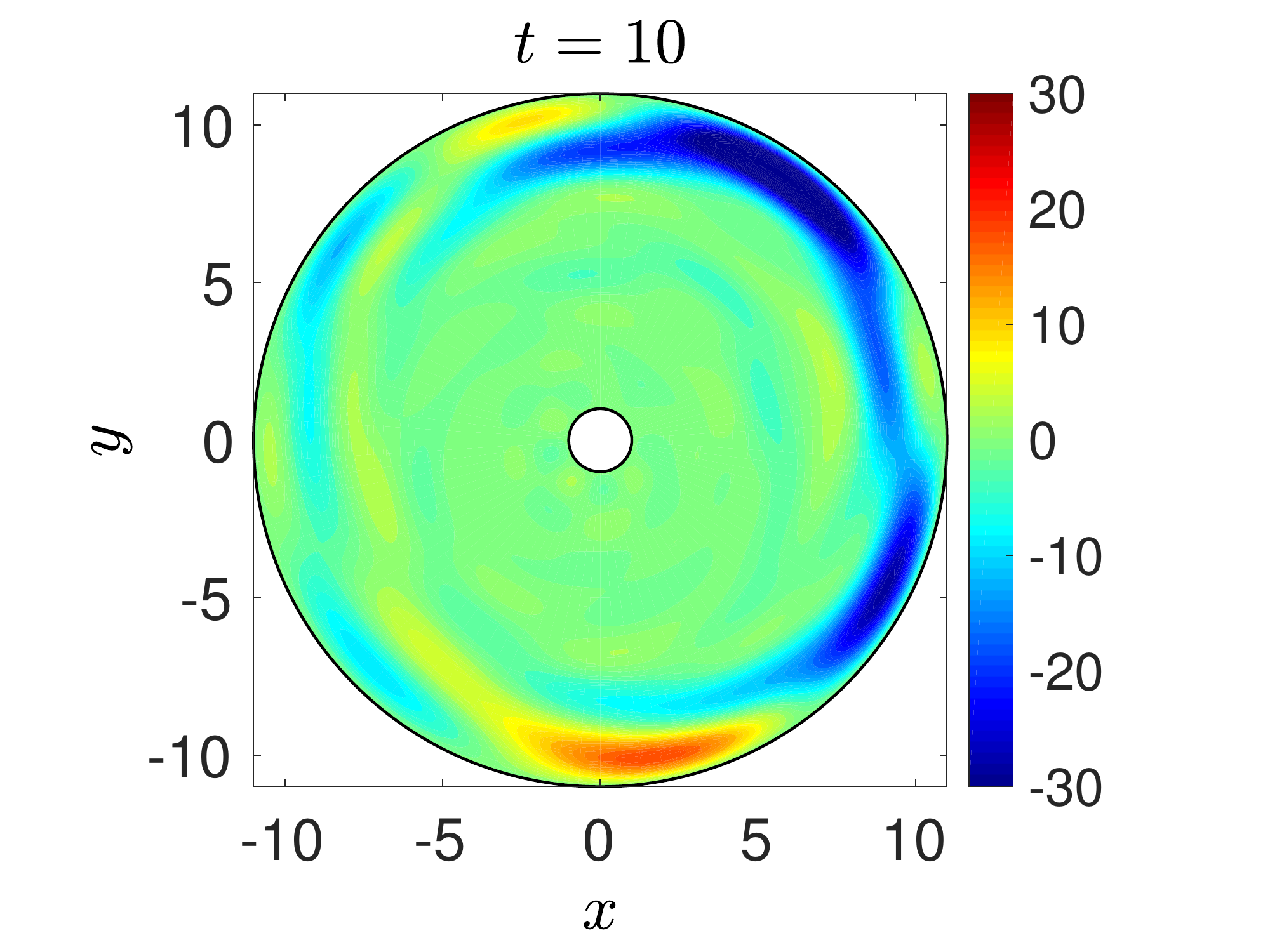}} 
\caption{Mean solution of the random wave equation in the annulus. We consider 
two random initial conditions in the form \eqref{IC}, with different number of 
modes: $M=25$ (first row), $M=50$ (second row).}
\label{fig:wave_annulus}
\end{figure}

\paragraph{Generalized Langevin Equation for the Mean Wave Amplitude}

We are interested in building  a convergent 
reduced-order model for the mean wave amplitude 
at a specific point within the annulus, e.g., where we 
would like to place a sensor. Such a dynamical system can 
be constructed  by using the Mori-Zwanzig formulation 
and Chorin's projection operator \eqref{Chorin_projection}. 
In particular, let us define the quantity of interest as
$u(\bm w) = w_1(t)$, i.e., the wave amplitude 
at the spatial point $(r,\theta)=(1.1,0.1)$. 
The exact evolution equation for mean of $w_1(t)$ was 
derived in Section \ref{sec:MZmean}, and it is 
rewritten hereafter for convenience 
\begin{align}\label{equ:mean_evolution}
\frac{d}{d t}\langle w_1(t)\rangle_{\rho_0}&=a\langle w_1(t)\rangle_{\rho_0}+b+
\int_0^t g(t-s)\langle w_1(s)\rangle_{\rho_0}ds+\int_0^t f(t-s)ds.
\end{align}
We recall that 
\begin{align*}
\P\L w_1(0) = & B_{11}w_1(0)+\bm a\cdot\bm \left<\bm w_{-1}(0)\right>_{\rho_0} \\
= & aw_1(0)+b,
\end{align*} 
and $\L=\left[\bm \Psi \bm A\bm \Psi^{-1} \bm w\right] \cdot \nabla$. 
The memory kernel $g(t-s)$ and the function $f(t-s)$ 
can be expanded by using in any of the operator series 
summarized in Table \ref{tab:1}. For instance, if we employ 
MZ-Faber series we obtain
\begin{align}
g(t-s) =\sum_{j=0}^n g^F_j e^{tc_0} \frac{J_j(2t\sqrt{-c_1})}{(\sqrt{-c_1})^j},
\qquad
f(t-s) = \sum_{j=0}^n f^F_j e^{tc_0} \frac{J_j(2t\sqrt{-c_1})}{(\sqrt{-c_1})^j}.
\label{MZFk}
\end{align}
The coefficients $g_j^F$ and $f_j^F$ are explicitly obtained as 
\begin{align}
g_j^F=\bm b^{T}\F_j\left(\bm{M}_{11}^{T}\right)\bm a, \qquad 
f_j^F = \left[ \F_j\left(\bm {M}_{11}^{T}\right)\Mi_{11}^T\bm a\right] \cdot \langle\bm {w}_{-1}(0)\rangle_{\rho_0},
\end{align}
where
\begin{align*}
\bm {w}_{-1}(0)=[w_2(0), w_3(0), \ldots, w_N(0)]^T,\qquad 
\bm a = [ B_{12}, \dots, B_{1N}]^T,\qquad 
\bm b&= [B_{21},\dots, B_{N1}]^T,
\end{align*}
 $\bm B=  \bm \Psi\bm A\bm \Psi^{-1}$ and $\Mi_{11}$ is 
 the matrix obtained from $\bm B$ by removing the 
 first row and the first column. In Figure \ref{fig:memory}
 we study convergence of MZ-Dyson and
 MZ-Faber series expansions of the memory kernel.
\begin{figure}[t]
\centerline{\hspace{0.5cm}MZ-Dyson\hspace{6.5cm}MZ-Faber}
\centerline{
\includegraphics[height=6cm]{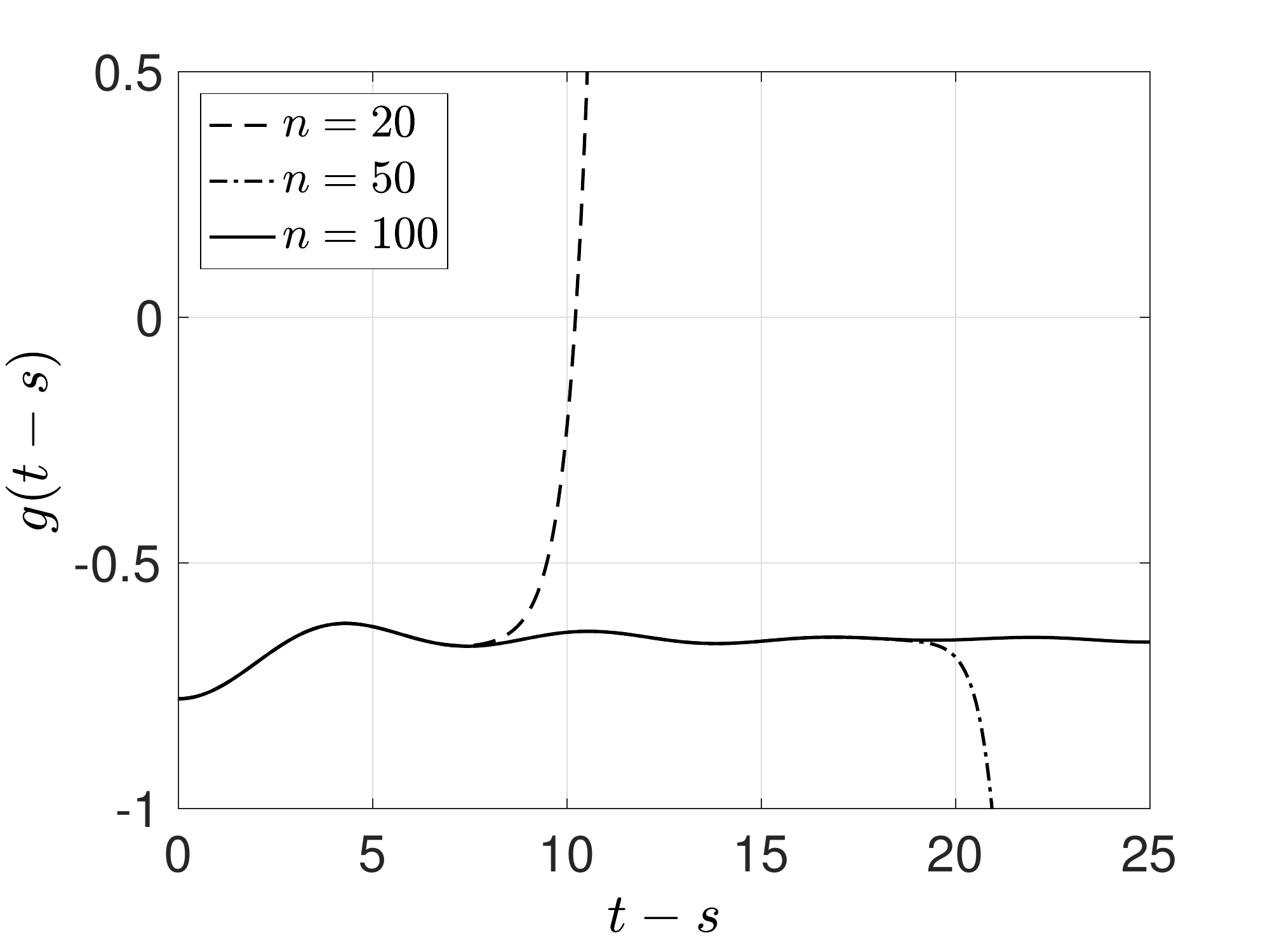} 
\includegraphics[height=6cm]{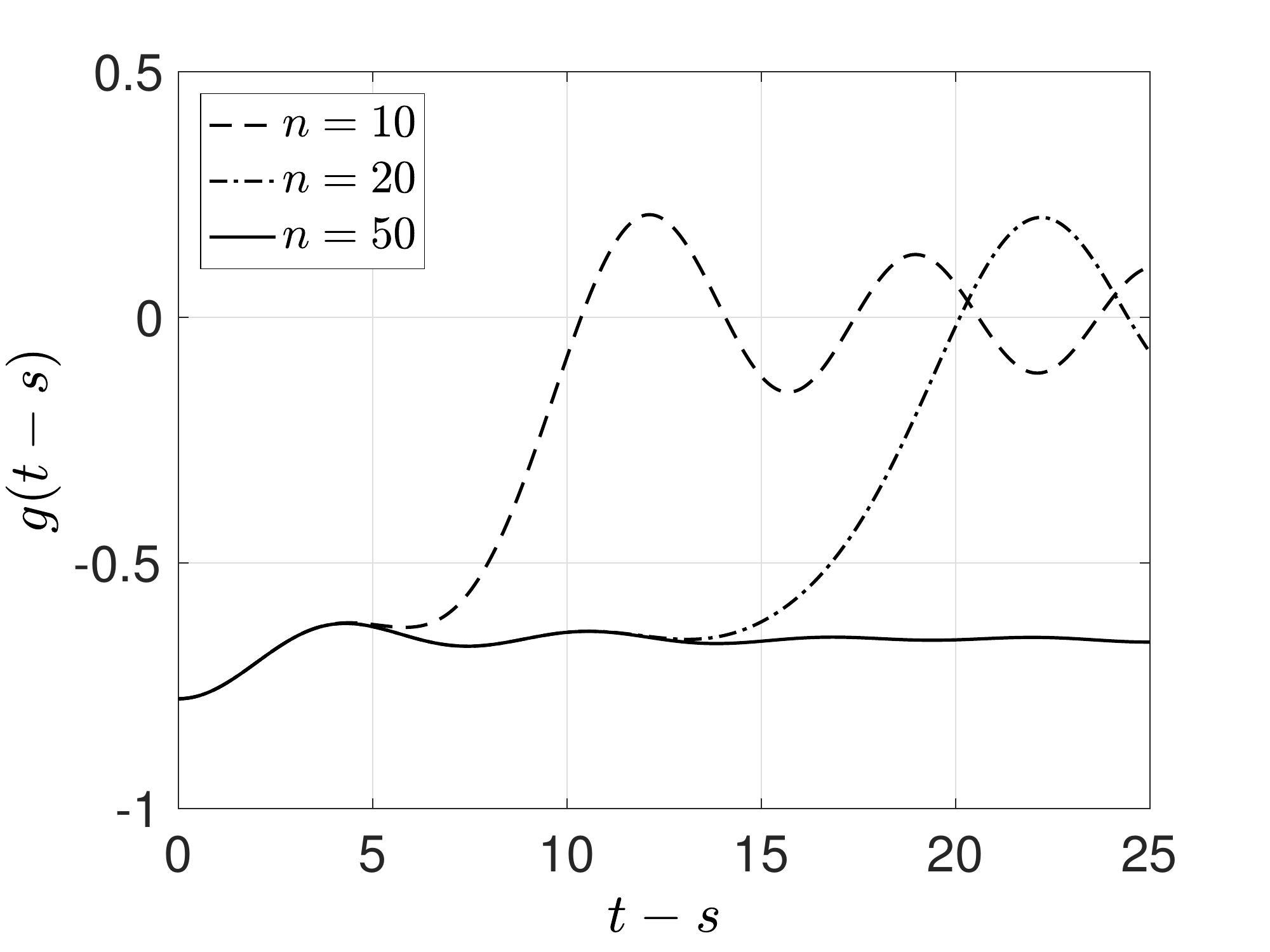} 
}
\caption{Dyson and Faber expansions of 
the Mori-Zwanzig memory kernel $g(t-s)$ in equation \eqref{MZFk}. 
Shown are results for different polynomial orders $n$. 
It is seen that the MZ-Faber series converges faster 
that the MZ-Dyson series.}
\label{fig:memory}
\end{figure}
In Figure \ref{fig:wave_MZ} we study the accuracy of the MZ-Dyson 
and the MZ-Faber expansions in representing the mean wave solution 
as a function of the polynomial order $n$. 
To this end, we solve \eqref{equ:mean_evolution} numerically 
a linear multi-step (explicit) time integration 
scheme (3rd-order Adams-Bashforth) combined with a 
trapezoidal rule to discretize the memory integral.
As easily seen, that the MZ-Faber expansion 
converges faster than the MZ-Dyson expansion.

\begin{figure}[t]
\centerline{Mean Wave Amplitude at $(r,\theta)=(1.1,0.1)$}
\centerline{
\includegraphics[height=6.3cm]{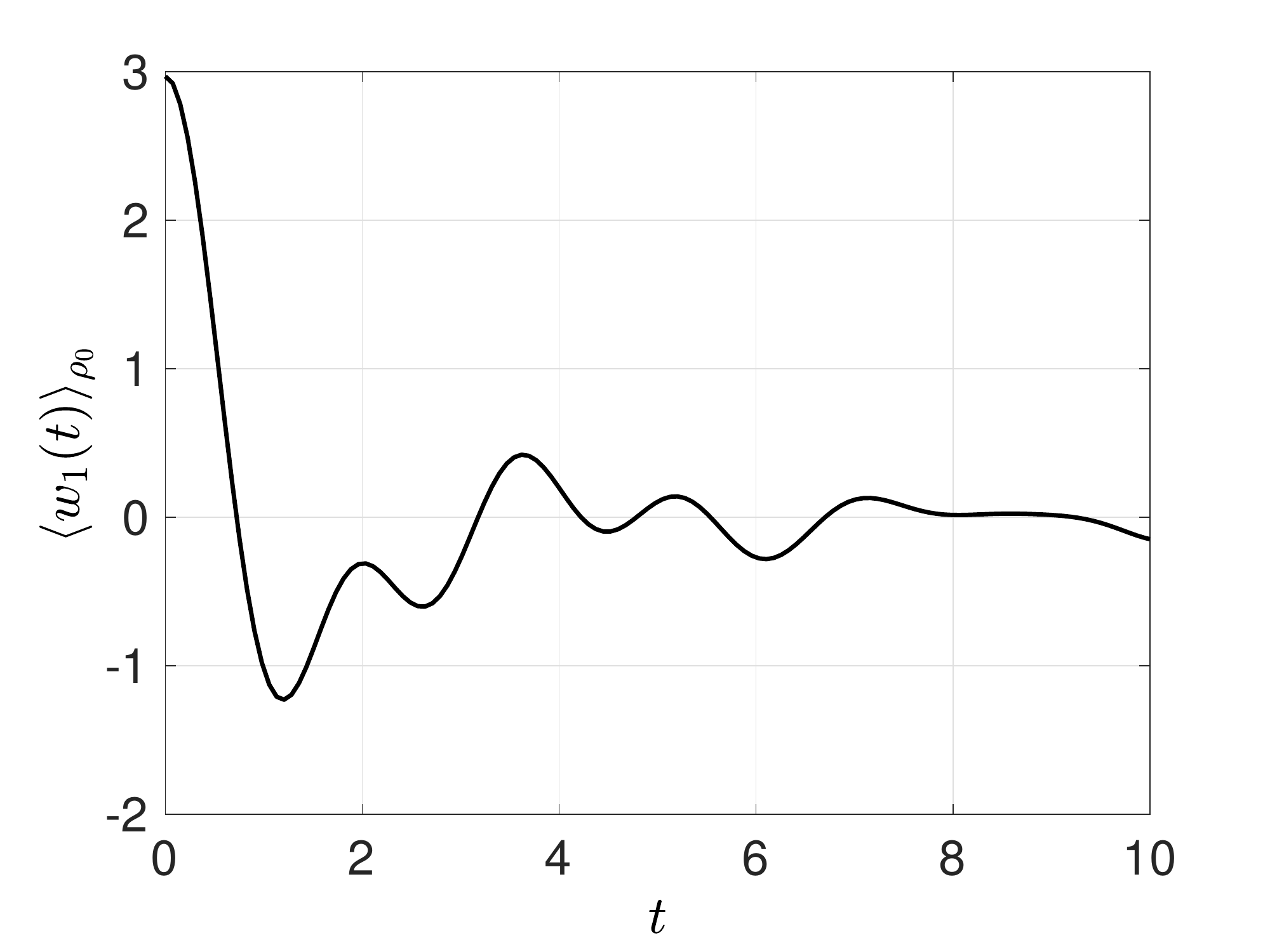}
}
\centerline{\hspace{0.5cm}MZ-Dyson Error\hspace{5.5cm}MZ-Faber Error}
\centerline{
\includegraphics[height=6cm]{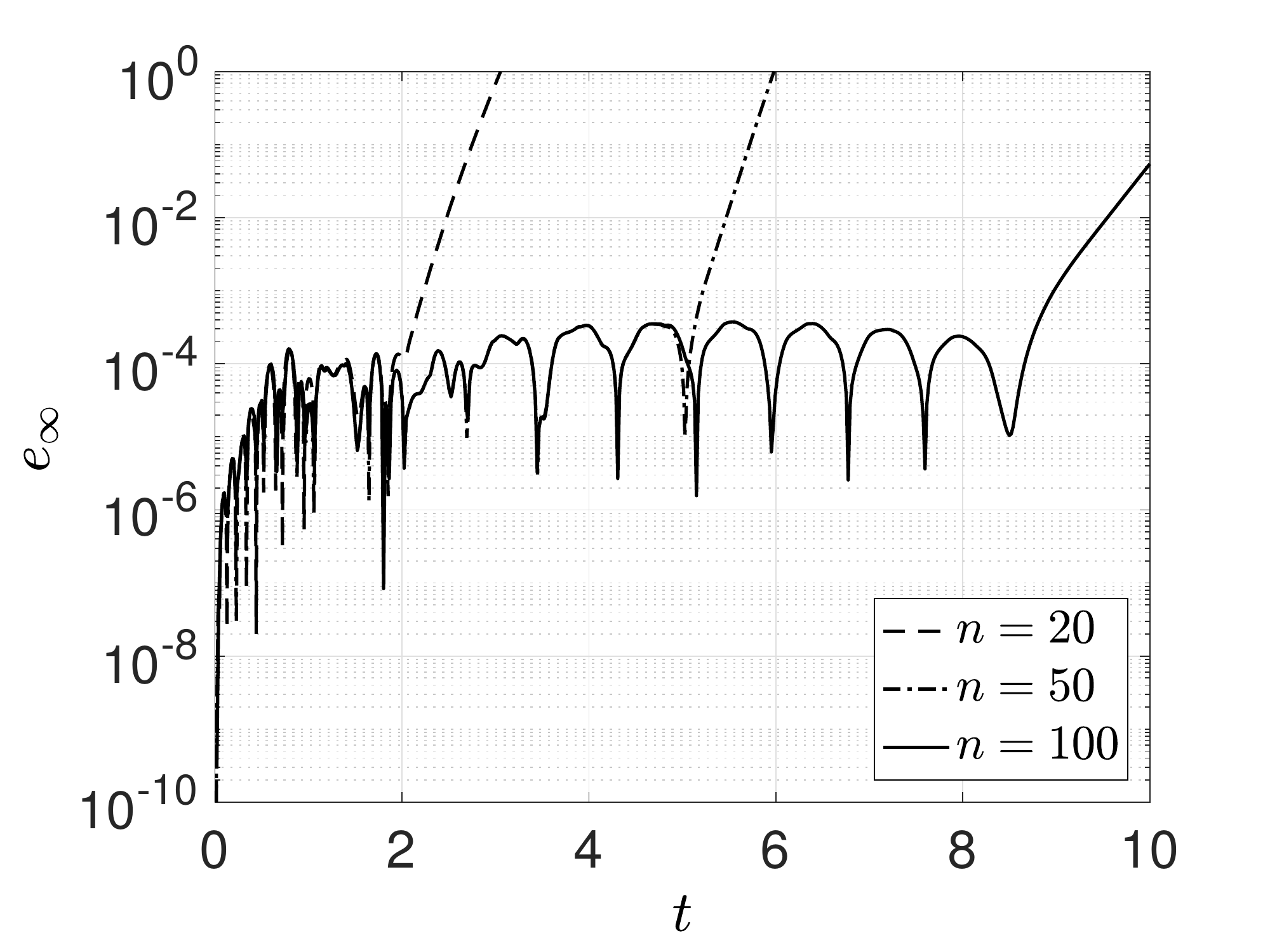} 
\includegraphics[height=6cm]{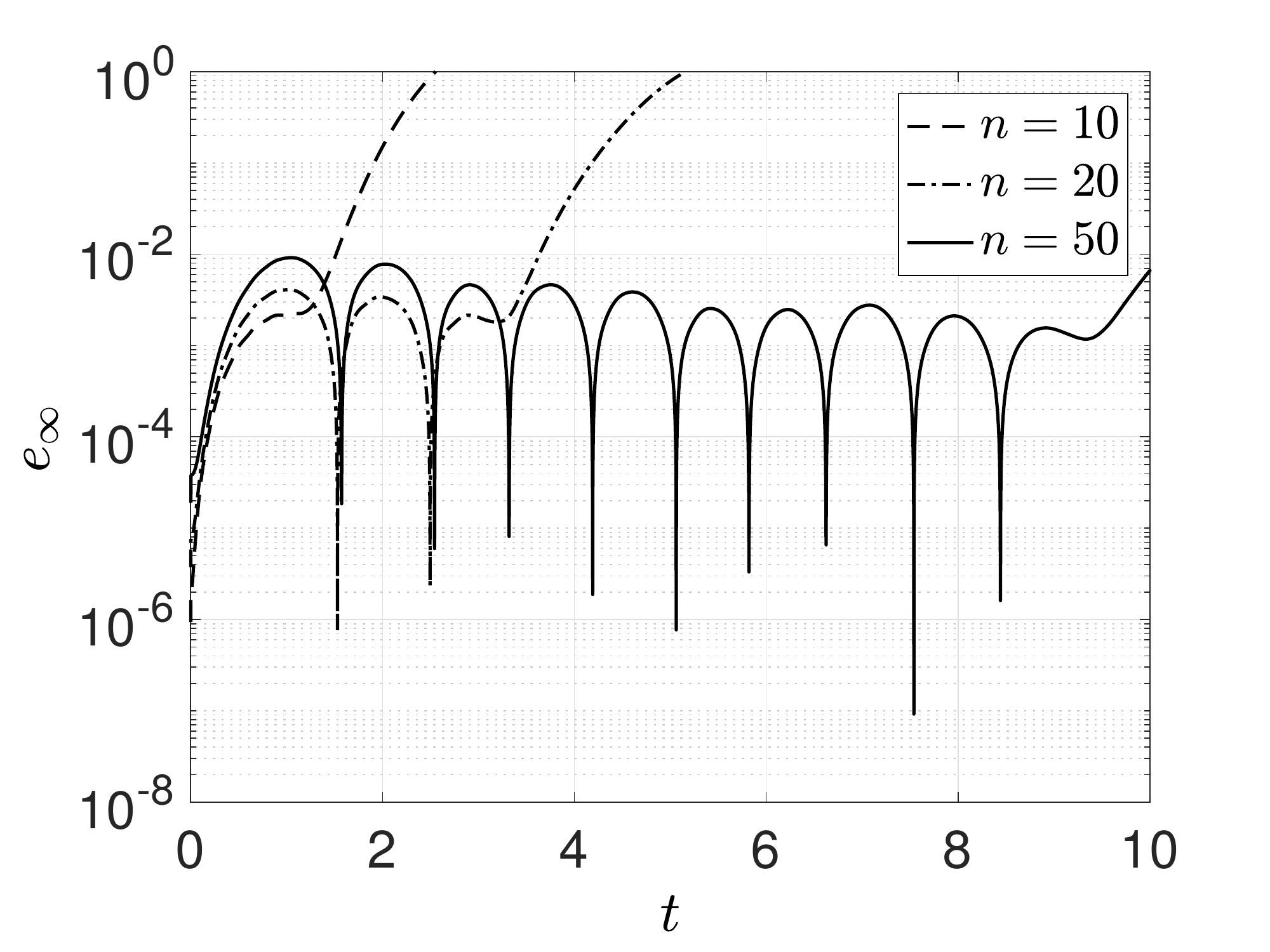}
}
\caption{MZ-Dyson and MZ-Faber approximation errors of the mean 
wave amplitude at $(r,\theta)=(1.1,0.1)$ as a function of the polynomial 
order $n$. It is seen that the MZ-Faber expansion converges faster 
than the MZ-Dyson series.}
\label{fig:wave_MZ}
\end{figure}

\subsection{Harmonic Chains on the Bethe Lattice}
\label{sec:chains}
Dynamics of harmonic chains on Bethe lattices is a 
simple but illustrative Hamiltonian dynamical 
system that has been widely studied in statistical mechanics, 
mostly in relation to Brownian motion  
\cite{baxter2016exactly,florencio1985exact,espanol1996dissipative,ford1965statistical,kim2000dynamics}. 
A Bethe lattice is a connected cycle-free graph in which each 
node interacts only with its neighbors. The number of such 
neighbors, is a constant of the graph called 
{\em coordination number}. This means that each node in the 
graph (with the exception of the leaf nodes) 
has the same number of edges connecting it to its neighbors.
In Figure \ref{fig:bethe_lattice} we show two Bethe lattices 
with coordination numbers $l=2$ and $l=3$, respectively.
\begin{figure}[H]
\centerline{
\includegraphics[height=5.5cm]{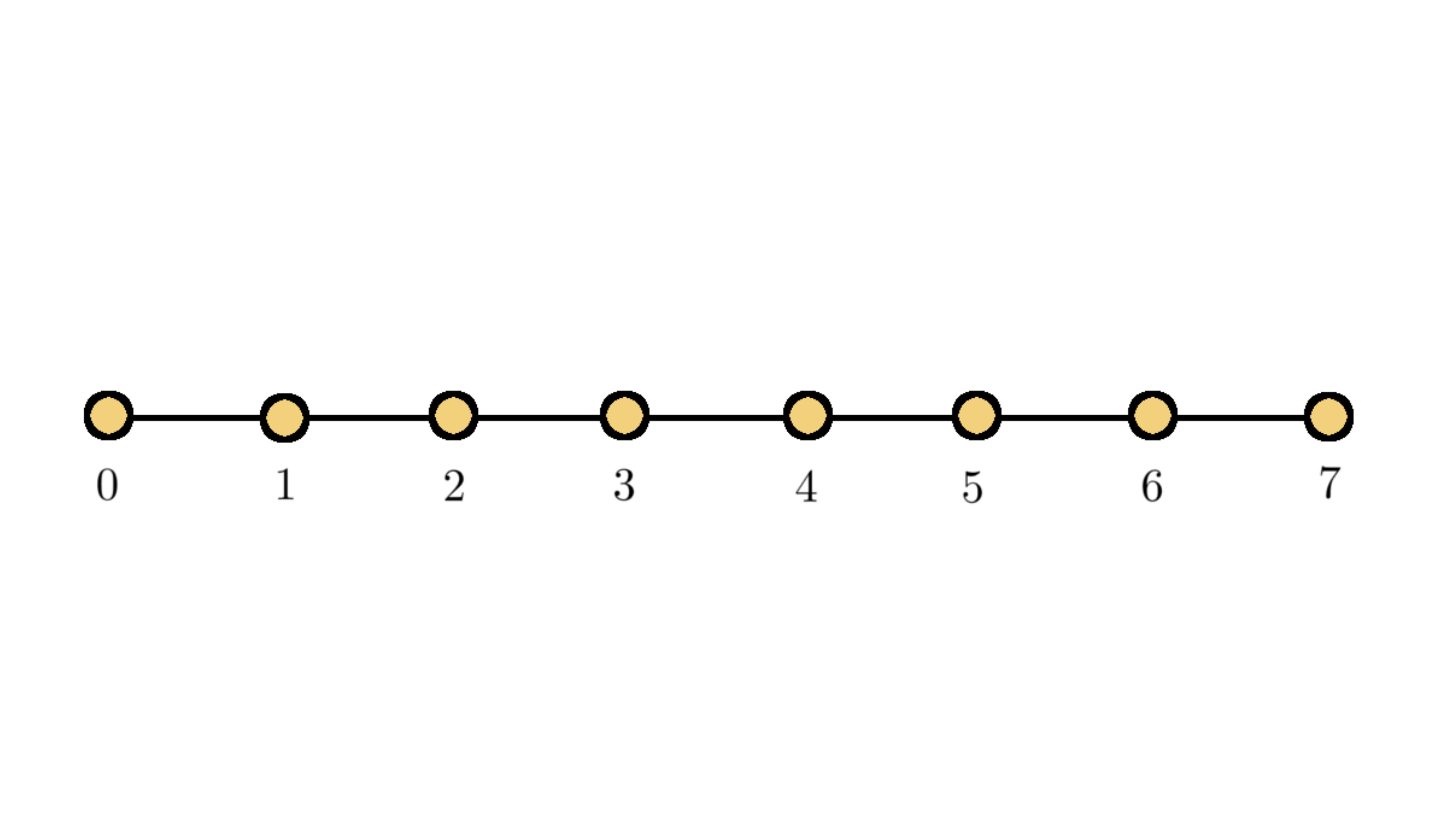}
\includegraphics[height=5.5cm]{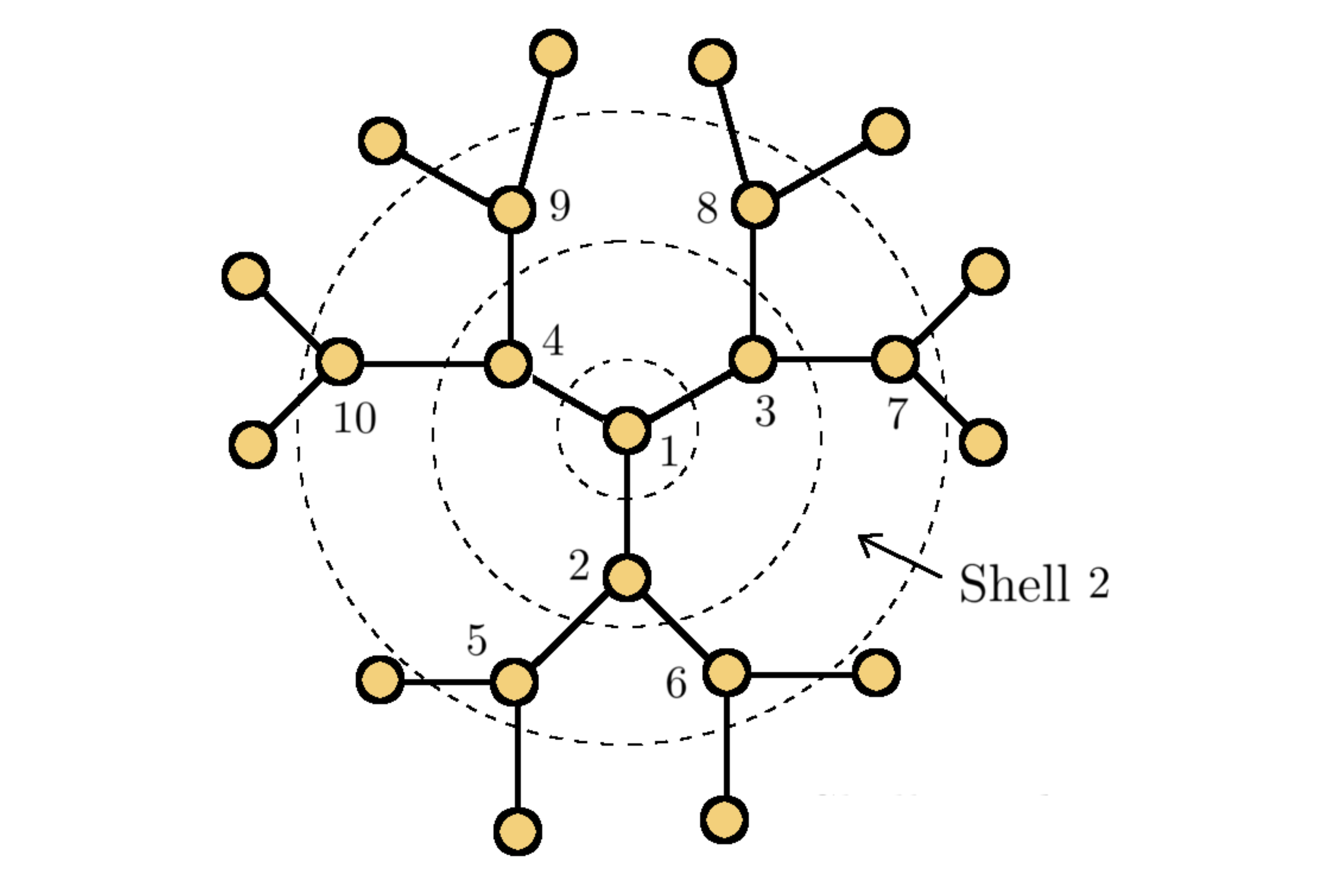}
}
\caption{Bethe lattices with coordination numbers $2$ (left), and $3$ (right).}
\label{fig:bethe_lattice}
\end{figure}
\noindent The Bethe graph is hierarchical and 
therefore it can be organized 
into shells, emanating from an arbitrary node. 
The number of nodes in the $k$-th shell is 
given by $N_k=l(l-1)^{k-1}$,while the 
total number of nodes within $S$ shells is
\begin{equation} 
 N=1+\sum_{k=1}^S N_k.
 \label{nodes}
 \end{equation}
Next, we consider a coupled system of $N$ harmonic 
oscillators\footnote{The number of oscillators cannot be 
set arbitrarily as it must satisfy the topological graph 
constraints prescribed by \eqref{nodes}.}
whose mutual interactions are defined 
by the adjacency matrix $\bm B^{(l)}$ of a Bethe graph with coordination number $l$  
\cite{biggs1993algebraic}.  
The Hamiltonian of such system can be written as   
\begin{align}
\label{bethe_hamiltonian}
H( \bm p,\bm q)=\frac{1}{2m}\sum_{i=1}^N p_i^2+
\frac{k}{2l}\sum_{i,j=1}^N B^{(l)}_{ij} (q_i-q_j)^2,
\end{align}
where $q_i$ and $p_i$ are, respectively, the displacement 
and momentum of the $i-$th particle, $m$ is the 
mass of the particles (assumed constant throughout the network), 
and $k$ is the elasticity constant that modulates the intensity of the 
quadratic interactions. 
We emphasize that the harmonic chain we consider here is  
one-dimensional. The Bethe graph basically just 
sets the interaction among the different oscillators.  
The dynamics of the harmonic chain on the Bethe lattice 
is governed by the Hamilton's equations
\begin{equation}
\frac{dq_i}{dt} = \frac{\partial H}{\partial p_i},\qquad
\frac{dp_i}{dt} = -\frac{\partial H}{\partial q_i}.
\end{equation}
These equations can be written in a matrix-vector form as
\begin{align}
\label{equ:bethe_l}
\left[
\begin{matrix}
\dot{\bm p}\\
\dot{\bm q}
\end{matrix}
\right]
=
\left[
\begin{matrix}
\bm 0&k \bm B^{(l)}-k\bm D^{(l)} \\
\bm I/m&\bm 0
\end{matrix}
\right]
\left[
\begin{matrix}
\bm p\\
\bm q
\end{matrix}
\right]=\bm C\left[
\begin{matrix}
\bm p\\
\bm q
\end{matrix}
\right],
\end{align}
where $\bm B^{(l)}$ is the adjacency matrix of the graph and $\bm D^{(l)}$ is the degree matrix. Note that \eqref{equ:bethe_l} is a linear dynamical system. 
The time evolution of any phase space function $u(\bm q,\bm p)$ 
(quantity of interest) satisfies 
\begin{align*}
\frac{du}{dt}=&\{u,H\},
\end{align*}
where
\begin{equation}
\{u,H\}=\sum_{i=1}^N\left(\frac{\partial u}{\partial q_i}\frac{\partial H}{\partial p_i}-\frac{\partial H}{\partial q_i}\frac{\partial u}{\partial p_i}\right)
\end{equation}
denotes the Poisson Bracket.
A particular phase space function we consider hereafter 
is the velocity auto-correlation function of a tagged 
oscillator, say the one at location $j=1$ 
(see Figure \ref{fig:bethe_lattice}). Such correlation 
function is defined as 
\begin{equation}
C_{p_1}(t)=\frac{\langle p_1(t)p_1(0)\rangle_{eq}}
{\langle p_1(0)p_1(0)\rangle_{eq}},
\label{autocorr}
\end{equation}
where the average is an integral over the 
Gibbs canonical distribution \eqref{gibbs}.

\subsubsection{Analytical Expressions for the Velocity Autocorrelation Function}
The simple structure of harmonic chains on 
the Bethe lattice allows us to determine analytical expressions for the 
velocity autocorrelation function \eqref{autocorr}, e.g., 
\cite{baxter2016exactly,kim2000dynamics,florencio1985exact}.

\paragraph{Bethe Lattice with Coordination Number 2}
Let us set $l=2$. In this case, 
the Bethe lattice is a a path graph, i.e., a 
one-dimensional chain of harmonic oscillators where each 
oscillator interacts only with the one at the left and at the 
right. We set fixed boundary conditions at the endpoint of the chain, 
i.e., $q_0(t)=q_{N+1}(t)=0$ and $p_0(t)=p_{N+1}(t)=0$
(particles are numbered from left to right). In this setting, 
the velocity auto-correlation function 
of the particle labeled with $j=1$ can be obtained analytically 
by employing Lee's continued fraction method  
\cite{florencio1985exact}. This yields the 
well-known  $J_0-J_{4}$ solution
\begin{align}
C_{p_1}(t)=J_0(2\omega t)-J_{4}(2\omega t),
\label{VCF_analytic}
\end{align}
where $J_{i}(t)$ is the $i$-th Bessel function of the 
first kind, and $\omega=k/m$. 
Here we choose $k=m=1$.
The Hamilton's equations \eqref{equ:bethe_l} for the inner oscillators\footnote{We  
exclude the two oscillators at the endpoints of the harmonic chain, since their dynamics is trivial.} take the form 
\begin{align}
\label{equ:bethe_l2}
\left[
\begin{matrix}
\dot{\bm p}\\
\dot{\bm q}
\end{matrix}
\right]
=
\left[
\begin{matrix}
\bm 0& \bm B^{(2)}-\bm D^{(2)}\\
\bm I &\bm 0
\end{matrix}
\right]
\left[
\begin{matrix}
\bm p\\
\bm q
\end{matrix}
\right],
\end{align}
where  $\bm B^{(2)}$ and $\bm D^{(2)}$ are the adjacency 
matrix and the degree matrix of the Bethe lattice with $l=2$ (see Figure \ref{fig:bethe_lattice}). 
As an example, if we consider five oscillators 
then $\bm B^{(2)}$ and $\bm D^{(2)}$ are given by  
\begin{equation}
\bm B^{(2)}=
\left[
\begin{array}{ccc}
0 & 1 & 0\\
1 & 0 & 1\\
0 & 1 & 0
\end{array}
\right],
\qquad 
\bm D^{(2)}=
\left[
\begin{array}{ccc}
2 & 0 & 0\\
0 & 2 & 0\\
0 & 0 & 2
\end{array}
\right].
\end{equation}

\begin{figure}[t]
\centerline{
\centerline{Bethe lattice with $l=2$\hspace{4cm}
Bethe lattice  with $l=3$}}
\centerline{
\includegraphics[height=6cm]{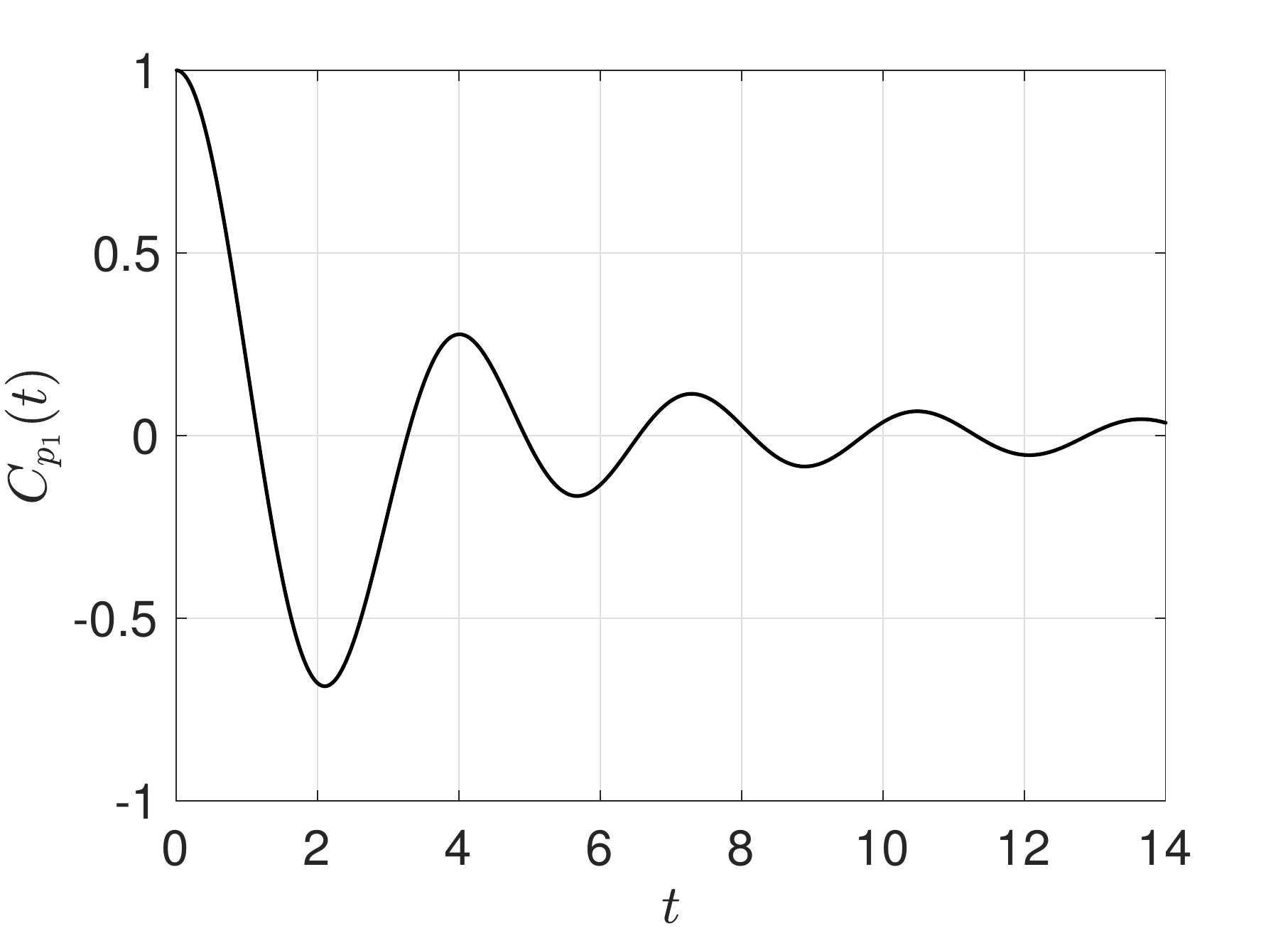} 
\includegraphics[height=6cm]{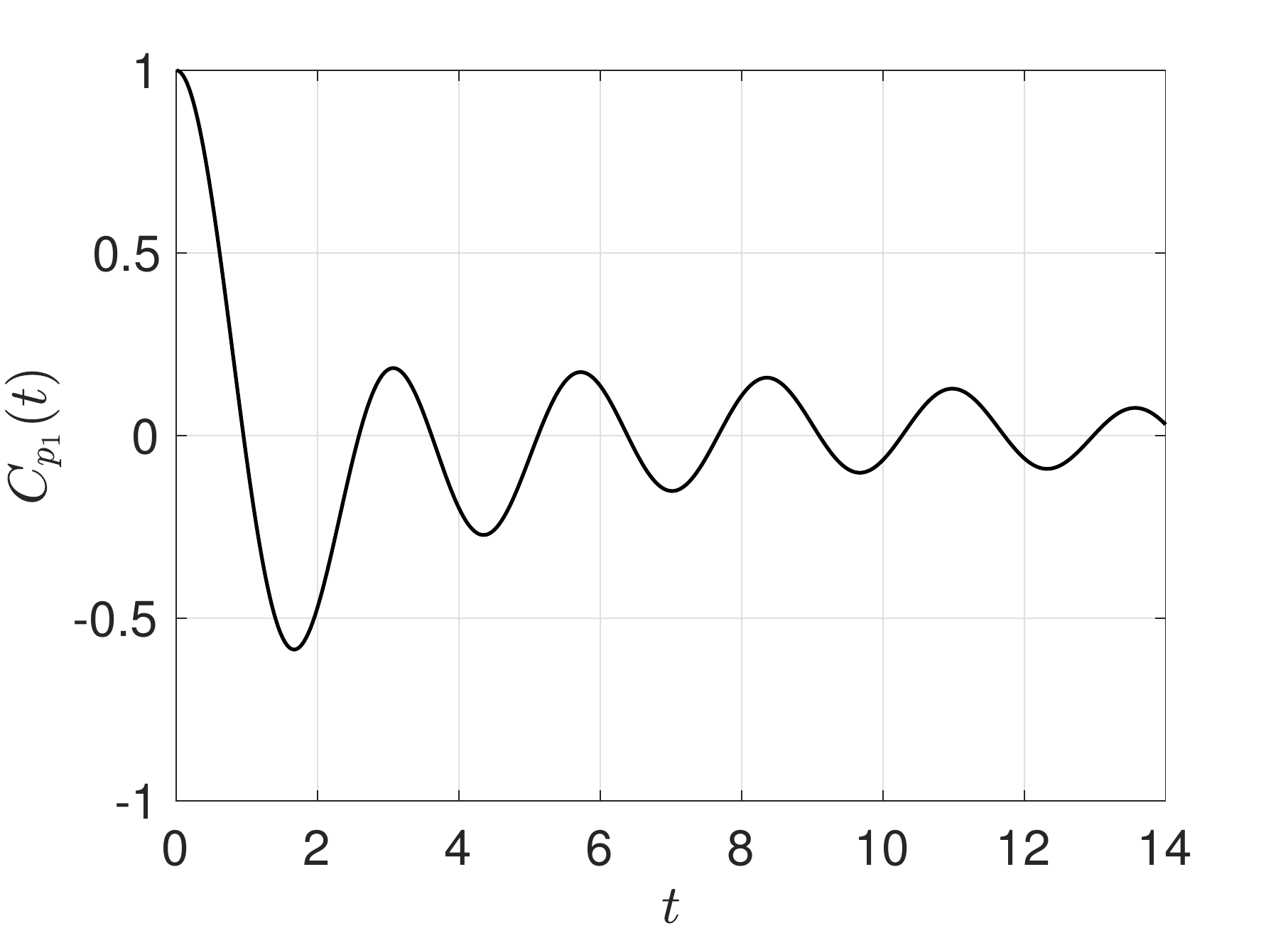} 
}
\caption{Velocity auto-correlation functions 
\eqref{VCF_analytic} (left) and \eqref{VCF_l2} (right) 
of a tagged oscillator in an harmonic chain interacting 
on a Bethe lattice with coordination number $l=2$ and $l=3$, 
respectively.  
}
\label{fig:VCF} 
\end{figure}
\paragraph{Bethe Lattice with Coordination Number 3} 
Bethe graphs with $l=3$ can be represented as planar graphs
(see Figure \ref{fig:bethe_lattice}). The velocity 
auto-correlation function at the center node can 
be expressed analytically \cite{kim2000dynamics}, 
in the limit of an infinite number of oscillators ($N\rightarrow \infty$)\footnote{Thanks to the symmetry of the Bethe lattice, 
in the limit $n\rightarrow \infty$ and with free boundary conditions 
the velocity auto-correlation function is the same at each node.}, 
as 
\begin{align}
C_{p_1}(t)&=\sum_{n=-\infty}^{+\infty}[G_n(t)+H_n(t)]J_{2n}(bt)
\label{VCF_l2}
\end{align}
where 
\begin{align*}
G_n(l)&=\sum_{k=0}^{\infty}\frac{g_k(l)}{b^{2k-2}}\frac{1}{2\pi}\int_a^{\pi/2}d\theta \frac{\cos^2(\theta)}{\sin^{2k}(\theta)}\cos(2n\theta),\quad g_k(l)=-\sum_{j=k}^{\infty}\frac{(2j-1)!!}{[2^j(2j-1)j!]}a^{2j}c^{2(k-j)},\\
H_n(l)&=\sum_{k=0}^{\infty}\frac{h_k(l)}{b^{-2k-2}}\frac{1}{2\pi}\int_a^{\pi/2}d\theta \frac{\cos^2(\theta)}{\sin^{-2k}(\theta)}\cos(2n\theta),\quad h_k(l)=-\sum_{j=k}^{\infty}\frac{(2j-1)!!}{[2^j(2j-1)j!]}a^{2(j-k)}c^{-2j}
\end{align*}
and $a=\sqrt{2}-1$, $b=\sqrt{2}+1$ and $c=\sqrt{6}$. 
The  Hamilton's equations  of motion in this case are\footnote{Here we implemented a free boundary condition at the outer shell of the chain.} ($k=m=1$)
\begin{align}
\label{equ:bethe_l3}
\left[
\begin{matrix}
\dot{\bm p}\\
\dot{\bm q}
\end{matrix}
\right]
=
\left[
\begin{matrix}
\bm 0&\bm B^{(3)}-\bm D^{(3)}\\
\bm I&\bm 0
\end{matrix}
\right]
\left[
\begin{matrix}
\bm p\\
\bm q
\end{matrix}
\right]
.
\end{align}
where  $\bm B^{(3)}, \bm D^{(3)}$ are the adjacency matrix 
and the degree matrix of the Bethe lattice with $l=3$ (see Figure \ref{fig:bethe_lattice}). For example,
if we label the oscillators as in Figure \ref{fig:bethe_lattice}, and 
assume that the Bethe lattice has only three shells, i.e., $10$ 
oscillators ($4$ inner nodes, and $6$ leaf nodes) then the 
adjacency matrix and the degree matrix are 
\begin{equation}
\bm B^{(3)}=
\left[
\begin{array}{cccccccccc}
0 & 1 & 1 & 1 & 0 & 0 & 0 & 0 & 0 & 0 \\
1 & 0 & 0 & 0 & 1 & 1 & 0 & 0 & 0 & 0 \\
1 & 0 & 0 & 0 & 0 & 0 & 1 & 1 & 0 & 0 \\
1 & 0 & 0 & 0 & 0 & 0 & 0 & 0 & 1 & 1 \\
0 & 1 & 0 & 0 & 0 & 0 & 0 & 0 & 0 & 0 \\
0 & 1 & 0 & 0 & 0 & 0 & 0 & 0 & 0 & 0 \\
0 & 0 & 1 & 0 & 0 & 0 & 0 & 0 & 0 & 0 \\
0 & 0 & 1 & 0 & 0 & 0 & 0 & 0 & 0 & 0 \\
0 & 0 & 0 & 1 & 0 & 0 & 0 & 0 & 0 & 0 \\
0 & 0 & 0 & 1 & 0 & 0 & 0 & 0 & 0 & 0 
\end{array}
\right],\qquad
\bm D^{(3)}=
\left[
\begin{array}{cccccccccc}
3 & 0 & 0 & 0 & 0 & 0 & 0 & 0 & 0 & 0 \\
0 & 3 & 0 & 0 & 0 & 0 & 0 & 0 & 0 & 0 \\
0 & 0 & 3 & 0 & 0 & 0 & 0 & 0 & 0 & 0 \\
0 & 0 & 0 & 3 & 0 & 0 & 0 & 0 & 0 & 0 \\
0 & 0 & 0 & 0 & 1 & 0 & 0 & 0 & 0 & 0 \\
0 & 0 & 0 & 0 & 0 & 1 & 0 & 0 & 0 & 0 \\
0 & 0 & 0 & 0 & 0 & 0 & 1 & 0 & 0 & 0 \\
0 & 0 & 0 & 0 & 0 & 0 & 0 & 1 & 0 & 0 \\
0 & 0 & 0 & 0 & 0 & 0 & 0 & 0 & 1 & 0 \\
0 & 0 & 0 & 0 & 0 & 0 & 0 & 0 & 0 & 1 
\end{array}
\right].
\end{equation}

\subsubsection{Generalied Langevin Equation for the Velocity Autocorrelation Function}
The evolution equation for the velocity autocorrelation 
function \eqref{autocorr} was obtained in Section \ref{sec:autocorrelation} 
and it is hereafter rewritten for convenience
\begin{equation}
\frac{d C_{p_1}(t) }{d t}=  a C_{p_1}(t)+\int_0^tg(t-s)C_{p_1}(s)ds.
\label{equ:evolution_correlation1}
\end{equation}
The initial condition is  $C_{p_i}(0)=1$.  
The MZ-Dyson and MZ-Faber series expansions of the  
the memory kernel $g(t-s)$ are given by 
\begin{align*}
g(t-s) = \sum_{j=0}^n \frac{g^D_j}{j!}(t-s)^j,\qquad 
g(t-s) = \sum_{j=0}^n g^F_j e^{tc_0} \frac{J_j(2t\sqrt{-c_1})}{(\sqrt{-c_1})^j} 
\end{align*}
where 
\begin{align*}
g_j^D=\bm b^T (\Mi_{11}^T)^j \bm a,
\quad 
g_j^F=\bm b^{T}\F_j\left(\bm M_{11}^{T}\right)\bm a.
\end{align*}
The definition of the matrix $\bm M_{11}^T$ and the vectors 
$\bm a$, $\bm b$ is the same as before. 
Here we used the fact that for any quadratic Hamiltonian we have $\left<p_i(0),q_i(0)\right>_{eq}=0$ and 
$\left<p_i(0),p_j(0)\right>_{eq}=\delta_{ij}$.
 In Figure \ref{fig:memory_chain} we study convergence 
 of the MZ-Dyson and the  MZ-Faber series expansion   
of the memory kernel in equation \eqref{equ:evolution_correlation}. 
As before, the MZ-Faber series converges  faster 
that the MZ-Dyson series. 
\begin{figure}[t]
\centerline{\hspace{0.5cm}MZ-Dyson\hspace{6.5cm}MZ-Faber}
\centerline{
\includegraphics[height=6cm]{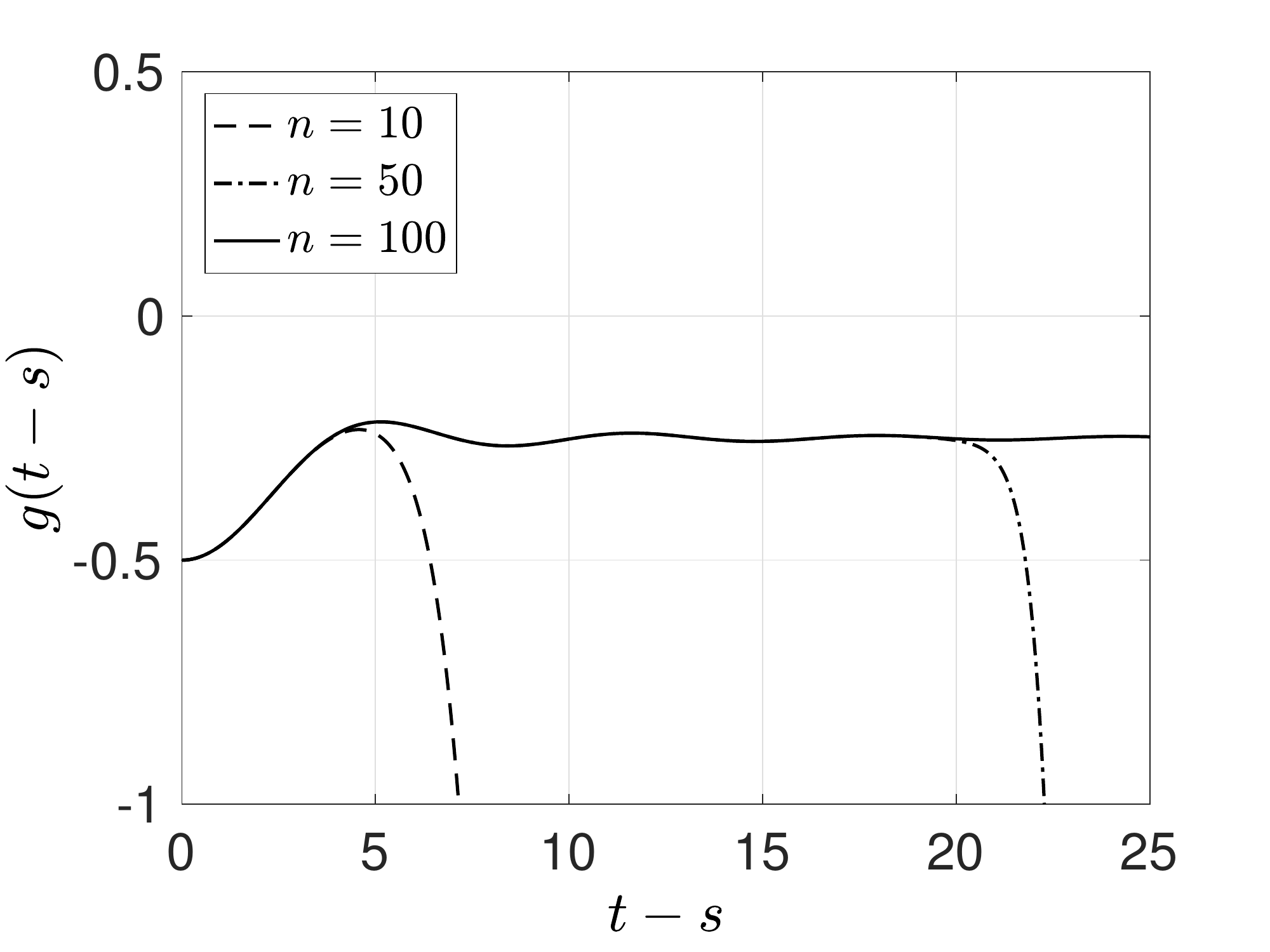} 
\includegraphics[height=6cm]{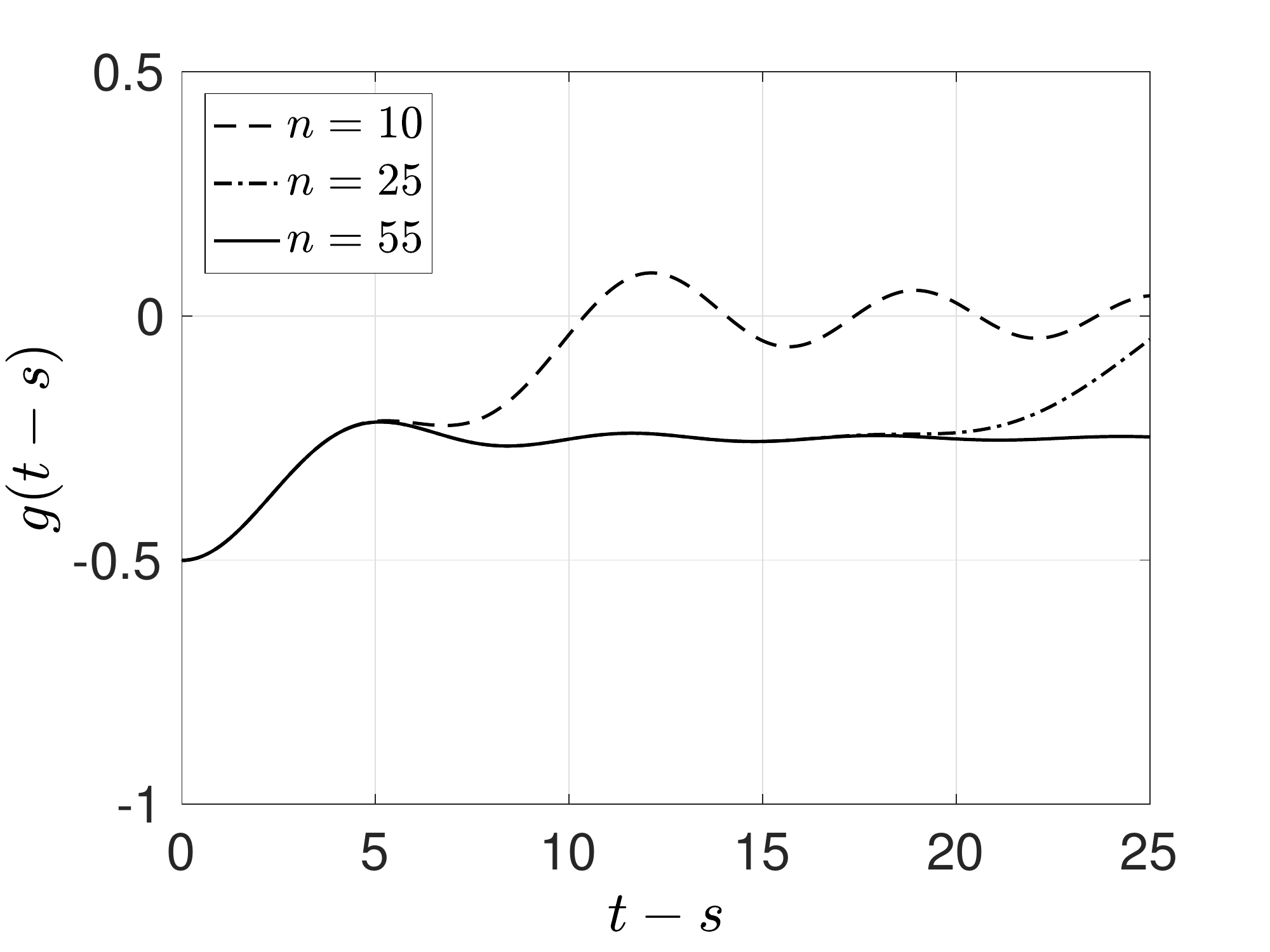} 
}
\caption{Harmonic chains of oscillators. Dyson and Faber expansions of 
the Mori-Zwanzig memory kernel $g(t-s)$. 
Shown are results for different polynomial orders $n$. 
It is seen that the MZ-Faber series converges  faster 
that the MZ-Dyson series.}
\label{fig:memory_chain}
\end{figure}
In Figure \ref{fig:l2_conv} and Figure \ref{fig:l3_conv}, 
we study the accuracy of the MZ-Dyson and the MZ-Faber expansions
in representing the velocity auto-correlation 
functions \eqref{VCF_analytic} and \eqref{VCF_l2}
(see Figure \ref{fig:VCF}).
Specifically, in these simulations we considered a chain 
of $N=100$ oscillators for the case $l=2$, 
and $8$ shells of oscillators for the case $l=3$, i.e., 
a total number of $N=766$ oscillators.
The results in Figure  \ref{fig:l2_conv} and Figure \ref{fig:l3_conv} 
show that both the MZ-Dyson  
and the MZ-Faber expansions of the memory 
integral yield accurate approximations of the 
velocity autocorrelation function, and that 
convergence is uniform with the polynomial order.  
\begin{figure}[t]
\centerline{
\includegraphics[height=6.5cm]{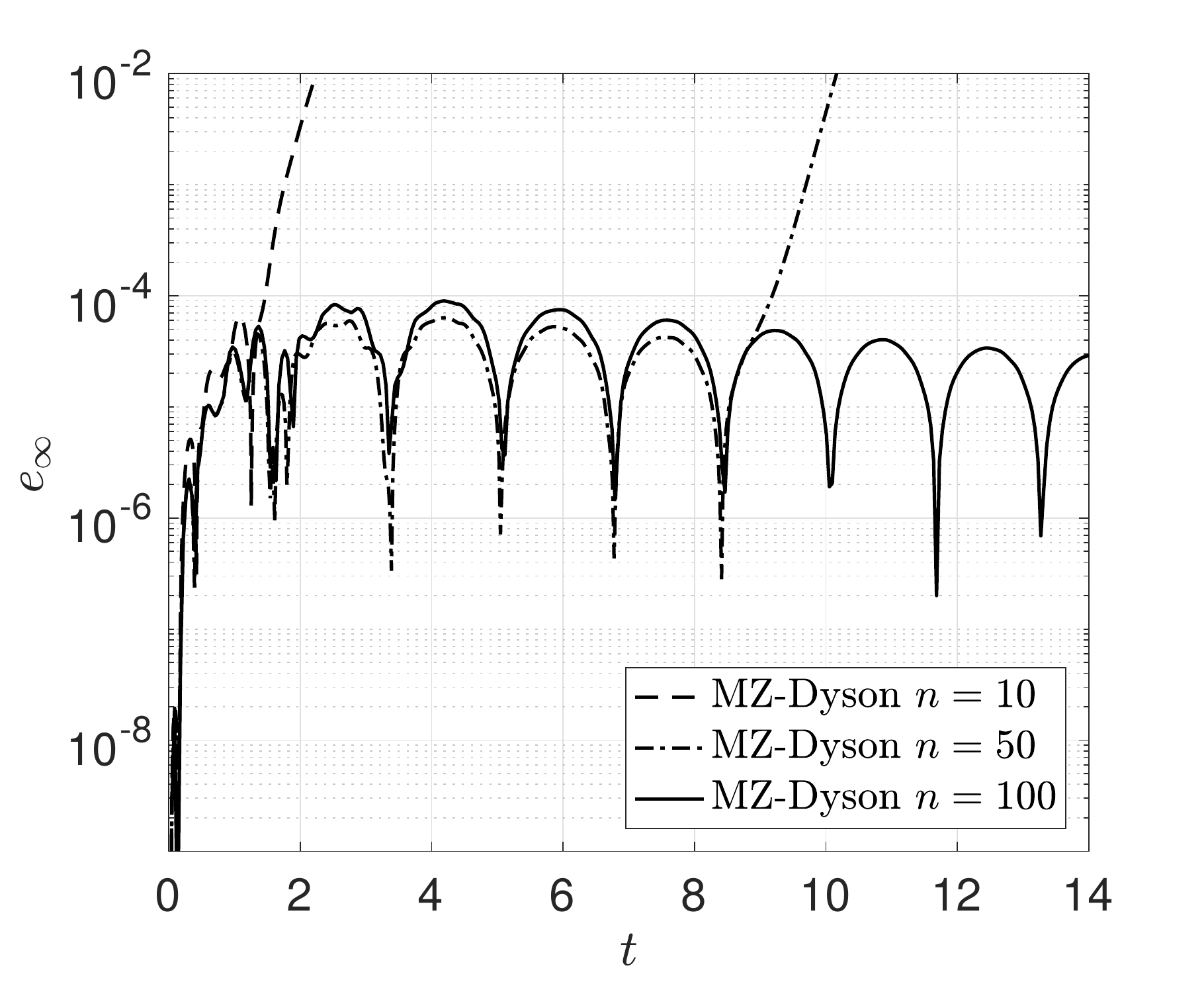} 
\includegraphics[height=6.5cm]{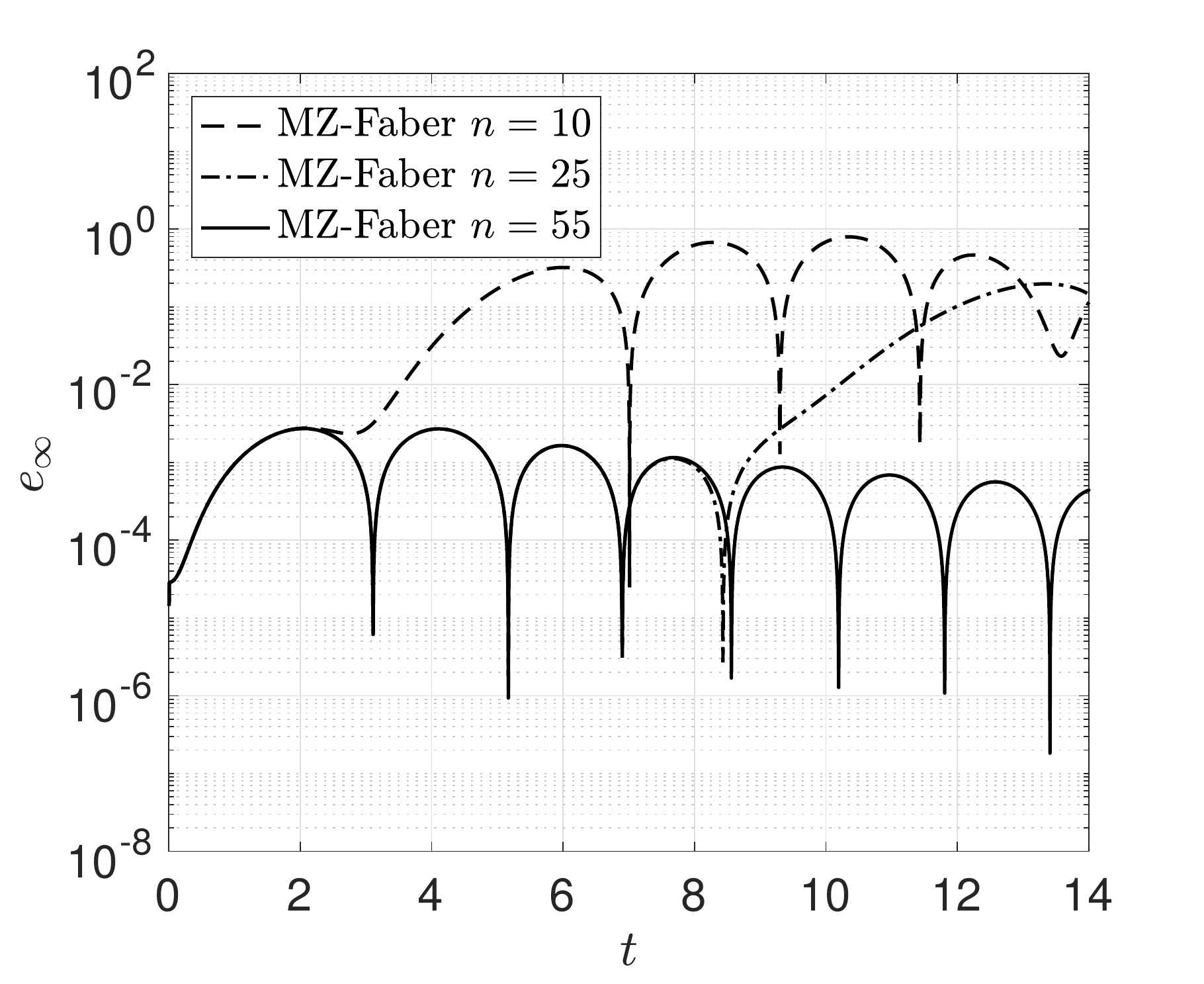} 
}
\caption{Accuracy of the MZ-Dyson and MZ-Faber expansions in 
representing the velocity auto-correlation function of the 
tagged oscillator $j=2$ in an harmonic chain interacting
on the Bethe lattice with coordination number $2$. 
It is seen that the MZ-Dyson and the MZ-Faber 
expansions yield accurate predictions as 
we increase the polynomial order $n$. Moreover, the 
MZ-Faber expansion converges faster 
than the MZ-Dyson expansion.}
\label{fig:l2_conv} 
\end{figure}
\begin{figure}[t]
\centerline{ 
\includegraphics[height=6.5cm]{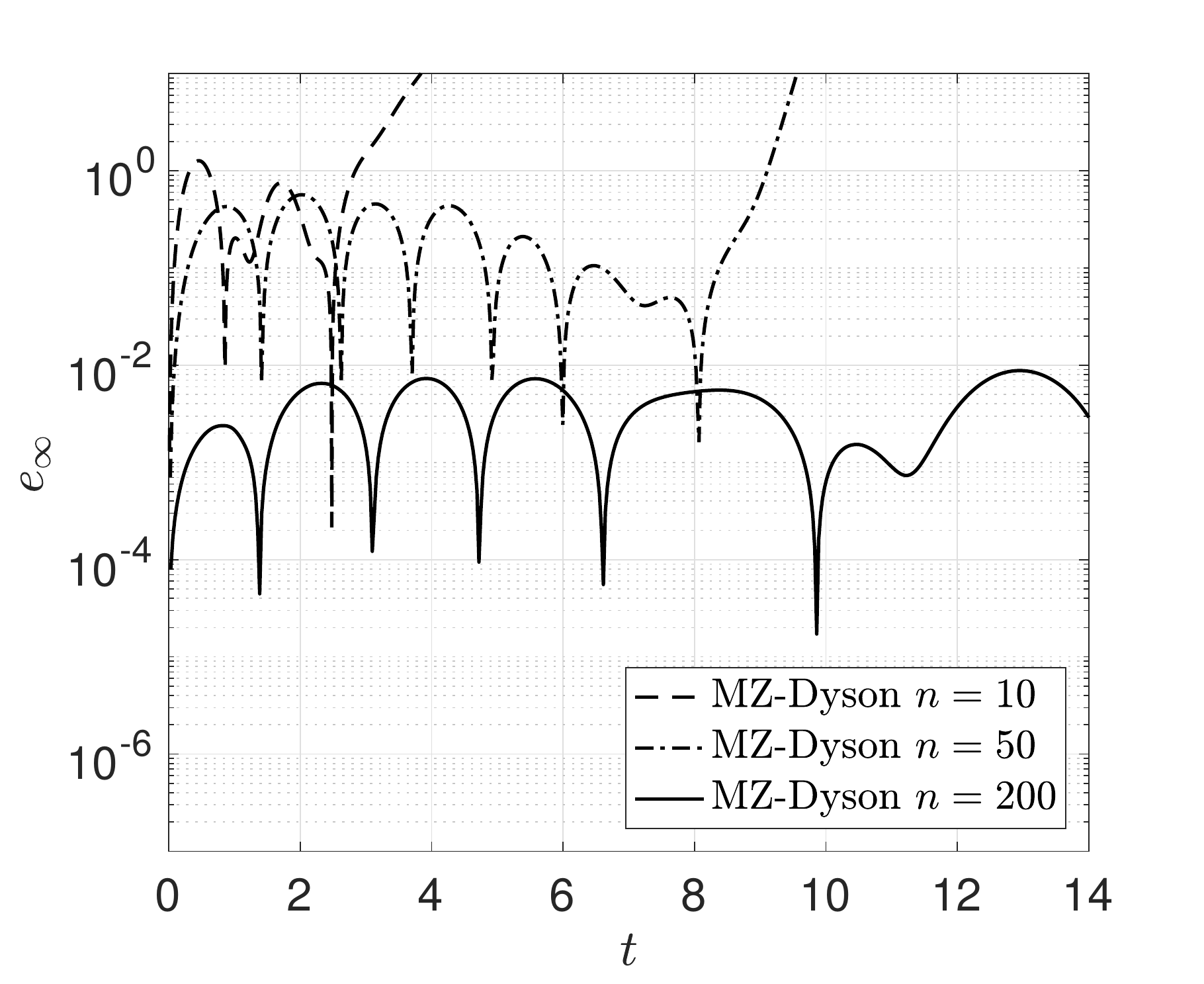} 
\includegraphics[height=6.5cm]{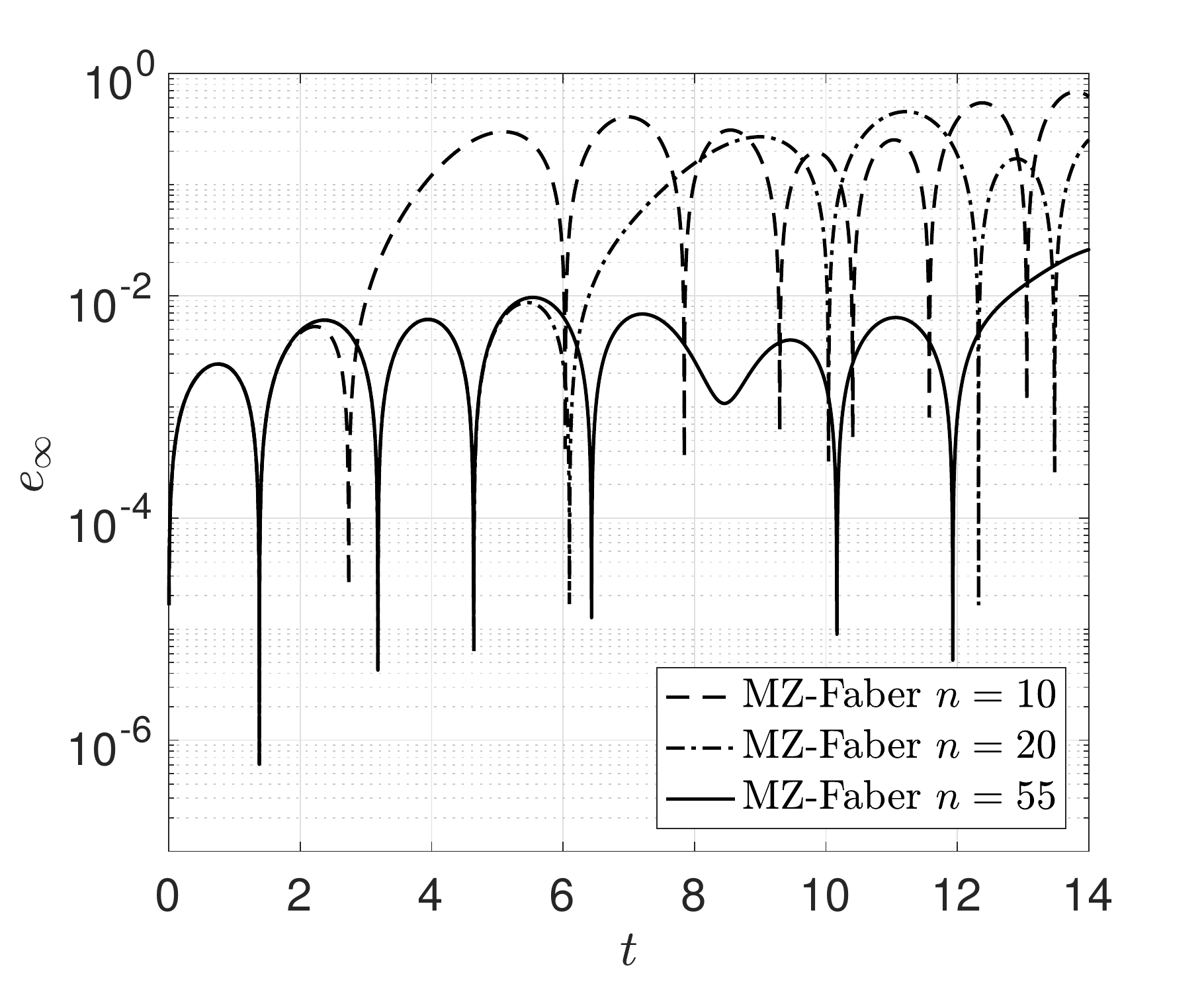} 
}
\caption{Accuracy of the MZ-Dyson and MZ-Faber expansions 
in representing the velocity auto-correlation function 
of the oscillator at the center of a Bethe lattice with 
coordination number $3$, $8$ shells and $N=766$ oscillators. 
It is seen that the MZ-Dyson and the MZ-Faber expansion yield accurate predictions as we increase the polynomial order $n$. Moreover, 
the MZ-Faber expansion converges faster than the MZ-Dyson series.}
\label{fig:l3_conv} 
\end{figure}
We emphasize that the new expansion of the MZ memory integral 
we developed can be employed to calculate 
phase space functions of harmonic oscillators on 
graphs with arbitrary topological structure. 
The following example shows the effectiveness 
of the proposed technique in calculating the 
velocity auto-correlation function of a tagged 
oscillator in a network sampled from the 
 Erd\"{o}s–R\'enyi random graph.

\subsubsection{Harmonic Chains on Graphs with Arbitrary Topology}
In this section we consider an harmonic chain on 
a graph with arbitrary topology. The Hamiltonian function is  
\begin{align}
H=\frac{1}{2m}\sum_{i=1}^N p_i^2+\frac{k}{2}\sum_{\substack{i,j=1\\i<j}}^N J_{ij}(q_i-q_j)^2,
\label{HR}
\end{align}
where $J_{ij}$ is here is assumed is to be 
a sample from the Erd\"{o}s–R\'{e}nyi random 
adjacency matrix \cite{Bollobas,Newman}. 
In Figure \ref{fig:r_n} we study the accuracy of the
MZ-Dyson and MZ-Faber expansions in approximating the 
velocity auto-correlation function of a tagged oscillator. 
In this case, no analytical solution is available 
and therefore we compared our solution 
to an accurate Monte Carlo benchmark. 
The lack of symmetry in each realization of the 
random network makes the velocity auto-correlation function 
dependent on the particular oscillator we consider. 

\begin{figure}[t]
\centerline{
\includegraphics[height=6.0cm]{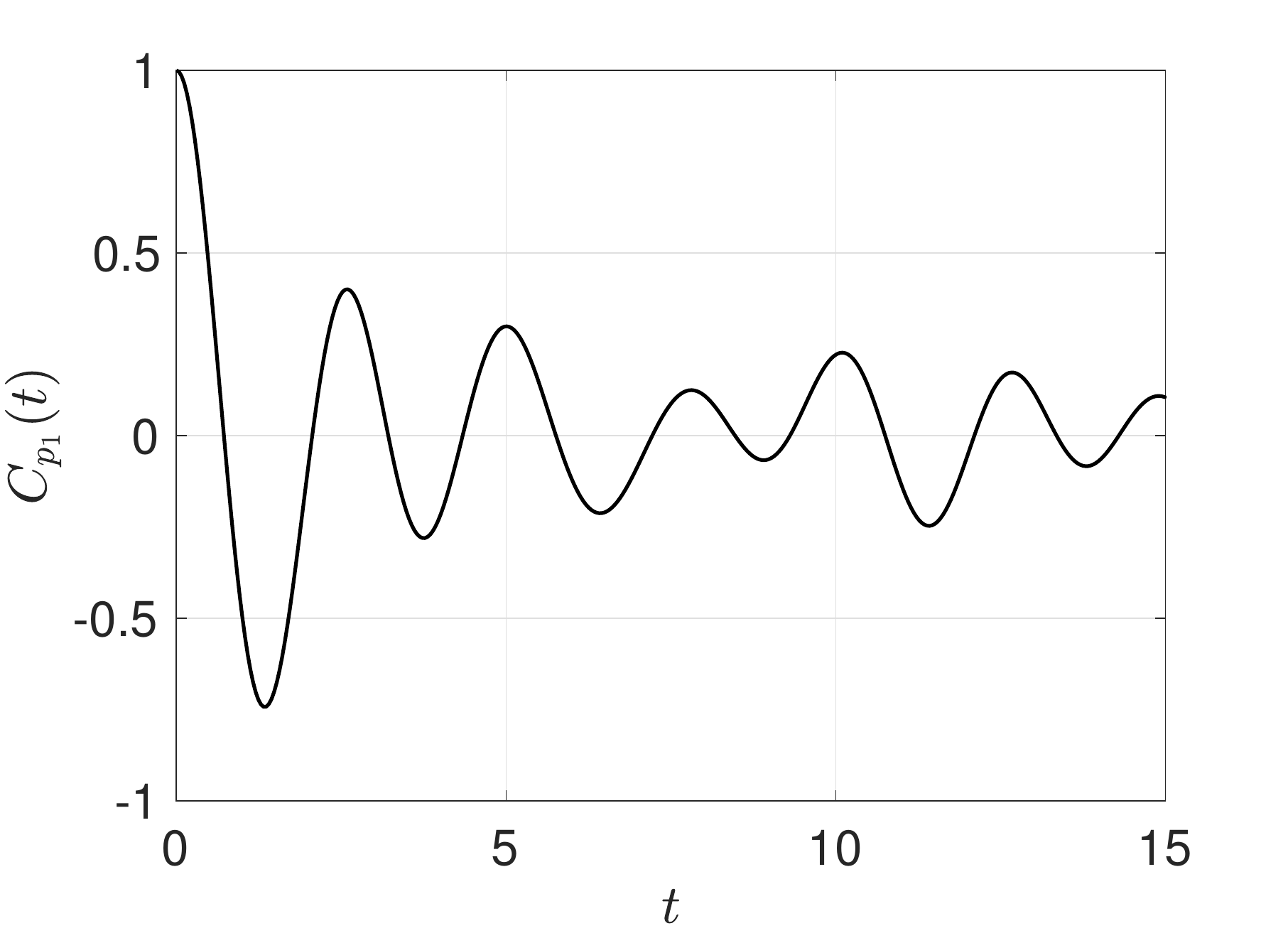} 
\includegraphics[height=6.0cm]{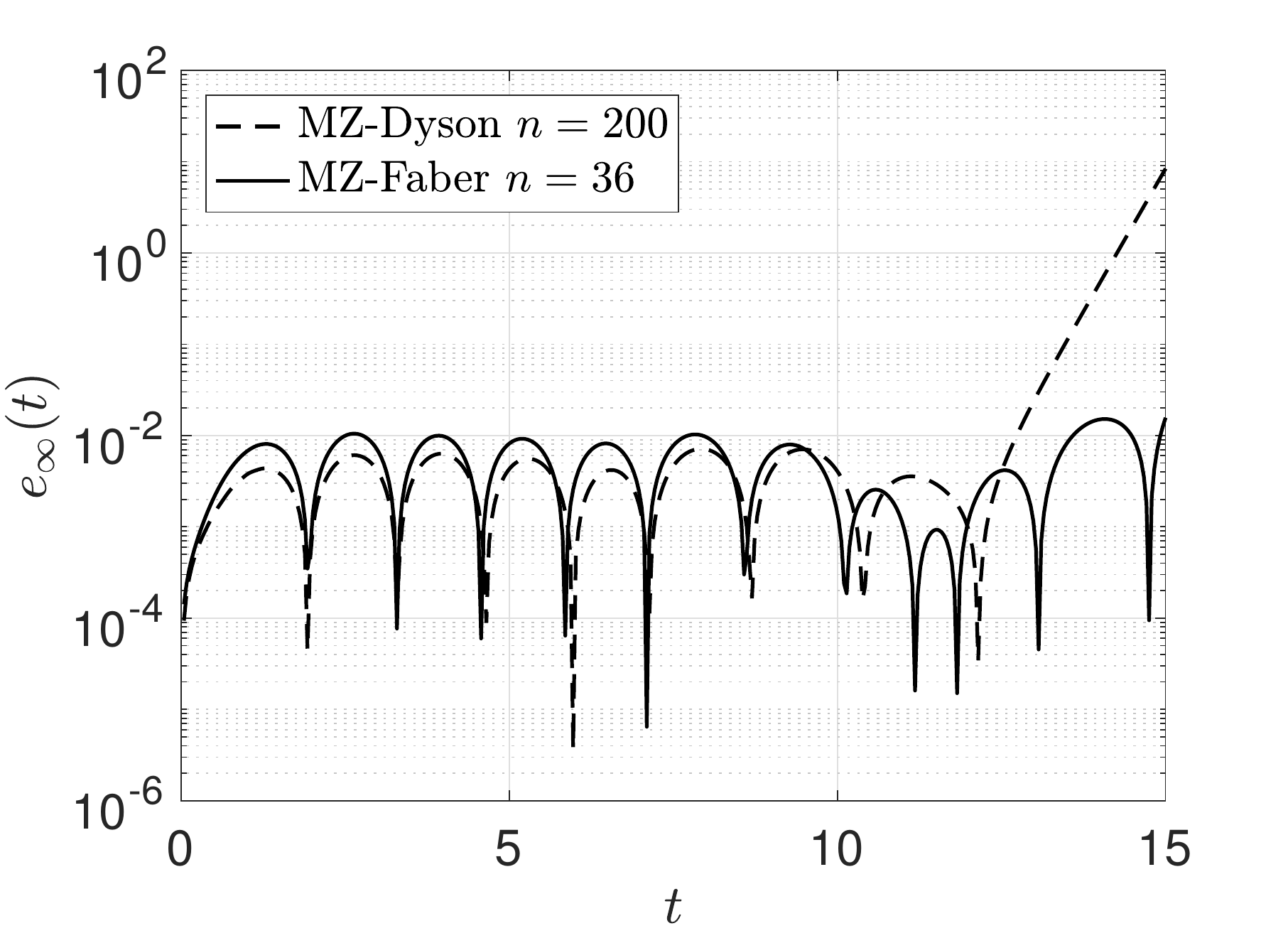} 
}
\caption{Aaccuracy of the MZ-Dyson and MZ-Faber expansions in 
approximating the velocity auto-correlation function of one tagged 
oscillator on a network obtained by sampling 
the Erd\"{o}s-Rényi graph $G(100,0.1)$. The benchmark 
solution is computed by  Monte Carlo simulation. 
}
\label{fig:r_n}
\end{figure}

\subsection*{Acknowledgements}
This work was supported by the Air Force Office of Scientific 
Research (AFOSR) grant  FA9550-16-586-1-0092.

\appendix 
\section{Faber Polynomials}
\label{app:Faber}
In this appendix we briefly review the theory of 
Faber polynomials in the complex plane. Such polynomial 
were introduced by Faber in \cite{Faber} (see \cite{Suetin} for 
a through review), and they play an important role 
in theory of univalent functions and in the approximation 
of matrix functions \cite{moret2001computation,novati2003solving}. 
To introduce Faber polynomials, let 
\begin{align*}
M=\left\{\Omega\subset \mathbb{C}:\, 
\textrm{$\Omega\neq\{\emptyset\}$ is compact and 
$\mathbb{C}\setminus\Omega$ is simply connected}\right\},
\end{align*}
Given any set $\Omega\subset M$, by the Riemann mapping 
theorem there exits a conformal surjection
\begin{align}
\psi:\hat{\mathbb{C}}\setminus\{w:|w|\leq \gamma\}\rightarrow \hat{\mathbb{C}}\setminus\Omega,\quad \psi(\infty)=\infty, \quad \psi'(\infty)=1,
\label{mapping}
\end{align}
where $\hat{\mathbb{C}}$ is the Riemann sphere. 
The constant $\gamma$ is called {\em capacity} of $\Omega$. 
The $j-$th order Faber polynomial $\F_j(z)$ is defined to be the regular  
part of the Laurent expansion of $[\psi^{-1}(z)]^j$ at  infinity, i.e., 
\begin{align}\label{faber_coeff}
\F_j(z):=z^j+\sum_{k=0}^{j-1}\beta_{j,k}z^k,\quad j\geq 0.
\end{align}
Let $\Gamma$ be the boundary of $\Omega$. For $R\geq \gamma$ 
we define the equipotential curve $\Gamma(R)$ as 
\begin{align}
\Gamma(R):=\{z:\psi^{-1}(z)=R\}.
\label{equipotential}
\end{align}
We also denote as $\Omega(R)$ the closure of the interior of $\Gamma(R)$. 
Obviously, if  $R =\gamma$ then we have $\Omega(R)=\Omega$  
and $\Gamma(R)=R$.
Any analytic function $f(z)$ on $\Omega$ can be uniquely expanded in terms 
of Faber polynomials as 
\begin{equation}
f(z)=\lim_{m\rightarrow \infty} f_m(z)\qquad f_m(z)=\sum_{j=0}^m a_j(f)\F_j(z), 
\label{truncated_Faber}
\end{equation} 
where the coefficients $a_j(f)$ are given be the complex integral 
\begin{equation}
a_j(f) = \frac{1}{2\pi i}\int_{|w|=R}\frac{f\left(\psi(w)\right)}{w^{j+1}}dw.
\label{a_jf}
\end{equation}
It can be shown that $\F_j(z)$  satisfy the following recurrence relation 
\begin{align}
\label{recursive}
\F_0(z)&=1,\nonumber\\
\F_1(z)&=z-c_0,\nonumber\\
&\vdots \nonumber\\
\F_j(z)&=(z-c_0)\F_{j-1}(z)-
(c_1\F_{j-2}(z)+...+c_{j-1}\F_0(z))-(j-1)c_{j-1},\quad j\geq2,
\end{align}
where $c_0,c_1,...$ are the coefficients of the Laurent 
series expansion of the mapping $\psi$, i.e., 
\begin{align}\label{laurent}
\psi(w)=w+c_0+\frac{c_1}{w}+ \frac{c_2}{w^2}+\cdots , \qquad |w|> \gamma
\end{align}
From a computational viewpoint, it is convenient to limit the number 
of terms in the expansion \eqref{laurent}. In this way, we can 
simplify the recurrence relation \eqref{recursive}, the 
calculation of \eqref{a_jf} and therefore significantly 
speed up computations. In this paper we consider the map
\begin{align}\label{circle_ellipse}
\psi(w)=w+c_0+\frac{c_1}{w}, 
\end{align}
which transforms circles into ellipses. 
In this case, the coefficients \eqref{a_jf} can be obtained 
analytically by computing the integral 
\begin{align}
a_j(t)&=\frac{1}{2\pi i}\int_{|w|=R}\frac{\exp\{t(w+c_0+c_1/w)\}}{w^{j+1}} dw, \nonumber \\
&=\frac{1}{(\sqrt{-c_1})^{j}} e^{tc_0}J_j(2t\sqrt{-c_1}),
\label{a_j_analytic}
\end{align}
where $J_j(x)$ is the Bessel function of the first kind. 
The number of terms in the Laurent series 
expansion \eqref{laurent} 
should be selected so that the spectrum of 
the operator $\Q\L$ lies entirely within the 
equipotential curve \eqref{equipotential}. 
In the numerical examples we discuss in 
Section \ref{sec:application} such spectrum 
turns out to be relatively concentrated around 
the imaginary axis. Hence, the second-order 
truncation \eqref{circle_ellipse}, which defines an 
elliptical equipotential curve, guarantees 
fast convergence of the Faber series expansion of 
the orthogonal dynamics propagator. 

\section{Faber Expansion of the Orthogonal Dynamics Propagator}
Given any matrix representation of the operator $\Q\L$ (generator 
of the orthogonal dynamics) and a vector $v$, it is known 
that the sequence $f_{m}(\Q\L)v$ (see 
equation \eqref{truncated_Faber}) converges to $f(\Q\L)v$  
for any analytic  function $f(z)$ defined on $\Omega$, 
provided the spectrum of $\Q\L$ is 
in $\Omega$ (see \cite{moret2001computation}). 
Moreover, by the properties of Faber polynomials,
it is known that the sequence $f_{m}(\Q\L)$ approximates 
asymptotically $f(\Q\L)$ on $\Omega$, as well as
the sequence of best uniform approximation polynomials. 
In this sense, $f_{m}(\Q\L)$ is said to 
be {\em asymptotically optimal} \cite{Eiermann}.
In particular, if we consider the exponential function 
$f(z)=e^{tz}$ and the conformal map \eqref{circle_ellipse}, 
this yields the following $m$-th order Faber 
approximation of the orthogonal dynamics semigroup
\begin{equation}
e^{t\Q\L} \simeq \sum_{j=0}^m  
\frac{1}{(-c_1)^{j/2}} e^{tc_0}J_j(2t\sqrt{-c_1})
\F_j(\Q\L).
\end{equation}

\bibliographystyle{plain}
\bibliography{Faber}

\begin{thebibliography}{10}

\bibitem{alder1970decay}
B.~J. Alder and T.~E. Wainwright.
\newblock Decay of the velocity autocorrelation function.
\newblock {\em Phys. Rev. A}, 1(1):18, 1970.

\bibitem{baxter2016exactly}
R.~J. Baxter.
\newblock {\em Exactly solved models in statistical mechanics}.
\newblock Elsevier, 2016.

\bibitem{Berne}
B.~J. Berne.
\newblock Projection operator techniques in the theory of fluctuations.
\newblock In B.~J. Berne, editor, {\em Modern Theoretical Chemistry}, pages
  233--257. Plenum, New York, 1977.

\bibitem{biggs1993algebraic}
N.~Biggs.
\newblock Algebraic graph theory, 1993.

\bibitem{Bollobas}
B.~Bollob\'{a}s.
\newblock {\em Random Graphs}.
\newblock Cambridge University Press, 2001.

\bibitem{Chaturvedi}
S.~Chaturvedi and F.~Shibata.
\newblock Time-convolutionless projection operator formalism for elimination of
  fast variables. {A}pplications to {B}rownian motion.
\newblock {\em Z. Phys. B}, 35:297--308, 1979.

\bibitem{Chertock2008Modified}
A.~Chertock, D.~Gottlieb, and A.~Solomonoff.
\newblock Modified optimal prediction and its application to a particle-method
  problem.
\newblock {\em J. Sci. Comput.}, 37(2):189--201, 2008.

\bibitem{Chorin}
A.~J. Chorin, O.~H. Hald, and R.~Kupferman.
\newblock Optimal prediction and the {M}ori-{Z}wanzig representation of
  irreversible processes.
\newblock {\em Proc. Natl. Acad. Sci. USA}, 97(7):2968--2973, 2000.

\bibitem{chorin2000optimal}
A.~J. Chorin, R.~Kupferman, and D.~Levy.
\newblock Optimal prediction for {H}amiltonian partial differential equations.
\newblock {\em J. of Comput. Phys.}, 162(1):267--297, 2000.

\bibitem{Chorin1}
A.~J. Chorin and P.~Stinis.
\newblock Problem reduction, renormalization and memory.
\newblock {\em Comm. App. Math. and Comp. Sci.}, 1(1):1--27, 2006.

\bibitem{dautray2012mathematical}
R.~Dautray and J.-L. Lions.
\newblock {\em Mathematical Analysis and Numerical Methods for Science and
  Technology: Vol. 3 Spectral Theory and Applications}.
\newblock Springer Science \& Business Media, 2012.

\bibitem{Dominy2017}
J.~Dominy and D.~Venturi.
\newblock Duality and conditional expectation in the
  {N}akajima-{M}ori-{Z}wanzig formulation.
\newblock {\em J. Math. Phys.}, 58:082701, 2017.

\bibitem{Eiermann}
M.~Eiermann.
\newblock On semi-iterative methods generated by {F}aber polynomials.
\newblock 56:139--156, 1989.

\bibitem{engel1999one}
K.-J. Engel and R.~Nagel.
\newblock {\em One-parameter semigroups for linear evolution equations}, volume
  194.
\newblock Springer, 1999.

\bibitem{espanol1996dissipative}
P.~Espa{\~n}ol.
\newblock Dissipative particle dynamics for a harmonic chain: A
  first-principles derivation.
\newblock {\em Phys. Rev. E}, 53(2):1572, 1996.

\bibitem{Faber}
G.~Faber.
\newblock \"{U}ber polynomische entwickelunge.
\newblock {\em Mathematische Annalen}, 57:389--408, 1903.

\bibitem{florencio1985exact}
J.~Florencio, , and H.~M. Lee.
\newblock Exact time evolution of a classical harmonic-oscillator chain.
\newblock {\em Phys. Rev. A}, 31(5):3231, 1985.

\bibitem{ford1965statistical}
G.~W. Ford, M.~Kac, and P.~Mazur.
\newblock Statistical mechanics of assemblies of coupled oscillators.
\newblock {\em J. Math. Phys.}, 6(4):504--515, 1965.

\bibitem{Fox}
R.~F. Fox.
\newblock Functional-calculus approach to stochastic differential equations.
\newblock {\em Phys. Rev. A}, 33(1):467--476, 1986.

\bibitem{Gouasmi}
A.~Gouasmi, E.~J. Parish, and K.~Duraisamy.
\newblock A priori estimation of memory effects in reduced-order models of
  nonlinear systems using the {Mori–Zwanzig} formalism.
\newblock {\em Proc. R. Soc. A}, 473:1--24, 2017.

\bibitem{harp1970time}
G.~D. Harp and B.~J. Berne.
\newblock Time-correlation functions, memory functions, and molecular dynamics.
\newblock {\em Phys. Rev. A}, 2(3):975, 1970.

\bibitem{Hesthaven}
J.~S. Hesthaven, S.~Gottlieb, and D.~Gottlieb.
\newblock {\em Spectral methods for time-dependent problems}.
\newblock Cambridge Univ. Press, 2007.

\bibitem{VanKampenOppenheim}
N.~G.~Van Kampen and I.~Oppenheim.
\newblock Brownian motion as a problem of eliminating fast variables.
\newblock {\em Physica A: Stat. Mech. and Appl.}, 138(1-2):231--248, 1986.

\bibitem{Karasudani}
T.~Karasudani, K.~Nagano, H.~Okamoto, and H.~Mori.
\newblock A new continued-fraction representation of the time-correlation
  functions of transport fluxes.
\newblock {\em Progress of Theoretical Physics}, 61(3):850--863, 1982.

\bibitem{Kato}
T.~Kato.
\newblock {\em Perturbation theory for linear operators}.
\newblock Springer-Verlag, fourth edition, 1995.

\bibitem{kim2000dynamics}
J.~Kim and I.~Sawada.
\newblock Dynamics of a harmonic oscillator on the {B}ethe lattice.
\newblock {\em Phys. Rev. E}, 61(3):R2172, 2000.

\bibitem{Koopman1931}
B.~O. Koopman.
\newblock Hamiltonian systems and transformation in {H}ilbert spaces.
\newblock {\em Proc. Natl. Acad. Sci. USA}, 17(5):315--318, 1931.

\bibitem{lee1982solutions}
H.~M. Lee.
\newblock Solutions of the generalized langevin equation by a method of
  recurrence relations.
\newblock {\em Phys. Rev. B}, 26(5):2547, 1982.

\bibitem{lee1982eq}
H.~M. Lee.
\newblock Derivation of the generalized {L}angevin equation by a method of
  recurrence relations.
\newblock {\em J. Math. Phys.}, 24:2512--2514, 1983.

\bibitem{Lei}
Huan Lei, N.~A. Baker, and X.~Li.
\newblock Data-driven parameterization of the generalized {L}angevin equation.
\newblock {\em PNAS}, 113(50):14183–14188, 2016.

\bibitem{moler1978nineteen}
C.~Moler and C.~Van Loan.
\newblock Nineteen dubious ways to compute the exponential of a matrix.
\newblock {\em SIAM review}, 20(4):801--836, 1978.

\bibitem{moler2003nineteen}
C.~Moler and C.~Van Loan.
\newblock Nineteen dubious ways to compute the exponential of a matrix,
  twenty-five years later.
\newblock {\em SIAM review}, 45(1):3--49, 2003.

\bibitem{moret2001computation}
I.~Moret and P.~Novati.
\newblock The computation of functions of matrices by truncated faber series.
\newblock 22(5-6):697--719, 2001.

\bibitem{mori1965continued}
H.~Mori.
\newblock A continued-fraction representation of the time-correlation
  functions.
\newblock {\em Progress of Theoretical Physics}, 34(3):399--416, 1965.

\bibitem{Moss1}
F.~Moss and P.~V.~E. McClintock, editors.
\newblock {\em Noise in nonlinear dynamical systems. {V}olume 1: theory of
  continuous {F}okker-{P}lanck systems}.
\newblock Cambridge Univ. Press, 1995.

\bibitem{Newman}
M.~E.~J. Newman, S.~H. Strogatz, and D.~J. Watts.
\newblock Random graphs with arbitrary degree distributions and their
  applications.
\newblock {\em Phys. Rev. E}, 64:026118, 2001.

\bibitem{novati2003solving}
P~Novati.
\newblock Solving linear initial value problems by {F}aber polynomials.
\newblock {\em Numerical linear algebra with applications}, 10(3):247--270,
  2003.

\bibitem{Parish}
E.~J. Parish and K.~Duraisamy.
\newblock Non-{M}arkovian closure models for large eddy simulations using the
  {Mori-Zwanzig} formalism.
\newblock {\em Phys. Rev. Fluids}, 2:014604, 2017.

\bibitem{singwi1968theory}
K.~S. Singwi, , and A.~Sj{\"o}lander.
\newblock Theory of atomic motions in simple classical liquids.
\newblock {\em Phys. Rev.}, 167(1):152, 1968.

\bibitem{sjogren1980numerical}
L.~Sjogren.
\newblock Numerical results on the velocity correlation function in liquid
  argon and rubidium.
\newblock {\em Journal of Physics C: Solid State Physics}, 13(5):705, 1980.

\bibitem{sjogren1979kinetic}
L.~Sjogren and A.~Sjolander.
\newblock Kinetic theory of self-motion in monatomic liquids.
\newblock {\em Journal of Physics C: Solid State Physics}, 12(21):4369, 1979.

\bibitem{Snook}
I.~Snook.
\newblock {\em The {L}angevin and generalised {L}angevin approach to the
  dynamics of atomic, polymeric and colloidal systems}.
\newblock Elsevier, first edition, 2007.

\bibitem{Stinis}
P.~Stinis.
\newblock A comparative study of two stochastic model reduction methods.
\newblock {\em Physica D}, 213:197--213, 2006.

\bibitem{stinis2007higher}
P.~Stinis.
\newblock Higher order {M}ori-{Z}wanzig models for the {E}uler equations.
\newblock {\em Multiscale Modeling \& Simulation}, 6(3):741--760, 2007.

\bibitem{stinis2015renormalized}
P.~Stinis.
\newblock Renormalized {M}ori--{Z}wanzig-reduced models for systems without
  scale separation.
\newblock {\em Proc. R. Soc. A}, 471(2176):20140446, 2015.

\bibitem{Suetin}
P.~K. Suetin and E.~V. Pankratiev.
\newblock {\em Series of {F}aber polynomials}.
\newblock CRC Press, 1998.

\bibitem{Umegaki}
U.~Umegaki.
\newblock Conditional expectation in an operator algebra {I}.
\newblock {\em Tohoku Math. J.}, 6(2):177--181, 1954.

\bibitem{VenturiBook}
D.~Venturi, H.~Cho, and G.~E. Karniadakis.
\newblock The {M}ori-{Z}wanzig approach to uncertainty quantification.
\newblock In R.~Ghanem, D.~Higdon, and H.~Owhadi, editors, {\em Handbook of
  uncertainty quantification}. Springer, 2016.

\bibitem{venturi2014convolutionless}
D.~Venturi and G.~E. Karniadakis.
\newblock Convolutionless {N}akajima-{Z}wanzig equations for stochastic
  analysis in nonlinear dynamical systems.
\newblock {\em Proc. R. Soc. A}, 470(2166):1--20, 2014.

\bibitem{watts1977perturbation}
R.~O. Watts and I.~K. Snook.
\newblock Perturbation theories in non-equilibrium statistical mechanics {II}.
  {M}ethods based on memory function formalism.
\newblock {\em Molecular Physics}, 33(2):443--452, 1977.

\bibitem{Yuan1}
Y.~Zhu, J.~M. Dominy, and D.~Venturi.
\newblock Rigorous error estimates for the memory integral in the
  {Mori-Zwanzig} formulation.
\newblock {\em arXiv}, (1708.02235):1--32, 2017.

\bibitem{zwanzig2001nonequilibrium}
R.~Zwanzig.
\newblock {\em Nonequilibrium statistical mechanics}.
\newblock Oxford University Press, 2001.

\end{thebibliography}

\end{document}